\providecommand{\U}[1]{\protect\rule{.1in}{.1in}}
\newtheorem{theorem}{Theorem}
\newtheorem{corollary}[theorem]{Corollary}
\newtheorem{definition}[theorem]{Definition}
\newtheorem{proposition}[theorem]{Proposition}
\newtheorem{remark}[theorem]{Remark}
\newenvironment{proof}[1][Proof]{\noindent\textbf{#1.} }{\ \rule{0.5em}{0.5em}}
\begin{document}

\title{Differential Structure of the Hyperbolic Clifford Algebra }
\author{Eduardo A. Notte-Cuello$^{(1)}$ and Waldyr A. Rodrigues Jr.$^{(2)}$\\$^{(1)}$~{\small Departamento de} {\small Matem\'{a}ticas,}\\{\small Universidad de La Serena}\\{\small Av. Cisternas 1200, La Serena-Chile}\\$^{(2)}~${\footnotesize Institute of Mathematics, Statistics and Scientific
Computation}\\{\footnotesize IMECC-UNICAMP CP 6065}\\{\footnotesize 13083-859 Campinas, SP, Brazil}\\{\footnotesize \texttt{walrod@ime.unicamp.br; {\small enotte@userena.cl;
}roldao.rocha@ufabc.edu.br }}}
\maketitle

\begin{abstract}
This paper presents a thoughful review of: (a) the Clifford algebra
$\mathcal{C}l\left(  H_{V}\right)  $ of \emph{multivecfors }which is naturally
associated with a hyperbolic space $H_{V}$; (b) the study of the properties of
the duality product of \emph{multivectors} and \emph{multiforms;} (c) the
theory of $k$ multivector and $l$ multiform variables multivector extensors
over $V$ and (d) the use of the above mentioned structures to present a theory
of the parallelism structure on an arbitrary smooth manifold introducing the
concepts of covariant derivarives, deformed covariant derivatives and relative
covariant derivatives of multivector, multiform fields and extensors fields.

\end{abstract}

\textbf{Keywords}: Hyperbolic Clifford algebra; multivectors, multiforms and
extensor fields; parallelism structure

\section{Introduction}

In this paper we first present a review of the Hyperbolic Clifford algebra of
a real $n$-dimensinal space $V$. We emphazise that the hyperbolic Clifford
algebra is a very important structure for modern theories of Physics, in
particular superfields can be seen as ideal sections of an hyperbolic bundel
over a spacetime $M$, see, e.g.,
\cite{quinrod,qr07,Crumeyrolle,Geroch1,Geroch2,Milnor}. Besides given a
thoughtful exposition of the Clifford algebra $\mathcal{C}l\left(
H_{V}\right)  $ of \emph{multivecfors} which is naturally associated with a
hyperbolic space $H_{V}$, we review also the properties of the \emph{duality
product} of multivectors and multiforms and the theory of $k$ multivector and
$l$ multiform variables multivector extensors over $V.$ The algebraic theory
is then used to review with enough details the theory of parallelism
structures on an arbitrary smooth manifold $M$, and the concepts of covariant
derivatives, deformed derivatives and relative covariant derivatives of
multivector, multiform field and extensors fields (all of crucial importance
in several physical theories, see, e.g., \cite{fr2010}).

\section{Hyperbolic Spaces}

\subsection{Definition and Basic Properties}

Let $V,V^{\ast}$ be a pair of dual $n$-dimensional vector spaces over the real
field $\mathbb{R}$ and $V\oplus V^{\ast}$ the exterior direct sum of the
vector spaces $V\,\ $and $V^{\ast}$. We call hyperbolic structure over $V$ to
the pair
\[
H_{V}=\left(  V\oplus V^{\ast},\left\langle ,\right\rangle \right)
\]
where $\left\langle ,\right\rangle $ is the non-degenerate symmetric bilinear
form of index $n$ defined by%
\begin{equation}
\left\langle \mathbf{x},\mathbf{y}\right\rangle =x^{\ast}\left(  y_{\ast
}\right)  +y^{\ast}\left(  x_{\ast}\right)  \label{a1}%
\end{equation}
for all $\mathbf{x}=x_{\ast}\oplus x^{\ast},\mathbf{y}=y_{\ast}\oplus y^{\ast
}\in H_{V}.$

\begin{definition}
The elements of $H_{V}$ will be referred as \textit{vecfors. }A vecfor
$\mathbf{x}=x_{\ast}\oplus x^{\ast}\in H_{V}$ will be said to be positive if
$x^{\ast}\left(  x_{\ast}\right)  >0,$ null if $x^{\ast}\left(  x_{\ast
}\right)  =0$ and negative if $x^{\ast}\left(  x_{\ast}\right)  <0.$ If
$x^{\ast}\left(  x_{\ast}\right)  =1$ we say that $\mathbf{x}=x_{\ast}\oplus
x^{\ast}$ is a unit vecfor.
\end{definition}

Let $H_{V}^{\ast}$ denote the dual space of $H_{V}$, i.e., $H_{V}^{\ast
}=\left(  \left(  V\oplus V^{\ast}\right)  ^{\ast},\left\langle ,\right\rangle
^{-1}\right)  ,$ where $\left\langle ,\right\rangle ^{-1}$ is the reciprocal
of the bilinear form $\left\langle ,\right\rangle .$ We have the following
isomorphisms%
\[
H_{V}^{\ast}\simeq H_{V^{\ast}}\simeq H_{V}%
\]
where $H_{V^{\ast}}$ is the hyperbolic space of $V^{\ast}.$ Therefore, $H_{V}$
is an "auto-dual" space.

The spaces $V$ and $V^{\ast}$ are naturally identified with their images in
$H_{V}$ under the inclusions $i_{\ast}:V\rightarrow V\oplus V^{\ast},x_{\ast
}\mapsto i_{\ast}x_{\ast}=x_{\ast}\oplus0\equiv x_{\ast}$ and $i^{\ast
}:V^{\ast}\rightarrow V\oplus V^{\ast},x^{\ast}\mapsto i^{\ast}x^{\ast
}=0\oplus x^{\ast}\equiv x^{\ast},$ then from (\ref{a1}) we have%
\[
\left\langle i_{\ast}x_{\ast},i_{\ast}y_{\ast}\right\rangle =\left\langle
x_{\ast},y_{\ast}\right\rangle =0,\quad\left\langle i^{\ast}x^{\ast},i^{\ast
}y^{\ast}\right\rangle =\left\langle x^{\ast},y^{\ast}\right\rangle
=0,\quad\left\langle i^{\ast}x^{\ast},i_{\ast}y_{\ast}\right\rangle =x^{\ast
}\left(  y_{\ast}\right)
\]
for all $x_{\ast},y_{\ast}\in V\subset V\oplus V^{\ast}$ and $x^{\ast}%
,y^{\ast}\in V^{\ast}\subset V\oplus V^{\ast}.$ This means that $V$ and
$V^{\ast}$ are maximal totally isotropic subspace of $H_{V}$ and any pair of
dual basis satisfying
\[
\left\langle e_{i},e_{j}\right\rangle =0,\quad\left\langle \theta^{i}%
,\theta^{j}\right\rangle =0\quad\text{and}\quad\left\langle \theta^{i}%
,e_{j}\right\rangle =\delta_{j}^{i}.
\]

\begin{remark}
More generally, to each subspace $S\subset V$ we can associated a maximal
totally isotropic subspace $I\left(  S\right)  \subset H_{V}$, see \cite{qr07}.
\end{remark}

To each Witt basis $\left\{  e_{1},\cdots,e_{n},\theta^{1},\cdots,\theta
^{n}\right\}  $ of $H_{V},$ we can associated an orthonormal basis $\left\{
\mathbf{\sigma}_{1},\cdots,\mathbf{\sigma}_{2n}\right\}  $ of $H_{V}$ by
letting%
\begin{equation}
\mathbf{\sigma}_{k}=\frac{1}{\sqrt{2}}\left(  e_{k}\oplus\theta^{k}\right)
\quad\text{and\quad}\mathbf{\sigma}_{n+k}=\frac{1}{\sqrt{2}}\left(
\overline{e}_{k}\oplus\theta^{k}\right)  \label{a2}%
\end{equation}
where $\overline{e}_{k}=-e_{k},k=1,...,n.$ We have for $k,l=1,...,n,$%
\[
\left\langle \mathbf{\sigma}_{k},\mathbf{\sigma}_{l}\right\rangle =\delta
_{kl},\quad\left\langle \mathbf{\sigma}_{n+k},\mathbf{\sigma}_{n+l}%
\right\rangle =-\delta_{kl}\quad\text{and}\quad\left\langle \mathbf{\sigma
}_{k},\mathbf{\sigma}_{n+l}\right\rangle =0.
\]
If $\mathbf{x}=x_{\ast}\oplus x^{\ast},$ with $x_{\ast}=x_{\ast}^{k}e_{k}$ and
$x^{\ast}=x_{k}^{\ast}\theta^{k},$ then the components $\left(  x^{k}%
,x^{n+k}\right)  $ of $\mathbf{x}$ with respect to $\mathbf{\sigma}_{k}$ are
\begin{equation}
x^{k}=\frac{1}{\sqrt{2}}\left(  x_{k}^{\ast}+x_{\ast}^{k}\right)
\quad\text{and}\quad x^{n+k}=\frac{1}{\sqrt{2}}\left(  x_{k}^{\ast}-x_{\ast
}^{k}\right)  . \label{a3}%
\end{equation}
Then, base vectors $\mathbf{\sigma}_{n+l}$ are obtained from the
$\mathbf{\sigma}_{k}$ through the involution $\mathbf{x}=x_{\ast}\oplus
x^{\ast}\mapsto\overline{\mathbf{x}}=\left(  -x_{\ast}\right)  \oplus x^{\ast
},$ where $\overline{\mathbf{x}}$ is the (hyperbolic) conjugate of the vector
$\mathbf{x.}$

Due to the auto-duality of $H_{V}$ stated by the isomorphisms $H_{V}^{\ast
}\simeq H_{V^{\ast}}\simeq H_{V}$, we can use the notation $\left\{
\mathbf{\sigma}^{1},\cdots,\mathbf{\sigma}^{2n}\right\}  $ to indicate the
dual basis $\left\{  \mathbf{\sigma}_{1},\cdots,\mathbf{\sigma}_{2n}\right\}
$ as well the reciprocal basis of this same basis. As elements of $H_{V}%
^{\ast}$ the expressions of the $\mathbf{\sigma}^{k\prime}$ are
\[
\mathbf{\sigma}^{k}=\frac{1}{\sqrt{2}}\left(  \theta^{k}\oplus e_{k}\right)
\quad\text{and\quad}\mathbf{\sigma}_{k}=\frac{1}{\sqrt{2}}\left(
\overline{\theta}^{k}\oplus e_{k}\right)  .
\]
Another remarkable result on the theory of hyperbolic space is the following
(see \cite{Knus}).

\begin{proposition}
Given an arbitrary non-degenerate symmetric bilinear form $b$ on $V,$ there is
an isomorphism $H_{V}\simeq\left(  V,b\right)  \oplus\left(  V,-b\right)
=H_{bV}$.
\end{proposition}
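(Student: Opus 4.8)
The plan is to build the isometry explicitly out of the musical isomorphism determined by $b$. First I would use non-degeneracy of $b$ to obtain the linear isomorphism $\flat_b\colon V\to V^{\ast}$, $v\mapsto b(v,\cdot)$, with inverse $\sharp_b\colon V^{\ast}\to V$. Composing $\mathrm{id}_V\oplus\sharp_b\colon V\oplus V^{\ast}\to V\oplus V$ with the coordinate change $(a,c)\mapsto\bigl(\tfrac{1}{\sqrt2}(a+c),\tfrac{1}{\sqrt2}(a-c)\bigr)$ on $V\oplus V$ yields a candidate linear isomorphism $\Psi\colon H_{V}=V\oplus V^{\ast}\to V\oplus V$, namely
\[
\Psi(x_{\ast}\oplus x^{\ast})=\Bigl(\tfrac{1}{\sqrt2}\bigl(x_{\ast}+\sharp_b x^{\ast}\bigr),\ \tfrac{1}{\sqrt2}\bigl(x_{\ast}-\sharp_b x^{\ast}\bigr)\Bigr).
\]
Its inverse is immediate, so $\Psi$ is a vector-space isomorphism; the real content is that it intertwines $\left\langle ,\right\rangle$ with the orthogonal sum form $b\oplus(-b)$ on $V\oplus V$.

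Second, I would check the isometry property by direct computation. Writing $\mathbf{x}=x_{\ast}\oplus x^{\ast}$, $\mathbf{y}=y_{\ast}\oplus y^{\ast}$ and using $x^{\ast}(y_{\ast})=b(\sharp_b x^{\ast},y_{\ast})$ together with the symmetry of $b$, formula (\ref{a1}) becomes $\left\langle \mathbf{x},\mathbf{y}\right\rangle=b(\sharp_b x^{\ast},y_{\ast})+b(\sharp_b y^{\ast},x_{\ast})$. Setting $x_{\ast}=\tfrac{1}{\sqrt2}(p+q)$, $\sharp_b x^{\ast}=\tfrac{1}{\sqrt2}(p-q)$, and likewise $y_{\ast}=\tfrac{1}{\sqrt2}(r+s)$, $\sharp_b y^{\ast}=\tfrac{1}{\sqrt2}(r-s)$, then expanding by bilinearity and cancelling the four cross terms via symmetry of $b$, the expression collapses to $b(p,r)-b(q,s)$, which is precisely the value of $b\oplus(-b)$ on $\Psi\mathbf{x}$ and $\Psi\mathbf{y}$. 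Hence $\Psi\colon H_{V}\to (V,b)\oplus(V,-b)=H_{bV}$ is an isometric isomorphism, which is the assertion.

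Finally, I would note that the result is also forced abstractly: $\left\langle ,\right\rangle$ is non-degenerate of index $n$ on the $2n$-dimensional space $V\oplus V^{\ast}$, while if $b$ has signature $(r,n-r)$ then $-b$ has signature $(n-r,r)$, so $b\oplus(-b)$ has signature $(n,n)$; Sylvester's law of inertia then identifies the two forms. I would still give the explicit $\Psi$ above, since it is canonical once $b$ is fixed and is the version actually needed when transporting the Clifford structure. The only mildly delicate point is bookkeeping the $\tfrac{1}{\sqrt2}$ normalisations so that the factor of $2$ generated by the two cross terms is absorbed rather than left over; with that care the verification is routine.
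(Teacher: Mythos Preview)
Your argument is correct. The paper does not actually prove this proposition; it merely states it and refers the reader to Knus \cite{Knus}. Your explicit isometry is, however, precisely the map the paper invokes later when proving Proposition~\ref{prop2}: there the Clifford map $x_{\ast}\oplus x^{\ast}\mapsto x_{+}\widehat{\otimes}1+1\widehat{\otimes}x_{-}$ is built from $x_{\pm}=\tfrac{1}{\sqrt{2}}(b^{\ast}x^{\ast}\pm x_{\ast})$, where $b^{\ast}$ is your $\sharp_{b}$. So your $\Psi$ is (up to the harmless sign in the second component) exactly the underlying linear isometry that the paper takes for granted. Your verification that the cross terms cancel via the symmetry of $b$ and that the $\tfrac{1}{\sqrt{2}}$ normalisation absorbs the resulting factor of $2$ is clean and fills in what the paper omits. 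The closing remark via Sylvester's law is a nice sanity check, though of course it only gives existence of \emph{some} isometry rather than the canonical one you construct.
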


\subsection{Exterior Algebra of a Hyperbolic Space}

The Grassman algebra $\bigwedge H_{V}$ of the hyperbolic structure $H_{V}$ is
the pair
\[
\bigwedge H_{V}=\left(  \bigwedge\left(  V\oplus V^{\ast}\right)
,\left\langle ,\right\rangle \right)
\]
where%
\[
\bigwedge\left(  V\oplus V^{\ast}\right)  =\sum\limits_{r=0}^{2n}%
\bigwedge\nolimits^{r}\left(  V\oplus V^{\ast}\right)
\]
is the exterior algebra of $V\oplus V^{\ast}$ and $\left\langle ,\right\rangle
$ is the canonical bilinear form on $\bigwedge\left(  V\oplus V^{\ast}\right)
$ induced by the bilinear form $\left\langle ,\right\rangle $ of $H_{V}$,
e.g., for simple elements $\mathbf{x}_{1}\wedge...\wedge\mathbf{x}%
_{r},\mathbf{y}_{1}\wedge...\wedge\mathbf{y}_{r}\in\bigwedge^{r}%
H_{V},\left\langle ,\right\rangle $ is given by%
\begin{equation}
\left\langle \mathbf{x}_{1}\wedge...\wedge\mathbf{x}_{r},\mathbf{y}_{1}%
\wedge...\wedge\mathbf{y}_{r}\right\rangle =\det\left(
\begin{array}
[c]{ccc}%
\left\langle \mathbf{x}_{1},\mathbf{y}_{1}\right\rangle  & \cdots &
\left\langle \mathbf{x}_{1},\mathbf{y}_{r}\right\rangle \\
. & \cdots & .\\
. & \cdots & .\\
\left\langle \mathbf{x}_{r},\mathbf{y}_{1}\right\rangle  & \cdots &
\left\langle \mathbf{x}_{r},\mathbf{y}_{r}\right\rangle
\end{array}
\right)  \label{a4}%
\end{equation}
and it is extended by linearity and orthogonality to all of the algebra
$\bigwedge H_{V}.$ Of course, due to the isomorphisms $H_{V}^{\ast}\simeq
H_{V^{\ast}}\simeq H_{V},$ we have%
\[
\bigwedge H_{V}^{\ast}\simeq\bigwedge H_{V^{\ast}}\simeq\bigwedge H_{V},
\]
and it follows that%
\[
\bigwedge H_{V}\simeq\left(  \bigwedge H_{V}\right)  ^{\ast},
\]
i.e., $\bigwedge H_{V}$ is itself an auto-dual space. The elements of
$\bigwedge H_{V}$ will be called \textit{multivecfors. }

Grade involution, reversion and conjugation in the algebra $\bigwedge H_{V}$
are defined as usual, see \cite{rodoliv2006}. For a homogeneous multivecfors
$\mathbf{u\in}\bigwedge^{r}H_{V},$%
\[
\widehat{\mathbf{u}}=\left(  -1\right)  ^{r}\mathbf{u,\quad}\widetilde
{\mathbf{u}}=\left(  -1\right)  ^{\frac{1}{2}r\left(  r-1\right)
}\mathbf{u,\quad}\overline{\mathbf{u}}=\left(  -1\right)  ^{\frac{1}%
{2}r\left(  r+1\right)  }\mathbf{u}%
\]
and we call that every element $\mathbf{u\in}\bigwedge H_{V}$ is uniquely
decomposed into a sum the even and odd part of $\mathbf{u}$, i.e.,
$\mathbf{u=u}_{+}+\mathbf{u}_{-}$, where
\[
\mathbf{u}_{+}=\frac{1}{2}\left(  \mathbf{u}+\widehat{\mathbf{u}}\right)
\quad\text{and}\quad\mathbf{u}_{-}=\frac{1}{2}\left(  \mathbf{u}%
-\widehat{\mathbf{u}}\right)  .
\]
The spaces $\bigwedge V$ and $\bigwedge V^{\ast}$ are identified with their
images in $\bigwedge H_{V}$ under the homomorphisms $i^{w}:\bigwedge
V\rightarrow\bigwedge H_{V}$ defined by%
\[
i^{w}\left(  x_{1\ast}\wedge...\wedge x_{r\ast}\right)  =\left(  x_{1\ast
}\oplus0\right)  \wedge...\wedge\left(  x_{r\ast}\oplus0\right)  \equiv
x_{1\ast}\wedge...\wedge x_{r\ast}%
\]
and $i_{w}:\bigwedge V^{\ast}\rightarrow\bigwedge H_{V}$ defined by
\[
i_{w}\left(  x_{1}^{\ast}\wedge...\wedge x_{r}^{\ast}\right)  =\left(  0\oplus
x_{1}^{\ast}\right)  \wedge...\wedge\left(  0\oplus x_{r}^{\ast}\right)
\equiv x_{1}^{\ast}\wedge...\wedge x_{r}^{\ast}.
\]
Then, for $u_{\ast},v_{\ast}\in\bigwedge V\subset\bigwedge H_{V}$ and
$u^{\ast},v^{\ast}\in\bigwedge V^{\ast}\subset\bigwedge H_{V},$ we have%
\begin{equation}
\left\langle u_{\ast},v_{\ast}\right\rangle =0,\quad\left\langle u^{\ast
},v^{\ast}\right\rangle =0\quad\text{and}\quad\left\langle u^{\ast},v_{\ast
}\right\rangle =u^{\ast}\left(  v_{\ast}\right)  . \label{a5}%
\end{equation}
Thus, $\bigwedge V$ and $\bigwedge V^{\ast}$ are totally isotropic subspace of
$\bigwedge H_{V}.$ But they are no longer maximal, $\left\langle
,\right\rangle $ being neutral, the dimension of a maximal totally isotropic
subspace is $2^{2n-1},$ whereas $\dim\bigwedge V=\dim\bigwedge V^{\ast}%
=2^{n}.$ For elements $\mathbf{u}=u_{\ast}\wedge u^{\ast},\mathbf{v}=v_{\ast
}\wedge v^{\ast}\in\bigwedge H_{V}$ with $u_{\ast}\in\bigwedge^{r}V,u^{\ast
}\in\bigwedge^{s}V^{\ast},v_{\ast}\in\bigwedge^{s}V$ and $v^{\ast}\in
\bigwedge^{r}V^{\ast}$ it holds%
\[
\left\langle \mathbf{u},\mathbf{v}\right\rangle =\left(  -1\right)
^{rs}u^{\ast}\left(  v_{\ast}\right)  v^{\ast}\left(  u_{\ast}\right)  .
\]

\begin{proposition}
There is the following natural isomorphism%
\[%
{\textstyle\bigwedge}
H_{V}\simeq%
{\textstyle\bigwedge}
V\widehat{\otimes}%
{\textstyle\bigwedge}
V^{\ast}%
\]
where $\widehat{\otimes}$ denotes the graded tensor product. Moreover, being
$b$ a non-degenerate bilinear form on $V$, it holds also%
\[%
{\textstyle\bigwedge}
H_{V}\simeq%
{\textstyle\bigwedge}
H_{bV}\widehat{\otimes}%
{\textstyle\bigwedge}
H_{-bV}.
\]
For a proof see, e.g., Greub \cite{Greub}. The first of the above isomorphisms
is given by the mapping $\bigwedge V\widehat{\otimes}\bigwedge V^{\ast
}\rightarrow\bigwedge H_{V}$ by%
\[
u_{\ast}\widehat{\otimes}u^{\ast}\mapsto i^{w}u_{\ast}\wedge i_{w}u^{\ast},
\]
for all $u_{\ast}\in\bigwedge V$ and $u^{\ast}\in\bigwedge V^{\ast}$. Under
this mapping, we can make the identification%
\[%
{\textstyle\bigwedge\nolimits^{r}}
H_{V}=%
{\textstyle\sum\limits_{p+q=r}}
{\textstyle\bigwedge\nolimits^{p}}
V\widehat{\otimes}%
{\textstyle\bigwedge\nolimits^{q}}
V^{\ast}.
\]

\end{proposition}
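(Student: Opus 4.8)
The plan is to derive the first isomorphism from the general decomposition of the exterior algebra of a direct sum, applied to the splitting $H_{V}=V\oplus V^{\ast}$ that underlies the very definition of $\bigwedge H_{V}$. Concretely, I would use the universal property of $\bigwedge H_{V}=\bigwedge(V\oplus V^{\ast})$: the linear map $\phi\colon H_{V}\to\bigwedge V\,\widehat{\otimes}\,\bigwedge V^{\ast}$ given by $\phi(x_{\ast}\oplus x^{\ast})=(x_{\ast}\widehat{\otimes}1)+(1\widehat{\otimes}x^{\ast})$ satisfies $\phi(\mathbf{x})^{2}=0$, because in the graded tensor product $x_{\ast}\widehat{\otimes}1$ and $1\widehat{\otimes}x^{\ast}$ are odd and hence graded-anticommute while each squares to zero. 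Therefore $\phi$ extends to an algebra homomorphism $\Phi\colon\bigwedge H_{V}\to\bigwedge V\,\widehat{\otimes}\,\bigwedge V^{\ast}$ with $\Phi(i^{w}u_{\ast}\wedge i_{w}u^{\ast})=u_{\ast}\widehat{\otimes}u^{\ast}$, and this is inverse, on generators, to the map $\Psi\colon u_{\ast}\widehat{\otimes}u^{\ast}\mapsto i^{w}u_{\ast}\wedge i_{w}u^{\ast}$ exhibited in the statement; hence both are algebra isomorphisms. Since no basis is chosen anywhere in this construction, the isomorphism is natural in $V$.

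Next I would make explicit where the graded sign enters and verify bijectivity independently. Expanding $(i^{w}u_{\ast}\wedge i_{w}u^{\ast})\wedge(i^{w}v_{\ast}\wedge i_{w}v^{\ast})$ and moving the degree-$s$ factor $i_{w}u^{\ast}$ ($s=\deg u^{\ast}$) to the right past the degree-$p$ factor $i^{w}v_{\ast}$ ($p=\deg v_{\ast}$) produces exactly the Koszul sign $(-1)^{sp}$ that defines multiplication in $\widehat{\otimes}$ — of the same nature as the factor $(-1)^{rs}$ appearing in the bracket formula for $\langle\mathbf{u},\mathbf{v}\rangle$ recalled just before the statement — which is precisely what forces the tensor product to be the \emph{graded} one rather than the ordinary one. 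For bijectivity one may either appeal to the inverse $\Phi$ just built, or observe that $\Psi$ carries the ordered monomial basis $\{(e_{i_{1}}\wedge\cdots\wedge e_{i_{p}})\widehat{\otimes}(\theta^{j_{1}}\wedge\cdots\wedge\theta^{j_{q}})\}$ to the monomial basis $\{\pm\,e_{i_{1}}\wedge\cdots\wedge e_{i_{p}}\wedge\theta^{j_{1}}\wedge\cdots\wedge\theta^{j_{q}}\}$ of $\bigwedge H_{V}$, which in particular gives the dimension check $2^{2n}=2^{n}\cdot 2^{n}$. Reading off the total degree in this basis yields at once the graded refinement $\bigwedge^{r}H_{V}=\sum_{p+q=r}\bigwedge^{p}V\,\widehat{\otimes}\,\bigwedge^{q}V^{\ast}$.

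For the second isomorphism I would feed into the same construction the isometry $H_{V}\simeq(V,b)\oplus(V,-b)$ supplied by the preceding Proposition: applying the direct-sum decomposition to this orthogonal splitting gives $\bigwedge H_{V}\simeq\bigwedge(V,b)\,\widehat{\otimes}\,\bigwedge(V,-b)$ as graded algebras, and because the splitting is orthogonal the canonical bilinear form on the right is the graded tensor product of the two induced forms, so the isomorphism is in fact an isometry — which is what the notation $\bigwedge H_{bV}\,\widehat{\otimes}\,\bigwedge H_{-bV}$ records. The only genuinely delicate point throughout is the sign bookkeeping: one must fix once and for all whether the twist in the $\widehat{\otimes}$-multiplication is read off the $V$-degree or the $V^{\ast}$-degree, and then check compatibility with the $(-1)^{rs}$ in the bracket and with the grade involution, reversion and conjugation conventions fixed earlier; everything else reduces to the universal property of $\bigwedge$ together with a dimension count, and for this reason the routine verification may simply be cited, e.g., from Greub \cite{Greub}.
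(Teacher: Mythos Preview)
Your argument is correct and is the standard one: use the universal property of $\bigwedge(V\oplus V^{\ast})$ to extend the map $\mathbf{x}\mapsto x_{\ast}\widehat{\otimes}1+1\widehat{\otimes}x^{\ast}$, check the Koszul sign against the $\widehat{\otimes}$-twist, and confirm bijectivity by a basis or dimension count. There is nothing in the paper to compare it against, because the paper does not prove this proposition at all: the sentence ``For a proof see, e.g., Greub'' is embedded in the statement itself, and the only content supplied beyond the citation is the explicit map $u_{\ast}\widehat{\otimes}u^{\ast}\mapsto i^{w}u_{\ast}\wedge i_{w}u^{\ast}$ and the graded decomposition of $\bigwedge^{r}H_{V}$. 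You have simply filled in what the paper outsources.

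One remark on the second isomorphism: the paper's notation is mildly inconsistent. The preceding Proposition sets $H_{bV}=(V,b)\oplus(V,-b)$, which taken literally would make $\bigwedge H_{bV}\,\widehat{\otimes}\,\bigwedge H_{-bV}$ have dimension $2^{4n}$ rather than $2^{2n}$. Your reading --- that the intended right-hand side is $\bigwedge(V,b)\,\widehat{\otimes}\,\bigwedge(V,-b)$, obtained by applying the direct-sum decomposition to the orthogonal splitting $H_{V}\simeq(V,b)\oplus(V,-b)$ --- is the only sensible one, and your argument for it (including the observation that orthogonality makes the isomorphism an isometry) is correct.
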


\subsection{Orientation}

Besides a canonical metric structure, a hyperbolic space is also provided with
a canonical sense of orientation, induced by the $2n$-vector $\mathbf{\sigma
\in}\bigwedge\nolimits^{2n}H_{V}$ given by%
\[
\mathbf{\sigma=\sigma}_{1}\wedge...\wedge\mathbf{\sigma}_{2n}%
\]
where $\left\{  \mathbf{\sigma}_{1},\cdots,\mathbf{\sigma}_{2n}\right\}  $ is
the orthonormal basis of $H_{V}$ naturally associated with the dual basis
$\left\{  e_{1},\cdots,e_{n}\right\}  $ of $V$ and $\left\{  \theta^{1}%
,\cdots,\theta^{n}\right\}  $ of $V^{\ast}.$ Note that from Eq.(\ref{a4}) it
follows immediately that%
\[
\left\langle \mathbf{\sigma,\sigma}\right\rangle =\left(  -1\right)  ^{n}.
\]
\ 

The reason for saying that $\mathbf{\sigma}$ define a canonical sense of
orientation is that it is independent on the choice of the bases for $V.$\ To
see this it is enough to verify that%
\[
\mathbf{\sigma}=e_{\ast}\wedge\theta^{\ast}%
\]
where
\[
e_{\ast}=e_{1}\wedge\cdots\wedge e_{n}\quad\text{and\quad}\theta^{\ast}%
=\theta^{1}\wedge\cdots\wedge\theta^{n}.
\]
Thus, under a change of basis in $V,$ $e_{\ast}$ transforms as $\lambda
e_{\ast},$ with $\lambda\neq0,$ whereas $\theta^{\ast}$ transforms as
$\lambda^{-1}\theta^{\ast},$ so that $\mathbf{\sigma}$ remains unchanged.

\subsection{Contractions}

A left contraction $\lrcorner:%
{\textstyle\bigwedge}
H_{V}\times%
{\textstyle\bigwedge}
H_{V}\rightarrow%
{\textstyle\bigwedge}
H_{V}$ and a right contraction $\llcorner:%
{\textstyle\bigwedge}
H_{V}\times%
{\textstyle\bigwedge}
H_{V}\rightarrow%
{\textstyle\bigwedge}
H_{V}$ are introduced in the algebra $%
{\textstyle\bigwedge}
H_{V}$ in the usual way (see, e.g., \cite{rodoliv2006}), i.e., by%
\[
\left\langle \mathbf{u}\lrcorner\mathbf{v},\mathbf{w}\right\rangle
=\left\langle \mathbf{v},\widetilde{\mathbf{u}}\wedge\mathbf{w}\right\rangle
\quad and\quad\left\langle \mathbf{v}\llcorner\mathbf{u},\mathbf{w}%
\right\rangle =\left\langle \mathbf{v},\mathbf{w}\wedge\widetilde{\mathbf{u}%
}\right\rangle ,
\]
for all $\mathbf{u,v,w\in}%
{\textstyle\bigwedge}
H_{V}.$ These operations have the general properties, for all $\mathbf{x,y\in
}H_{V}$, for all $\mathbf{u,v,w\in}%
{\textstyle\bigwedge}
H_{V}$ and $\mathbf{\sigma\in}\bigwedge\nolimits^{2n}H_{V},$%
\[%
\begin{array}
[c]{l}%
\mathbf{x}\lrcorner\mathbf{y}=\mathbf{x}\llcorner\mathbf{y=}\left\langle
\mathbf{x},\mathbf{y}\right\rangle \quad\text{;}\quad1\lrcorner\mathbf{u=u}%
\llcorner=\mathbf{u},\quad\mathbf{x}\lrcorner=1\llcorner\mathbf{x}=0\bigskip\\
\left(  \mathbf{u}\lrcorner\mathbf{v}\right)  ^{\wedge}=\widehat{\mathbf{u}%
}\lrcorner\widehat{\mathbf{v}},\quad\left(  \mathbf{u}\llcorner\mathbf{v}%
\right)  ^{\wedge}=\widehat{\mathbf{u}}\llcorner\widehat{\mathbf{v}}%
\quad\text{;}\quad\left(  \mathbf{u}\lrcorner\mathbf{v}\right)  ^{\sim
}=\widetilde{\mathbf{u}}\lrcorner\widetilde{\mathbf{v}},\quad\left(
\mathbf{u}\llcorner\mathbf{v}\right)  ^{\sim}=\widetilde{\mathbf{u}}%
\llcorner\widetilde{\mathbf{v}}\bigskip\\
\mathbf{u}\lrcorner\left(  \mathbf{v}\lrcorner\mathbf{w}\right)  =\left(
\mathbf{u}\wedge\mathbf{v}\right)  \lrcorner\mathbf{w;}\text{ }\left(
\mathbf{u}\llcorner\mathbf{v}\right)  \llcorner\mathbf{w}=\mathbf{u}%
\llcorner\left(  \mathbf{v}\wedge\mathbf{w}\right)  \mathbf{,}\text{ }\left(
\mathbf{u}\lrcorner\mathbf{v}\right)  \llcorner\mathbf{w}=\mathbf{u}%
\lrcorner\left(  \mathbf{v}\llcorner\mathbf{w}\right)  \bigskip\\
\mathbf{x}\lrcorner\left(  \mathbf{v}\wedge\mathbf{w}\right)  =\left(
\mathbf{x}\lrcorner\mathbf{v}\right)  \wedge\mathbf{w+}\widehat{\mathbf{v}%
}\wedge\left(  \mathbf{x\lrcorner w}\right)  \quad\text{;}\quad\left(
\mathbf{u}\wedge\mathbf{v}\right)  \llcorner\mathbf{x}=\mathbf{u\wedge}\left(
\mathbf{\mathbf{v}}\llcorner\mathbf{x}\right)  \mathbf{+}\left(
\mathbf{u\llcorner x}\right)  \wedge\widehat{\mathbf{v}}\bigskip\\
\mathbf{x}\wedge\left(  \mathbf{v}\lrcorner\mathbf{w}\right)  =\widehat
{\mathbf{v}}\lrcorner\left(  \mathbf{x\wedge w}\right)  -\left(
\widehat{\mathbf{v}}\llcorner\mathbf{x}\right)  \lrcorner\mathbf{w}%
\quad\text{;}\quad\left(  \mathbf{u\llcorner v}\right)  \wedge\mathbf{x}%
=\left(  \mathbf{u\wedge x}\right)  \llcorner\widehat{\mathbf{v}%
}\mathbf{-u\llcorner}\left(  \mathbf{x}\lrcorner\widehat{\mathbf{v}}\right)
\bigskip\\
\mathbf{u}_{+}\lrcorner\mathbf{v=v\llcorner u}_{+}\quad\text{;}\quad
\mathbf{u}_{-}\lrcorner\mathbf{v=}\widehat{\mathbf{v}}\mathbf{\llcorner
}\widehat{\mathbf{u}}_{-}\bigskip\\
\mathbf{u}\wedge\left(  \mathbf{v}\lrcorner\mathbf{\sigma}\right)  =\left(
\mathbf{u}\lrcorner\mathbf{v}\right)  \lrcorner\mathbf{\sigma}\quad
\text{;}\quad\left(  \mathbf{\sigma\llcorner v}\right)  \wedge\mathbf{w=\sigma
\llcorner}\left(  \mathbf{v}\llcorner\mathbf{w}\right)
\end{array}
\]
Moreover, from Eqs.(\ref{a5}) we get
\begin{equation}
u_{\ast}\lrcorner v_{\ast}=u_{\ast}\llcorner v_{\ast}=0,\quad u^{\ast
}\lrcorner v^{\ast}=u^{\ast}\llcorner v^{\ast}=0\label{a6}%
\end{equation}
for all $u_{\ast},v_{\ast}\mathbf{\in}%
{\textstyle\bigwedge}
V$ and $u^{\ast},v^{\ast}\mathbf{\in}%
{\textstyle\bigwedge}
V^{\ast},$ so that for elements of the form $\mathbf{u}=u_{\ast}\wedge
u^{\ast}$ and $\mathbf{x}=x_{\ast}\oplus x^{\ast},$ it holds%
\[%
\begin{array}
[c]{c}%
\mathbf{x}\lrcorner\mathbf{u}=\left(  x^{\ast}\lrcorner u_{\ast}\right)
\wedge u^{\ast}+\widehat{u}_{\ast}\wedge\left(  x_{\ast}\lrcorner u^{\ast
}\right)  \medskip\\
\mathbf{u}\llcorner\mathbf{x}=u_{\ast}\wedge\left(  u^{\ast}\llcorner x_{\ast
}\right)  +\left(  x^{\ast}\lrcorner u_{\ast}\right)  \wedge\widehat{u}^{\ast}%
\end{array}
\]
For more details about the properties of the left and right contractions, see,
e.g., \cite{Lounesto,rodoliv2006}.

\subsection{Poincar\'{e} Automorphism (Hodge Dual)}

Define now the Poincar\'{e} automorphism or Hodge dual $\star:%
{\textstyle\bigwedge}
H_{V}\rightarrow%
{\textstyle\bigwedge}
H_{V}$ by%
\begin{equation}
\star\mathbf{u=}\widetilde{\mathbf{u}}\lrcorner\mathbf{\sigma,} \label{a7}%
\end{equation}
for all $\mathbf{u\in}%
{\textstyle\bigwedge}
H_{V}.$ The inverse $\star^{-1}$ of this operation is given by%
\begin{equation}
\star^{-1}\mathbf{u=}\widetilde{\mathbf{\sigma}}\mathbf{\llcorner
\widetilde{\mathbf{u}}.} \label{a8}%
\end{equation}
The following general properties of the Hodge duality holds true%
\[%
\begin{array}
[c]{l}%
\star\mathbf{\sigma=}\left(  -1\right)  ^{n},\quad\star^{-1}\mathbf{\sigma
=}1,\quad\left\langle \star\mathbf{u,}\star\mathbf{v}\right\rangle =\left(
-1\right)  ^{n}\left\langle \mathbf{u,v}\right\rangle ,\medskip\\
\star\left(  \mathbf{u\wedge v}\right)  =\widetilde{\mathbf{v}}\lrcorner
\star\mathbf{u},\quad\star^{-1}\left(  \mathbf{u\wedge v}\right)  =\left(
\star^{-1}\mathbf{v}\right)  \llcorner\widetilde{\mathbf{u,}}\medskip\\
\star\left(  \mathbf{u\llcorner v}\right)  =\widetilde{\mathbf{v}}\wedge
\star\mathbf{u},\quad\star^{-1}\left(  \mathbf{u\lrcorner v}\right)  =\left(
\star^{-1}\mathbf{v}\right)  \wedge\widetilde{\mathbf{u}}.
\end{array}
\]

For $\mathbf{x}=x_{\ast}\oplus x^{\ast}\in H_{V}\subset%
{\textstyle\bigwedge}
H_{V}$ we have, since $x_{\ast}\lrcorner e_{\ast}=0$ and $x^{\ast}%
\lrcorner\theta^{\ast}=0$ that%
\[%
\begin{array}
[c]{l}%
\star\mathbf{x=x\lrcorner\sigma=x\lrcorner}\left(  e_{\ast}\wedge\theta^{\ast
}\right)  =\left(  \mathbf{x\lrcorner}e_{\ast}\right)  \wedge\theta^{\ast
}+\widehat{e}_{\ast}\wedge\left(  \mathbf{x\lrcorner}\theta^{\ast}\right)
\medskip\\
=\left(  x^{\ast}\mathbf{\lrcorner}e_{\ast}\right)  \wedge\theta^{\ast
}-e_{\ast}\wedge\left(  \theta^{\ast}\llcorner x_{\ast}\right)  ,
\end{array}
\]
and it follows that, for $u_{\ast}\in%
{\textstyle\bigwedge}
V\subset%
{\textstyle\bigwedge}
H_{V}$ and $u^{\ast}\in%
{\textstyle\bigwedge}
V^{\ast}\subset%
{\textstyle\bigwedge}
H_{V},$%
\[
\star u^{\ast}\mathbf{=}\left(  \widetilde{u}^{\ast}\mathbf{\lrcorner}e_{\ast
}\right)  \wedge\theta^{\ast}=D_{\nshortparallel}u^{\ast}\wedge\theta^{\ast
}\quad\text{and}\quad\star u_{\ast}\mathbf{=}e_{\ast}\wedge\left(
\theta^{\ast}\llcorner\overline{u}_{\ast}\right)  =e_{\ast}\wedge
D^{\nshortparallel}u_{\ast}%
\]
where we introduced the Poincar\'{e} isomorphisms $D_{\nshortparallel}:%
{\textstyle\bigwedge}
V^{\ast}\rightarrow%
{\textstyle\bigwedge}
V$ and $D^{\nshortparallel}:%
{\textstyle\bigwedge}
V\rightarrow%
{\textstyle\bigwedge}
V^{\ast}$ by (see, e.g., \cite{Greub})
\[
D_{\nshortparallel}u^{\ast}=\widetilde{u}^{\ast}\mathbf{\lrcorner}e_{\ast
}\quad\text{and}\quad D^{\nshortparallel}u_{\ast}=\theta^{\ast}\llcorner
\overline{u}_{\ast}.
\]
For an element of the form $\mathbf{u}=u_{\ast}\wedge u^{\ast}$ with $u_{\ast
}\in%
{\textstyle\bigwedge}
V$ and $u^{\ast}\in%
{\textstyle\bigwedge}
V^{\ast},$ we have%
\[
\star\mathbf{u=}D_{\nshortparallel}u^{\ast}\wedge D^{\nshortparallel}u_{\ast
}.
\]

\subsection{Clifford Algebra of a Hyperbolic Space}

Introduce in $%
{\displaystyle\bigwedge}
H_{V}$ the Clifford product of a vector $\mathbf{x}\in H_{V}$ by an element
$\mathbf{u}\in%
{\textstyle\bigwedge}
H_{V}$ by
\[
\mathbf{xu}=\mathbf{x}\lrcorner\mathbf{u}+\mathbf{x}\wedge\mathbf{u}%
\]
and extend this product by linearity and associativity to all of the space $%
{\displaystyle\bigwedge}
H_{V}$. The resulting algebra is isomorphic to the Clifford algebra
$\mathcal{C\ell}(H_{V})$ of the hyperbolic structure $H_{V}$ and will thereby
be identified with it.

We call $\mathcal{C\ell}(H_{V})$ the \emph{mother} algebra\/(or the hyperbolic
Clifford algebra\/) of the vector space $V$. The even and odd subspaces of
$\mathcal{C\ell}(H_{V})$ will be denoted respectively by $\mathcal{C\ell
}^{(0)}(H_{V})$ and $\mathcal{C\ell}^{(1)}(H_{V})$, so that
\[
\mathcal{C\ell}(H_{V})=\mathcal{C\ell}^{(0)}(H_{V})\oplus\mathcal{C\ell}%
^{(1)}(H_{V})
\]
and the same notation of the exterior algebra is used for grade involution,
reversion, and conjugation in $\mathcal{C\ell}(H_{V})$, which obviously
satisfy
\[
(\mathbf{uv})\hat{\ }=\hat{\mathbf{u}}\hat{\mathbf{v}},\qquad(\mathbf{uv}%
)\tilde{\ }=\tilde{\mathbf{v}}\tilde{\mathbf{u}},\qquad(\mathbf{uv})\bar
{\ }=\bar{\mathbf{v}}\bar{\mathbf{u}}.
\]

For vecfors $\mathbf{x},\mathbf{y}\in H_{V}$, we have the relation
\[
\mathbf{xy}+\mathbf{yx}=2\boldsymbol{<}\mathbf{x},\mathbf{y}>.
\]

For the basis elements $\{\mathbf{\sigma}_{k}\}$ it holds%
\[%
\begin{array}
[c]{l}%
\mathbf{\sigma}_{k}\mathbf{\sigma}_{l}+\mathbf{\sigma}_{l}\mathbf{\sigma}%
_{k}=2\delta_{kl},\\
\mathbf{\sigma}_{n+k}\mathbf{\sigma}_{n+l}+\mathbf{\sigma}_{n+l}%
\mathbf{\sigma}_{n+k}=-2\delta_{kl},\\
\mathbf{\sigma}_{k}\mathbf{\sigma}_{n+l}=-\mathbf{\sigma}_{n+l}\mathbf{\sigma
}_{k}.
\end{array}
\]
In turn, for the Witt basis $\{e_{k},\theta^{k}\}$, we have instead
\begin{align*}
&  e_{k}e_{l}+e_{l}e_{k}=0,\\
&  \theta^{k}\theta^{l}+\theta^{l}\theta^{k}=0,\\
&  \theta^{k}e_{l}+e_{l}\theta^{k}=2\delta_{l}^{k}.
\end{align*}

The Clifford product has the following general properties:%
\[%
\begin{array}
[c]{l}%
\mathbf{u\lrcorner\sigma=u\sigma,\qquad\sigma\llcorner u=\sigma u,\medskip}\\
\left\langle \mathbf{u,vw}\right\rangle =\left\langle \widetilde{\mathbf{v}%
}\mathbf{u,w}\right\rangle =\left\langle \mathbf{u}\widetilde{\mathbf{w}%
}\mathbf{,v}\right\rangle ,\mathbf{\medskip}\\
\mathbf{x\wedge u=}\frac{1}{2}\left(  \mathbf{xu+}\widehat{\mathbf{u}%
}\mathbf{x}\right)  \mathbf{,\qquad u\wedge x=}\frac{1}{2}\left(
\mathbf{ux+x}\widehat{\mathbf{u}}\right)  ,\mathbf{\medskip}\\
\mathbf{x\lrcorner u=}\frac{1}{2}\left(  \mathbf{xu-}\widehat{\mathbf{u}%
}\mathbf{x}\right)  \mathbf{,\qquad u\llcorner x=}\frac{1}{2}\left(
\mathbf{ux-x}\widehat{\mathbf{u}}\right)  ,\mathbf{\medskip}\\
\mathbf{x\lrcorner}\left(  \mathbf{uv}\right)  \mathbf{=}\left(
\mathbf{x\lrcorner u}\right)  \mathbf{v+}\widehat{\mathbf{u}}\left(
\mathbf{x\lrcorner v}\right)  \mathbf{,\qquad}\left(  \mathbf{uv}\right)
\mathbf{\llcorner x=u}\left(  \mathbf{v\llcorner x}\right)  \mathbf{-}\left(
\mathbf{u\llcorner x}\right)  \widehat{\mathbf{v}},\mathbf{\medskip}\\
\mathbf{x\wedge}\left(  \mathbf{uv}\right)  \mathbf{=}\left(
\mathbf{x\lrcorner u}\right)  \mathbf{v+}\widehat{\mathbf{u}}\left(
\mathbf{x\wedge v}\right)  =\left(  \mathbf{x\wedge u}\right)  \mathbf{v-}%
\widehat{\mathbf{u}}\left(  \mathbf{x\lrcorner v}\right)  ,\mathbf{\medskip}\\
\left(  \mathbf{uv}\right)  \wedge\mathbf{x=u}\left(  \mathbf{v\wedge
x}\right)  -\left(  \mathbf{u\llcorner x}\right)  \widehat{\mathbf{v}%
}=\mathbf{u}\left(  \mathbf{v\llcorner x}\right)  \mathbf{+}\left(
\mathbf{u\wedge x}\right)  \widehat{\mathbf{v}},\mathbf{\medskip}\\
\star\mathbf{u}=\widetilde{\mathbf{u}}\mathbf{\sigma,\qquad}\star
^{-1}\mathbf{u}=\widetilde{\mathbf{\sigma}}\widetilde{\mathbf{u}%
},\mathbf{\medskip}\\
\star\left(  \mathbf{uv}\right)  =\widetilde{\mathbf{v}}\left(  \star
\mathbf{u}\right)  \mathbf{,\qquad}\star^{-1}\left(  \mathbf{uv}\right)
=\left(  \star^{-1}\mathbf{v}\right)  \mathbf{u.}\\
\end{array}
\]

Moreover, for an element of the form $\mathbf{u}=u_{\ast}\wedge u^{\ast}$,
with $u_{\ast}\in%
{\textstyle\bigwedge}
V$ and $u^{\ast}\in%
{\textstyle\bigwedge}
V^{\ast}$ and $\mathbf{x\in}%
{\textstyle\bigwedge}
H_{V}$ it is:%
\begin{align*}
\mathbf{xu}  &  \mathbf{=}\left(  x_{\ast}\oplus x^{\ast}\right)
\lrcorner\left(  u_{\ast}\wedge u^{\ast}\right)  +\left(  x_{\ast}\oplus
x^{\ast}\right)  \wedge\left(  u_{\ast}\wedge u^{\ast}\right)  \medskip\\
&  =\left(  x^{\ast}\lrcorner u_{\ast}\right)  \wedge u^{\ast}+\widehat
{u}_{\ast}\wedge\left(  x_{\ast}\lrcorner u^{\ast}\right)  +\left(  x_{\ast
}\oplus x^{\ast}\right)  \wedge\left(  u_{\ast}\wedge u^{\ast}\right)  .
\end{align*}

On the other hand,%
\[
\left(  x_{\ast}\oplus x^{\ast}\right)  \wedge\left(  u_{\ast}\wedge u^{\ast
}\right)  =\widehat{u}_{\ast}\wedge\left(  x_{\ast}\wedge u^{\ast}\right)
+\left(  x^{\ast}\wedge u_{\ast}\right)  \wedge u^{\ast}%
\]
and then,
\begin{align*}
\mathbf{xu}  &  \mathbf{=}\left(  x^{\ast}\lrcorner u_{\ast}\right)  \wedge
u^{\ast}+\widehat{u}_{\ast}\wedge\left(  x_{\ast}\lrcorner u^{\ast}\right)
+\widehat{u}_{\ast}\wedge\left(  x_{\ast}\wedge u^{\ast}\right)  +\left(
x^{\ast}\wedge u_{\ast}\right)  \wedge u^{\ast}\medskip\\
&  =\left(  \left(  x^{\ast}\lrcorner u_{\ast}\right)  +\left(  x^{\ast}\wedge
u_{\ast}\right)  \right)  \wedge u^{\ast}+\widehat{u}_{\ast}\wedge\left(
\left(  x_{\ast}\lrcorner u^{\ast}\right)  +\left(  x_{\ast}\wedge u^{\ast
}\right)  \right)  \medskip\\
&  =(x^{\ast}u_{\ast})\wedge u^{\ast}+\hat{u}_{\ast}\wedge(x_{\ast}u^{\ast}).
\end{align*}
Also note that the square of the volume $2n$-vector $\mathbf{\sigma}$ satisfy
\[
\mathbf{\sigma}^{2}=1.
\]

\begin{proposition}
\label{prop2} There is the following natural isomorphism
\[
\mathcal{C\ell}(H_{V})\simeq\mathop{\mathrm{End}}(%
{\displaystyle\bigwedge}
V).
\]
In addition, being $b$ a non-degenerate symmetric bilinear form on $V$, it
holds also
\[
\mathcal{C\ell}(H_{V})\simeq\mathcal{C\ell}(H_{bV})\simeq\mathcal{C\ell
}(V,b)\text{ }\widehat{\otimes}\text{ }\mathcal{C\ell}(V,-b).
\]

\end{proposition}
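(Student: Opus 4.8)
The plan is to prove the two isomorphisms by different means: the first by constructing an explicit (Chevalley‑type) faithful representation of $\mathcal{C\ell}(H_{V})$ on the Lagrangian $\bigwedge V$, the second by functoriality of the Clifford construction applied to the splitting recorded in the earlier Proposition.

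\emph{First isomorphism.} Because $V$ and $V^{\ast}$ are maximal totally isotropic subspaces of $H_{V}$, I would define, for a vecfor $\mathbf{x}=x_{\ast}\oplus x^{\ast}\in H_{V}$, the operator $\rho(\mathbf{x})\in\mathop{\mathrm{End}}(\bigwedge V)$ by
\[
\rho(\mathbf{x})\,\omega \;=\; x_{\ast}\wedge\omega \;+\; 2\,(x^{\ast}\lrcorner\omega),\qquad\omega\in{\textstyle\bigwedge}V .
\]
Using the contraction identities listed above — notably $x^{\ast}\lrcorner(x_{\ast}\wedge\omega)=(x^{\ast}\lrcorner x_{\ast})\wedge\omega+\widehat{x}_{\ast}\wedge(x^{\ast}\lrcorner\omega)$ with $x^{\ast}\lrcorner x_{\ast}=x^{\ast}(x_{\ast})$, together with the vanishings $u_{\ast}\lrcorner v_{\ast}=u^{\ast}\lrcorner v^{\ast}=0$ from (\ref{a6}) and $\mathbf{u}\lrcorner(\mathbf{v}\lrcorner\mathbf{w})=(\mathbf{u}\wedge\mathbf{v})\lrcorner\mathbf{w}$ — a short computation gives $\rho(\mathbf{x})^{2}\omega=2x^{\ast}(x_{\ast})\,\omega=\left\langle \mathbf{x},\mathbf{x}\right\rangle \omega$ by (\ref{a1}). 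Hence $\rho(\mathbf{x})^{2}=\left\langle \mathbf{x},\mathbf{x}\right\rangle \mathrm{Id}$, and by the universal property of the Clifford algebra $\rho$ extends to an algebra homomorphism $\rho\colon\mathcal{C\ell}(H_{V})\rightarrow\mathop{\mathrm{End}}(\bigwedge V)$. (The constant $2$ is forced by the paper's normalization $\theta^{k}e_{l}+e_{l}\theta^{k}=2\delta_{l}^{k}$; any rescaling of $\rho$ that makes the anticommutator $\{x_{\ast}\wedge,\,x^{\ast}\lrcorner\}$ carry the matching factor would do as well.)

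\emph{Faithfulness and surjectivity.} Both algebras have real dimension $2^{2n}$: $\dim\mathcal{C\ell}(H_{V})=\dim\bigwedge H_{V}=2^{2n}$ since $\dim H_{V}=2n$, and $\dim\mathop{\mathrm{End}}(\bigwedge V)=(\dim\bigwedge V)^{2}=(2^{n})^{2}$. It therefore suffices to prove $\rho$ surjective. On a Witt pair one has $\rho(e_{i})=e_{i}\wedge(\,\cdot\,)$ and $\rho(\theta^{j})=2\,\theta^{j}\lrcorner(\,\cdot\,)$; since $\mathcal{C\ell}(H_{V})$ is generated by $H_{V}$, the image of $\rho$ is the subalgebra of $\mathop{\mathrm{End}}(\bigwedge V)$ generated by these. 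It contains each operator $e_{K}\wedge(\,\cdot\,)$ and $\theta^{L}\lrcorner(\,\cdot\,)$ for $K,L\subseteq\{1,\dots,n\}$, and each $\tfrac{1}{2}\rho(\theta^{i})\rho(e_{i})$, which equals the projection $\Pi_{i}$ of $\bigwedge V$ onto the span of the monomials $e_{M}$ with $i\notin M$; the $\Pi_{i}$ commute, so their product $P_{0}=\Pi_{1}\cdots\Pi_{n}$ — the projection onto $\bigwedge^{0}V$ — also lies in the image. Consequently, for every pair $K,L$ the operator $(e_{K}\wedge)\circ P_{0}\circ(\theta^{L}\lrcorner)$, which sends $e_{L}$ to $\pm e_{K}$ and kills every other basis monomial, lies in the image; these are, up to scalars, all the matrix units of $\mathop{\mathrm{End}}(\bigwedge V)$. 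Hence $\rho$ is onto, and being a surjective homomorphism between real vector spaces of equal finite dimension it is bijective, so an algebra isomorphism, natural in $V$. (Alternatively one may show the commutant of $\rho(\mathcal{C\ell}(H_{V}))$ is $\mathbb{R}\,\mathrm{Id}$ and invoke the density theorem.)

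\emph{Second isomorphism and main obstacle.} For the second chain I would use two classical facts. By the earlier Proposition there is an isometry $H_{V}\simeq(V,b)\oplus(V,-b)=H_{bV}$; applying the Clifford functor, which sends an isometry to an algebra isomorphism via the universal property, gives $\mathcal{C\ell}(H_{V})\simeq\mathcal{C\ell}(H_{bV})$. Since $H_{bV}$ is the \emph{orthogonal} direct sum of $(V,b)$ and $(V,-b)$, the classical theorem that the Clifford algebra of an orthogonal sum is the $\mathbb{Z}_{2}$-graded tensor product of the factors (see, e.g., Greub \cite{Greub}, Knus \cite{Knus}) yields $\mathcal{C\ell}(H_{bV})\simeq\mathcal{C\ell}(V,b)\,\widehat{\otimes}\,\mathcal{C\ell}(V,-b)$, closing the chain. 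The only genuinely delicate point is the middle part of the first isomorphism — organizing the products of the $\rho(e_{i})$'s and $\rho(\theta^{j})$'s so as to exhibit all matrix units inside the image while keeping every scalar factor consistent with the factor‑of‑two Clifford relation; the rest is either the universal property or a citable classical result.
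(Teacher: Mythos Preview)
Your argument is correct and follows essentially the same line as the paper's proof: both use the Chevalley representation (exterior multiplication plus contraction) on the Lagrangian $\bigwedge V$ for the first isomorphism, and both obtain the second by passing through the isometry $H_{V}\simeq(V,b)\oplus(V,-b)$ and the graded tensor decomposition of the Clifford algebra of an orthogonal sum. The only visible differences are cosmetic: the paper normalizes its Clifford map as $\varphi_{\mathbf{x}}(u_{\ast})=\tfrac{1}{\sqrt{2}}(x^{\ast}\lrcorner u_{\ast}+x_{\ast}\wedge u_{\ast})$ rather than weighting the contraction by $2$, and it writes the second Clifford map explicitly as $x_{\ast}\oplus x^{\ast}\mapsto x_{+}\widehat{\otimes}1+1\widehat{\otimes}x_{-}$ with $x_{\pm}=\tfrac{1}{\sqrt{2}}(b^{\ast}x^{\ast}\pm x_{\ast})$ instead of citing the general theorem; your surjectivity argument via matrix units is a detail the paper simply omits.
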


\noindent\textbf{Proof. }The first isomorphism is given by the extension to
$\mathcal{C\ell}(H_{V})$ of the Clifford map\/ $\varphi:H_{V}\rightarrow
\mathop{\mathrm{End}}(%
{\displaystyle\bigwedge}
V$ $)$ by $\mathbf{x}\mapsto\varphi_{x}$, with
\[
\varphi_{x}(u_{\ast})={\frac{1}{\sqrt{2}}}\left(  x^{\ast}\mathbin\lrcorner
u_{\ast}+x_{\ast}\wedge u_{\ast}\right)  ,
\]
for all $u_{\ast}\in%
{\displaystyle\bigwedge}
V$. The second isomorphism, in turn, is induced from the Clifford map
\[
x_{\ast}\oplus x^{\ast}\mapsto x_{+}\widehat{\otimes}1+1\widehat{\otimes}%
x_{-},
\]
with
\[
x_{\pm}={{\frac{1}{\sqrt{2}}}}(b^{\ast}x^{\ast}\pm x_{\ast}).
\]

\begin{corollary}
The even and odd subspaces of the hyperbolic Clifford algebra are
\[
\mathcal{C\ell}^{(0)}(H_{V})\simeq\mathop{\mathrm{End}}(%
{\displaystyle\bigwedge\nolimits^{(0)}}
V)\oplus\mathop{\mathrm{End}}(%
{\displaystyle\bigwedge\nolimits^{(1)}}
V)
\]
and
\[
\mathcal{C\ell}^{(1)}(H_{V})\simeq L(%
{\displaystyle\bigwedge\nolimits^{(0)}}
V,%
{\displaystyle\bigwedge\nolimits^{(1)}}
V)\oplus L(%
{\displaystyle\bigwedge\nolimits^{(1)}}
{\displaystyle\bigwedge}
,%
{\displaystyle\bigwedge\nolimits^{(0)}}
V)
\]
where $L(V,W)$ denotes the space of the linear mappings from $V$ to $W$ and $%
{\textstyle\bigwedge\nolimits^{\left(  0\right)  }}
V$ and $%
{\textstyle\bigwedge\nolimits^{\left(  1\right)  }}
V$ denote respectively the spaces of the even and of odd elements of $%
{\textstyle\bigwedge}
V.$
\end{corollary}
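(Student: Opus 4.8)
The plan is to upgrade the algebra isomorphism $\mathcal{C\ell}(H_{V})\simeq\mathop{\mathrm{End}}(\bigwedge V)$ of Proposition \ref{prop2} to an isomorphism of $\mathbb{Z}_{2}$-graded algebras and then simply read off its even and odd components. First I would equip $\mathop{\mathrm{End}}(\bigwedge V)$ with the $\mathbb{Z}_{2}$-grading induced by the splitting $\bigwedge V=\bigwedge^{(0)}V\oplus\bigwedge^{(1)}V$, declaring an endomorphism \emph{even} if it preserves each summand and \emph{odd} if it interchanges them. With respect to the block decomposition $\mathop{\mathrm{End}}(\bigwedge V)=\bigoplus_{i,j\in\{0,1\}}L(\bigwedge^{(j)}V,\bigwedge^{(i)}V)$, the even part is exactly $\mathop{\mathrm{End}}(\bigwedge^{(0)}V)\oplus\mathop{\mathrm{End}}(\bigwedge^{(1)}V)$ and the odd part is exactly $L(\bigwedge^{(0)}V,\bigwedge^{(1)}V)\oplus L(\bigwedge^{(1)}V,\bigwedge^{(0)}V)$; a one-line check (the composite of two grade-preserving or of two grade-reversing maps is grade-preserving, while a grade-preserving composed with a grade-reversing one is grade-reversing) shows this is a genuine $\mathbb{Z}_{2}$-grading of the algebra $\mathop{\mathrm{End}}(\bigwedge V)$.

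Next I would verify that the Clifford map $\varphi:H_{V}\rightarrow\mathop{\mathrm{End}}(\bigwedge V)$ of Proposition \ref{prop2}, namely $\varphi_{x}(u_{\ast})={\frac{1}{\sqrt{2}}}\left(x^{\ast}\lrcorner u_{\ast}+x_{\ast}\wedge u_{\ast}\right)$, sends every vecfor $\mathbf{x}=x_{\ast}\oplus x^{\ast}$ to an \emph{odd} endomorphism. This is immediate from the formula: $x^{\ast}\lrcorner(\cdot)$ lowers the exterior degree by one and $x_{\ast}\wedge(\cdot)$ raises it by one, so both summands are grade-reversing on $\bigwedge V$ and hence $\varphi_{x}$ maps $\bigwedge^{(0)}V$ into $\bigwedge^{(1)}V$ and conversely. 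Since $H_{V}$ consists of degree-one (hence odd) elements of $\mathcal{C\ell}(H_{V})$ and generates the algebra, and since the extension of $\varphi$ is an algebra homomorphism, a product $\mathbf{x}_{1}\cdots\mathbf{x}_{k}$ of $k$ vecfors is carried to a composition of $k$ odd endomorphisms, whose parity is $(-1)^{k}$, matching the parity of $\mathbf{x}_{1}\cdots\mathbf{x}_{k}$ in $\mathcal{C\ell}(H_{V})$. Therefore the isomorphism of Proposition \ref{prop2} respects the $\mathbb{Z}_{2}$-gradings.

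Finally, restricting this graded isomorphism to the even and odd summands yields $\mathcal{C\ell}^{(0)}(H_{V})\simeq\mathop{\mathrm{End}}(\bigwedge^{(0)}V)\oplus\mathop{\mathrm{End}}(\bigwedge^{(1)}V)$ and $\mathcal{C\ell}^{(1)}(H_{V})\simeq L(\bigwedge^{(0)}V,\bigwedge^{(1)}V)\oplus L(\bigwedge^{(1)}V,\bigwedge^{(0)}V)$, which is the assertion. The only point that needs any care — and what I would regard as the crux — is pinning down that the grading placed on $\mathop{\mathrm{End}}(\bigwedge V)$ above is precisely the one making $\varphi$ graded; but since this follows at once from the explicit form of $\varphi_{x}$, I do not expect a real obstacle. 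Alternatively, one could deduce the same block structure from the second isomorphism of Proposition \ref{prop2}, $\mathcal{C\ell}(H_{V})\simeq\mathcal{C\ell}(V,b)\,\widehat{\otimes}\,\mathcal{C\ell}(V,-b)$, by noting that the graded tensor product of the two Clifford factors reproduces the even/odd splitting of $\mathop{\mathrm{End}}(\bigwedge V)$, but the endomorphism description above is the most direct route.
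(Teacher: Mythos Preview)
Your argument is correct and is precisely the natural unpacking of why this statement is a corollary of Proposition~\ref{prop2}: the paper itself gives no separate proof, simply recording the result as an immediate consequence of the isomorphism $\mathcal{C\ell}(H_{V})\simeq\mathop{\mathrm{End}}(\bigwedge V)$. Your verification that the Clifford map $\varphi_{x}$ lands in the odd part of $\mathop{\mathrm{End}}(\bigwedge V)$ (because contraction and wedging each shift exterior degree by one) is exactly the point that makes the isomorphism $\mathbb{Z}_{2}$-graded, and the block decomposition then follows at once.
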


\begin{remark}
Recalling the definition of the second order hyperbolic structure, $H_{V}^{2}%
$, it follows from propositon \ref{prop2} that
\[
\mathcal{C\ell}(H_{V}^{2})\simeq\mathop{\mathrm{End}}(\mathcal{C\ell}(H_{V})
\]
The algebra $\mathcal{C\ell}(H_{V}^{2})$ may be called \textit{grandmother}
algebra\/of the vector space $V$.\hfill\ 
\end{remark}

\section{Duality Products of Multivectors and Multiforms, and Extensors}

In this section we study the \emph{duality product} of multivectors by
multiforms used in the definition of the hyperbolic algebra $\mathcal{C\ell
}(V\oplus V^{\ast},\left\langle ,\right\rangle )$ of \emph{multivecfors}. We
detail some important properties of the left and right contracted products
among the elements of $%
{\textstyle\bigwedge}
V$ and $%
{\textstyle\bigwedge}
V^{\ast},$by\ introducing a very useful notation for these products. Next, we
give a theory of the $k$\emph{\ multivector and }$l$\emph{\ multiform
variables multivector }(\emph{or multiform})\emph{\ extensors} over $V$
(defining the spaces $\left.  \overset{\left.  {}\right.  }{ext}\right.
_{k}^{l}(V)$ and $\left.  \overset{\left.  \ast\right.  }{ext}\right.
_{k}^{l}(V)$) introducing the concept of exterior product of extensors, and of
several operators acting on these objects as, e.g., the \emph{adjoint
operator}, the \emph{exterior power extension operator} and the
\emph{contracted extension operator}. We analyze the properties of these
operators with considerable detail.

\subsection{Duality Scalar Product of Multivectors and Multiforms}

In the Appendix A we briefly recall the exterior algebras of multivectors
(elements of $%
{\textstyle\bigwedge}
V$) and multiforms (elements of $%
{\textstyle\bigwedge}
V^{\ast}$) associated with a real vector space $V$ of finite dimension which
is need for the following.

\begin{definition}
The duality scalar product of a multiform $\Phi$ with a multivector $X$ is the
scalar $\left\langle \Phi,X\right\rangle $ defined by the following axioms:

For all $\alpha,\beta\in\mathbb{R}:$%
\begin{equation}
\left\langle \alpha,\beta\right\rangle =\left\langle \beta,\alpha\right\rangle
=\alpha\beta. \label{DSP1}%
\end{equation}

For all $\Phi_{p}\in%
{\textstyle\bigwedge\nolimits^{p}}
V^{\ast}$ and $X^{p}\in%
{\textstyle\bigwedge\nolimits^{p}}
V$ (with $1\leq p\leq n$)$:$
\begin{equation}
\left\langle \Phi_{p},X^{p}\right\rangle =\left\langle X^{p},\Phi
_{p}\right\rangle =\frac{1}{p!}\Phi_{p}(e_{j_{1}},\ldots,e_{j_{p}}%
)X^{p}(\varepsilon^{j_{1}},\ldots,\varepsilon^{j_{p}}), \label{DSP2}%
\end{equation}
where $\left\{  e_{j},\varepsilon^{j}\right\}  $ is any pair of dual bases
over $V.$

For all $\Phi\in%
{\textstyle\bigwedge}
V^{\ast}$ and $X\in%
{\textstyle\bigwedge}
V:$ if $\Phi=\Phi_{0}+\Phi_{1}+\cdots+\Phi_{n}$ and $X=X^{0}+X^{1}%
+\cdots+X^{n},$ then%
\begin{equation}
\left\langle \Phi,X\right\rangle =\left\langle X,\Phi\right\rangle
=\overset{n}{\underset{p=0}{%
{\textstyle\sum}
}}\left\langle \Phi_{p},X^{p}\right\rangle . \label{DSP3}%
\end{equation}

\end{definition}

We emphasize that the scalar $\Phi_{p}(e_{j_{1}},\ldots,e_{j_{p}}%
)X^{p}(\varepsilon^{j_{1}},\ldots,\varepsilon^{j_{p}})$ has frame independent
character, i.e., it does not depend on the pair of dual bases $\left\{
e_{j},\varepsilon^{j}\right\}  $ used in its evaluation, since $\Phi_{p}$ and
$X^{p}$ are $p$-linear mappings.

Note that for all $\omega\in V^{\ast}$ and $v\in V$ it holds
\begin{equation}
\left\langle v,\omega\right\rangle =\left\langle \omega,v\right\rangle
=\omega(v). \label{DSP4}%
\end{equation}

We present now two noticeable properties for the duality scalar product
between $p$-forms and $p$-vectors:

\textbf{(i) }For all $\Phi_{p}\in%
{\textstyle\bigwedge^{p}}
V^{\ast},$ and $v_{1},\ldots,v_{p}\in V:$%
\begin{equation}
\left\langle \Phi_{p},v_{1}\wedge\cdots\wedge v_{p}\right\rangle =\left\langle
v_{1}\wedge\cdots\wedge v_{p},\Phi_{p}\right\rangle =\Phi_{p}(v_{1}%
,\ldots,v_{p}). \label{DSP5}%
\end{equation}

\textbf{(ii)} For all $\omega^{1},\ldots,\omega^{p}\in V^{\ast}$ and
$v_{1},\ldots,v_{p}\in V:$%
\begin{equation}
\left\langle \omega^{1}\wedge\cdots\wedge\omega^{p},v_{1}\wedge\cdots\wedge
v_{p}\right\rangle =\det\left(
\begin{array}
[c]{lll}%
\omega^{1}(v_{1}) & \cdots & \omega^{1}(v_{p})\\
\vdots & \vdots & \vdots\\
\omega^{p}(v_{1}) & \cdots & \omega^{p}(v_{p})
\end{array}
\right)  . \label{DSP6}%
\end{equation}

The basic properties for the duality scalar product are the non-degeneracy and
the distributive laws on the left and on the right with respect to addition of
either multiforms or multivectors, i.e.,%

\begin{align}
\left\langle \Phi,X\right\rangle  &  =0,\text{ for all }\Phi,\text{ then
}X=0,\nonumber\\
\left\langle \Phi,X\right\rangle  &  =0,\text{ for all }X,\text{ then }\Phi=0,
\label{DSP7}%
\end{align}

\begin{align}
\left\langle \Phi+\Psi,X\right\rangle  &  =\left\langle \Phi,X\right\rangle
+\left\langle \Psi,X\right\rangle ,\nonumber\\
\left\langle \Phi,X+Y\right\rangle  &  =\left\langle \Phi,X\right\rangle
+\left\langle \Phi,Y\right\rangle . \label{DSP8}%
\end{align}

\subsection{Duality Contracted Products of Multivectors and Multiforms}

\subsubsection{Left Contracted Product}

\begin{definition}
The left contracted product of a multiform $\Phi$ with a multivector $X$ (or,
a multivector $X$ with\ a multiform $\Phi$) is the multivector $\left\langle
\Phi,X\right\vert $ (respectively, the multiform $\left\langle X,\Phi
\right\vert $) defined by the following axioms:

For all $\Phi_{p}\in%
{\textstyle\bigwedge\nolimits^{p}}
V^{\ast}$ and $X^{p}\in%
{\textstyle\bigwedge\nolimits^{p}}
V$ with $0\leq p\leq n:$%
\begin{equation}
\left\langle \Phi_{p},X^{p}\right\vert =\left\langle X^{p},\Phi_{p}\right\vert
=\left\langle \widetilde{\Phi}_{p},X^{p}\right\rangle =\left\langle \Phi
_{p},\widetilde{X}^{p}\right\rangle . \label{DCP1}%
\end{equation}

For all $\Phi_{p}\in%
{\textstyle\bigwedge\nolimits^{p}}
V^{\ast}$ and $X^{q}\in%
{\textstyle\bigwedge\nolimits^{q}}
V$ (or $X^{p}\in%
{\textstyle\bigwedge\nolimits^{p}}
V$ and $\Phi_{q}\in%
{\textstyle\bigwedge\nolimits^{q}}
V^{\ast}$) with $0\leq p<q\leq n:$%
\begin{align}
\left\langle \Phi_{p},X^{q}\right\vert  &  =\frac{1}{(q-p)!}\left\langle
\widetilde{\Phi}_{p}\wedge\varepsilon^{j_{1}}\wedge\cdots\wedge\varepsilon
^{j_{q-p}},X^{q}\right\rangle e_{j_{1}}\wedge\cdots\wedge e_{j_{q-p}%
},\label{DCP2}\\
\left\langle X^{p},\Phi_{q}\right\vert  &  =\frac{1}{(q-p)!}\left\langle
\widetilde{X}^{p}\wedge e_{j_{1}}\wedge\cdots\wedge e_{j_{q-p}},\Phi
_{q}\right\rangle \varepsilon^{j_{1}}\wedge\cdots\wedge\varepsilon^{j_{q-p}},
\label{DCP3}%
\end{align}
where $\left\{  e_{j},\varepsilon^{j}\right\}  $ is any pair of dual bases for
$V$ and $V^{\ast}$.

For all $\Phi\in%
{\textstyle\bigwedge}
V^{\ast}$ and $X\in%
{\textstyle\bigwedge}
V:$ if $\Phi=\Phi_{0}+\Phi_{1}+\cdots+\Phi_{n}$ and $X=X^{0}+X^{1}%
+\cdots+X^{n},$ then%
\begin{align}
\left\langle \Phi,X\right\vert  &  =\overset{n}{\underset{k=0}{%
{\textstyle\sum}
}}\underset{j=0}{\overset{n-k}{%
{\textstyle\sum}
}}\left\langle \Phi_{j},X^{k+j}\right\vert ,\label{DCP4}\\
\left\langle X,\Phi\right\vert  &  =\overset{n}{\underset{k=0}{%
{\textstyle\sum}
}}\underset{j=0}{\overset{n-k}{%
{\textstyle\sum}
}}\left\langle X^{j},\Phi_{k+j}\right\vert . \label{DCP5}%
\end{align}

\end{definition}

Note that the $(q-p)$-vector $\left\langle \Phi_{p},X^{q}\right\vert $ and the
$(q-p)$-form $\left\langle X^{p},\Phi_{q}\right\vert $ have frame independent
character, i.e., they do not depend on the pair of frames $\left\{
e_{j},\varepsilon^{j}\right\}  $ chosen for calculating them.\medskip

The left contracted product has the following basic properties:\smallskip

\begin{proposition}
For all $\Phi_{p}\in%
{\textstyle\bigwedge\nolimits^{p}}
V^{\ast}$ and $X^{q}\in%
{\textstyle\bigwedge\nolimits^{q}}
V$ with $0\leq p\leq q\leq n.$\ For all $\Psi_{q-p}\in%
{\textstyle\bigwedge\nolimits^{q-p}}
V^{\ast}$, it holds
\begin{equation}
\left\langle \left\langle \Phi_{p},X^{q}\right\vert ,\Psi_{q-p}\right\rangle
=\left\langle X^{q},\widetilde{\Phi}_{p}\wedge\Psi_{q-p}\right\rangle .
\label{DCP6}%
\end{equation}

For all $X^{p}\in%
{\textstyle\bigwedge\nolimits^{p}}
V$ and $\Phi_{q}\in%
{\textstyle\bigwedge\nolimits^{q}}
V^{\ast}$ with $0\leq p\leq q\leq n.$ For all $Y^{q-p}\in%
{\textstyle\bigwedge\nolimits^{q-p}}
V$, it holds
\begin{equation}
\left\langle \left\langle X^{p},\Phi_{q}\right\vert ,Y^{q-p}\right\rangle
=\left\langle \Phi_{q},\widetilde{X}^{p}\wedge Y^{q-p}\right\rangle .
\label{DCP7}%
\end{equation}

For all $X\in%
{\textstyle\bigwedge}
V$ and $\Phi,\Psi\in%
{\textstyle\bigwedge}
V^{\ast}:$
\begin{equation}
\left\langle \left\langle \Phi,X\right\vert ,\Psi\right\rangle =\left\langle
X,\widetilde{\Phi}\wedge\Psi\right\rangle . \label{DCP8}%
\end{equation}

For all $\Phi\in%
{\textstyle\bigwedge}
V^{\ast}$ and $X,Y\in%
{\textstyle\bigwedge}
V:$
\begin{equation}
\left\langle \left\langle X,\Phi\right\vert ,Y\right\rangle =\left\langle
\Phi,\widetilde{X}\wedge Y\right\rangle . \label{DCP9}%
\end{equation}

\end{proposition}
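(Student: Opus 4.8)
The plan is to prove the two homogeneous identities \eqref{DCP6} and \eqref{DCP7} first, and then obtain the general identities \eqref{DCP8} and \eqref{DCP9} from them by linearity, using the grading-compatible definitions \eqref{DCP4}--\eqref{DCP5} together with the decomposition \eqref{DSP3} of the duality scalar product. Since \eqref{DCP7} is the verbatim image of \eqref{DCP6} under the exchange $\bigwedge V\leftrightarrow\bigwedge V^{\ast}$ (the defining axioms \eqref{DCP1}--\eqref{DCP3} being symmetric under this exchange), and likewise \eqref{DCP9} is the image of \eqref{DCP8}, I will only carry out the ``$\Phi$ on the left'' versions; the others follow word for word.

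For \eqref{DCP6} I would split into the cases $p=q$ and $p<q$. If $p=q$, then $\langle\Phi_{p},X^{p}\vert=\langle\widetilde{\Phi}_{p},X^{p}\rangle$ is already a scalar by \eqref{DCP1} and $\Psi_{q-p}=\Psi_{0}\in\mathbb{R}$, so, via \eqref{DSP1} and $\widetilde{\Phi}_{p}\wedge\Psi_{0}=\Psi_{0}\widetilde{\Phi}_{p}$, both sides reduce to $\Psi_{0}\langle\widetilde{\Phi}_{p},X^{p}\rangle$, the two expressions agreeing by the symmetry recorded in \eqref{DSP2}. If $p<q$, substitute the definition \eqref{DCP2} and use the linearity \eqref{DSP8} of the scalar product:
\[
\langle\langle\Phi_{p},X^{q}\vert,\Psi_{q-p}\rangle=\frac{1}{(q-p)!}\,\langle\widetilde{\Phi}_{p}\wedge\varepsilon^{j_{1}}\wedge\cdots\wedge\varepsilon^{j_{q-p}},X^{q}\rangle\;\langle e_{j_{1}}\wedge\cdots\wedge e_{j_{q-p}},\Psi_{q-p}\rangle .
\]
By \eqref{DSP5} the last factor is the component $\Psi_{q-p}(e_{j_{1}},\ldots,e_{j_{q-p}})$, and feeding it back through the basis expansion $\Psi_{q-p}=\frac{1}{(q-p)!}\Psi_{q-p}(e_{j_{1}},\ldots,e_{j_{q-p}})\,\varepsilon^{j_{1}}\wedge\cdots\wedge\varepsilon^{j_{q-p}}$ of a $(q-p)$-form (recalled in Appendix A), together with linearity of $\langle\,\cdot\,,X^{q}\rangle$ in the first slot, reassembles $\widetilde{\Phi}_{p}\wedge\Psi_{q-p}$ inside the bracket; the right-hand side then becomes $\langle\widetilde{\Phi}_{p}\wedge\Psi_{q-p},X^{q}\rangle=\langle X^{q},\widetilde{\Phi}_{p}\wedge\Psi_{q-p}\rangle$, as desired.

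For the general identity \eqref{DCP8}, decompose $\Phi$, $X$, $\Psi$ into their homogeneous parts. By \eqref{DCP4} the degree-$k$ component of $\langle\Phi,X\vert$ is $\sum_{j}\langle\Phi_{j},X^{k+j}\vert$, so \eqref{DSP3} gives $\langle\langle\Phi,X\vert,\Psi\rangle=\sum_{k,j}\langle\langle\Phi_{j},X^{k+j}\vert,\Psi_{k}\rangle$, and \eqref{DCP6} rewrites each summand as $\langle X^{k+j},\widetilde{\Phi}_{j}\wedge\Psi_{k}\rangle$. On the other side, $\widetilde{\Phi}\wedge\Psi=\sum_{j,k}\widetilde{\Phi}_{j}\wedge\Psi_{k}$ with the $(j,k)$ summand homogeneous of degree $j+k$, so \eqref{DSP3} gives $\langle X,\widetilde{\Phi}\wedge\Psi\rangle=\sum_{j,k}\langle X^{j+k},\widetilde{\Phi}_{j}\wedge\Psi_{k}\rangle$ --- the same double sum (contributions with $j+k>n$ vanish on both sides). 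Identity \eqref{DCP9} is obtained in exactly the same fashion from \eqref{DCP7}.

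I expect the only genuinely delicate points to be: in the $p<q$ case of \eqref{DCP6}, recognizing the summation over $\varepsilon^{j_{1}}\wedge\cdots\wedge\varepsilon^{j_{q-p}}$ as the basis expansion of $\Psi_{q-p}$, so that the scalar products before and after reassembly literally coincide; and, in the passage to \eqref{DCP8}, matching the double summation produced by \eqref{DCP4} with the bihomogeneous splitting of $\widetilde{\Phi}\wedge\Psi$, which works precisely because the duality scalar product annihilates all cross-degree terms. Everything else is a routine unwinding of the definitions.
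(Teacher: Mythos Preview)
Your proof of \eqref{DCP6} is correct and follows essentially the same route as the paper: expand $\langle\Phi_{p},X^{q}\vert$ via the defining formula \eqref{DCP2}, pull the scalar coefficients through by linearity, and recognise the resulting sum as the basis expansion of $\Psi_{q-p}$. The paper proves only \eqref{DCP6} and dismisses the remaining identities as ``somewhat analogous''; you go further by explicitly separating the boundary case $p=q$ (handled via \eqref{DCP1}) and by spelling out the passage from the homogeneous identity \eqref{DCP6} to the general one \eqref{DCP8} through the bigraded decomposition, which the paper does not do.
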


\begin{proposition}
The left contracted product satisfies the distributive laws on the left and on
the right.

For all $\Phi,\Psi\in%
{\textstyle\bigwedge}
V^{\ast}$ and $X,Y\in%
{\textstyle\bigwedge}
V:$
\begin{align}
\left\langle (\Phi+\Psi),X\right\vert  &  =\left\langle \Phi,X\right\vert
+\left\langle \Psi,X\right\vert ,\nonumber\\
\left\langle \Phi,(X+Y)\right\vert  &  =\left\langle \Phi,X\right\vert
+\left\langle \Phi,Y\right\vert . \label{DCP10}%
\end{align}

For all $X,Y\in%
{\textstyle\bigwedge}
V$ and $\Phi,\Psi\in%
{\textstyle\bigwedge}
V^{\ast}:$
\begin{align}
\left\langle (X+Y),\Phi\right\vert  &  =\left\langle X,\Phi\right\vert
+\left\langle Y,\Phi\right\vert ,\nonumber\\
\left\langle X,(\Phi+\Psi)\right\vert  &  =\left\langle X,\Phi\right\vert
+\left\langle X,\Psi\right\vert . \label{DCP11}%
\end{align}

\end{proposition}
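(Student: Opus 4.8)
The plan is to deduce the distributive laws from the characterizing identities (\ref{DCP8}) and (\ref{DCP9}) of the preceding proposition, together with the non-degeneracy (\ref{DSP7}) and the bilinearity (\ref{DSP8}) of the duality scalar product; this avoids manipulating the frame-dependent formulas (\ref{DCP2})--(\ref{DCP3}) directly.

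First I would treat $\langle(\Phi+\Psi),X\vert$, which is a multivector. For an arbitrary $\Theta\in\bigwedge V^{\ast}$, identity (\ref{DCP8}) gives $\langle\langle(\Phi+\Psi),X\vert,\Theta\rangle=\langle X,\widetilde{(\Phi+\Psi)}\wedge\Theta\rangle$. Since reversion is linear and the exterior product distributes over addition, $\widetilde{(\Phi+\Psi)}\wedge\Theta=\widetilde{\Phi}\wedge\Theta+\widetilde{\Psi}\wedge\Theta$, so by (\ref{DSP8}) this equals $\langle X,\widetilde{\Phi}\wedge\Theta\rangle+\langle X,\widetilde{\Psi}\wedge\Theta\rangle$. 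Applying (\ref{DCP8}) to each summand and then (\ref{DSP8}) once more on the scalar-product side, the result is $\langle\langle\Phi,X\vert+\langle\Psi,X\vert,\Theta\rangle$. Hence the multivector $\langle(\Phi+\Psi),X\vert-\langle\Phi,X\vert-\langle\Psi,X\vert$ pairs to zero with every $\Theta\in\bigwedge V^{\ast}$, and the second line of (\ref{DSP7}) forces it to vanish. The identity $\langle\Phi,(X+Y)\vert=\langle\Phi,X\vert+\langle\Phi,Y\vert$ is obtained the same way, the only change being that the splitting $\langle X+Y,\widetilde{\Phi}\wedge\Theta\rangle=\langle X,\widetilde{\Phi}\wedge\Theta\rangle+\langle Y,\widetilde{\Phi}\wedge\Theta\rangle$ now comes from (\ref{DSP8}) applied in its first slot.

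The two identities for $\langle X,\Phi\vert$ and $\langle(X+Y),\Phi\vert$ (both multiforms) are the mirror image: one tests against an arbitrary $Y'\in\bigwedge V$, invokes (\ref{DCP9}) in place of (\ref{DCP8}), uses linearity of reversion and of the wedge product and bilinearity of the duality scalar product, and finally applies the first line of (\ref{DSP7}) to conclude that the difference of the two sides vanishes.

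There is essentially no obstacle here; the only point needing a little care is the logical structure of the ``test against all arguments'' step --- one must be sure that (\ref{DCP8})--(\ref{DCP9}) are being used with a general second entry, not just a homogeneous one, so that non-degeneracy can be applied. If a proof not relying on non-degeneracy is preferred, the statements follow just as well by decomposing $\Phi,\Psi$ and $X,Y$ into homogeneous parts via (\ref{DCP4})--(\ref{DCP5}) (the homogeneous decomposition of a sum being the sum of the decompositions) and then checking linearity of the defining formulas (\ref{DCP1}) --- through bilinearity (\ref{DSP8}) of $\langle\,,\,\rangle$ --- and (\ref{DCP2})--(\ref{DCP3}), whose right-hand sides are manifestly linear in $\Phi_{p}$, respectively $X^{q}$; this route is longer but entirely mechanical.
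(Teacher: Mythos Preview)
Your argument is correct. The paper itself does not give a separate proof of (\ref{DCP10})--(\ref{DCP11}); it simply states that the proofs are ``somewhat analogous'' to the explicit proof of (\ref{DCP6}), which works directly from the defining axiom (\ref{DCP2}) via a frame expansion. Your primary route is genuinely different: you bypass the frame formulas entirely by testing the identity against an arbitrary $\Theta$ (resp.\ $Y'$), invoking the already-established characterizations (\ref{DCP8})--(\ref{DCP9}), and then concluding via non-degeneracy (\ref{DSP7}). This is cleaner and more conceptual than the paper's implied approach, and there is no circularity, since (\ref{DCP8})--(\ref{DCP9}) are obtained from (\ref{DCP6})--(\ref{DCP7}) by decomposing into homogeneous parts and using only the bilinearity of the duality scalar product, not the distributivity of the contracted product you are proving. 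The alternative you sketch at the end---reducing to homogeneous components via (\ref{DCP4})--(\ref{DCP5}) and reading off linearity from (\ref{DCP1})--(\ref{DCP3})---is exactly the ``analogous'' argument the paper has in mind.
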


\begin{proof}
We present only the proof of the property given by Eq.(\ref{DCP6}), the other
proofs being somewhat analogous.

First note that if $\Psi_{q-p}\in%
{\textstyle\bigwedge\nolimits^{q-p}}
V^{\ast}$ and $\left\{  e_{j},\varepsilon^{j}\right\}  $ is any pair of dual
bases for $V$ and $V^{\ast},$\ we can write
\[
\Psi_{q-p}=\frac{1}{\left(  q-p\right)  !}\left\langle \Psi_{q-p},e_{j_{1}%
}\wedge...\wedge e_{j_{q-p}}\right\rangle \varepsilon^{j_{1}}\wedge
...\wedge\varepsilon^{j_{q-p}}.
\]
Then, using the axiom of the Eq.(\ref{DCP2}) and the above equation we have%
\[%
\begin{array}
[c]{l}%
\left\langle \left\langle \Phi_{p},X^{q}\right\vert ,\Psi_{q-p}\right\rangle
\\
=\frac{1}{\left(  q-p\right)  !}\left\langle \left\langle \widetilde{\Phi}%
_{p}\wedge\varepsilon^{j_{1}}\wedge\cdots\wedge\varepsilon^{j_{q-p}}%
,X^{q}\right\rangle e_{j_{1}}\wedge\cdots\wedge e_{j_{q-p}},\Psi
_{q-p}\right\rangle \\
=\frac{1}{\left(  q-p\right)  !}\left\langle \widetilde{\Phi}_{p}%
\wedge\varepsilon^{j_{1}}\wedge\cdots\wedge\varepsilon^{j_{q-p}}%
,X^{q}\right\rangle \left\langle e_{j_{1}}\wedge\cdots\wedge e_{j_{q-p}}%
,\Psi_{q-p}\right\rangle \\
=\left\langle X^{q},\frac{1}{\left(  q-p\right)  !}\left\langle \Psi
_{q-p},e_{j_{1}}\wedge\cdots\wedge e_{j_{q-p}}\right\rangle \widetilde{\Phi
}_{p}\wedge\varepsilon^{j_{1}}\wedge\cdots\wedge\varepsilon^{j_{q-p}%
}\right\rangle \\
=\left\langle X^{q},\widetilde{\Phi}_{p}\wedge\frac{1}{\left(  q-p\right)
!}\left\langle \Psi_{q-p},e_{j_{1}}\wedge\cdots\wedge e_{j_{q-p}}\right\rangle
\varepsilon^{j_{1}}\wedge\cdots\wedge\varepsilon^{j_{q-p}}\right\rangle \\
=\left\langle X^{q},\widetilde{\Phi}_{p}\wedge\Psi_{q-p}\right\rangle ,
\end{array}
\]
and the result is proved.
\end{proof}

\subsubsection{Right Contracted Product}

\begin{definition}
The right contracted product of a multiform $\Phi$ with a multivector $X$ (or,
a multivector $X$ with a multiform $\Phi$) is the multiform $\left\vert
\Phi,X\right\rangle $ (respectively, the multivector $\left\vert
X,\Phi\right\rangle $) given by the following axioms:

For all $\Phi_{p}\in%
{\textstyle\bigwedge\nolimits^{p}}
V^{\ast}$ and $X^{p}\in%
{\textstyle\bigwedge\nolimits^{p}}
V%
\ddot{}%
$ with $n\geq p\geq0:$%
\begin{equation}
\left\vert \Phi_{p},X^{p}\right\rangle =\left\vert X^{p},\Phi_{p}\right\rangle
=\left\langle \widetilde{\Phi}_{p},X^{p}\right\rangle =\left\langle \Phi
_{p},\widetilde{X}^{p}\right\rangle . \label{DCP12}%
\end{equation}

For all $\Phi_{p}\in%
{\textstyle\bigwedge\nolimits^{p}}
V^{\ast}$ and $X^{q}\in%
{\textstyle\bigwedge\nolimits^{q}}
V%
\ddot{}%
$ (or $X^{p}\in%
{\textstyle\bigwedge\nolimits^{p}}
V$ and $\Phi_{q}\in%
{\textstyle\bigwedge\nolimits^{q}}
V^{\ast}$) with $n\geq p>q\geq0:$%
\begin{align}
\left\vert \Phi_{p},X^{q}\right\rangle  &  =\frac{1}{(p-q)!}\left\langle
\Phi_{p},e_{j_{1}}\wedge\cdots\wedge e_{j_{p-q}}\wedge\widetilde{X}%
^{q}\right\rangle \varepsilon^{j_{1}}\wedge\cdots\wedge\varepsilon^{j_{p-q}%
},\label{DCP13}\\
\left\vert X^{p},\Phi_{q}\right\rangle  &  =\frac{1}{(p-q)!}\left\langle
X^{p},\varepsilon^{j_{1}}\wedge\cdots\wedge\varepsilon^{j_{p-q}}%
\wedge\widetilde{\Phi}_{q}\right\rangle e_{j_{1}}\wedge\cdots\wedge
e_{j_{p-q}}, \label{DCP14}%
\end{align}
where $\left\{  e_{j},\varepsilon^{j}\right\}  $ is any pair of dual bases for
$V$ and $V^{\ast}$.

For all $\Phi\in%
{\textstyle\bigwedge}
V^{\ast}$ and $X\in%
{\textstyle\bigwedge}
V:$ if $\Phi=\Phi_{0}+\Phi_{1}+\cdots+\Phi_{n}$ and $X=X^{0}+X^{1}%
+\cdots+X^{n}$ it is:%
\begin{align}
\left\vert \Phi,X\right\rangle  &  =\overset{n}{\underset{k=0}{%
{\textstyle\sum}
}}\underset{j=0}{\overset{n-k}{%
{\textstyle\sum}
}}\left\vert \Phi_{k+j},X^{j}\right\rangle ,\label{DCP15}\\
\left\vert X,\Phi\right\rangle  &  =\overset{n}{\underset{k=0}{%
{\textstyle\sum}
}}\underset{j=0}{\overset{n-k}{%
{\textstyle\sum}
}}\left\vert X^{k+j},\Phi_{j}\right\rangle . \label{DCP16}%
\end{align}

\end{definition}

The right contracted product satisfies properties similar to the left
contracted product (see \cite{fmcr1}).

\subsection{Duality Adjoint of Extensors}

In the Appendix D we briefly recall the extensor concept and the exterior
product of extensors associated with a real vector space $V$ of finite
dimension which is need for the following. Let $%
{\textstyle\bigwedge\nolimits^{\diamond}}
V$ be any sum of homogeneous subspaces of $%
{\textstyle\bigwedge}
V.$ There exist $\mu$ integer numbers $p_{1},\ldots,p_{\mu}$ with $0\leq
p_{1}<\cdots<p_{\mu}\leq n$ such that $%
{\textstyle\bigwedge\nolimits^{\diamond}}
V=%
{\textstyle\bigwedge\nolimits^{p_{1}}}
V+\cdots+%
{\textstyle\bigwedge\nolimits^{p_{\mu}}}
V.$ Analogously, if $%
{\textstyle\bigwedge\nolimits^{\diamond}}
V^{\ast}$ is any sum of homogeneous subspaces of $%
{\textstyle\bigwedge}
V^{\ast},$ then there exist $\nu$ integer numbers $q_{1},\ldots,q_{\nu}$ with
$0\leq q_{1}<\cdots<q_{\nu}\leq n$ such that $%
{\textstyle\bigwedge\nolimits^{\diamond}}
V^{\ast}=%
{\textstyle\bigwedge\nolimits^{q_{1}}}
V^{\ast}+\cdots+%
{\textstyle\bigwedge\nolimits^{q_{\nu}}}
V^{\ast}.$

\begin{definition}
The linear mappings
\[%
{\textstyle\bigwedge}
V\ni X\mapsto\left\langle X\right\rangle ^{%
{\textstyle\bigwedge\nolimits^{\diamond}}
V}\in%
{\textstyle\bigwedge}
V\text{ and }%
{\textstyle\bigwedge}
V^{\ast}\ni\Phi\mapsto\left\langle \Phi\right\rangle _{%
{\textstyle\bigwedge\nolimits^{\diamond}}
V^{\ast}}\in%
{\textstyle\bigwedge}
V^{\ast}%
\]
such that if $%
{\textstyle\bigwedge\nolimits^{\diamond}}
V=%
{\textstyle\bigwedge\nolimits^{p_{1}}}
V+\cdots+%
{\textstyle\bigwedge\nolimits^{p_{\mu}}}
V$ and $%
{\textstyle\bigwedge\nolimits^{\diamond}}
V^{\ast}=%
{\textstyle\bigwedge\nolimits^{q_{1}}}
V^{\ast}+\cdots+%
{\textstyle\bigwedge\nolimits^{q_{\nu}}}
V^{\ast},$ then
\begin{equation}
\left\langle X\right\rangle ^{%
{\textstyle\bigwedge\nolimits^{\diamond}}
V}=\left\langle X\right\rangle ^{p_{1}}+\cdots+\left\langle X\right\rangle
^{p_{\mu}}\text{ and }\left\langle \Phi\right\rangle _{%
{\textstyle\bigwedge\nolimits^{\diamond}}
V^{\ast}}=\left\langle \Phi\right\rangle _{q_{1}}+\cdots+\left\langle
\Phi\right\rangle _{q_{\nu}} \label{DAE1}%
\end{equation}
are called the \emph{\ }$%
{\textstyle\bigwedge\nolimits^{\diamond}}
V$\emph{-part operator} \emph{for multivectors} and $%
{\textstyle\bigwedge\nolimits^{\diamond}}
V^{\ast}$\emph{-part operator for multiforms,} respectively.
\end{definition}

It should be evident that for all $X\in%
{\textstyle\bigwedge}
V$ and $\Phi\in%
{\textstyle\bigwedge}
V^{\ast}:$%
\begin{align}
\left\langle X\right\rangle ^{%
{\textstyle\bigwedge\nolimits^{k}}
V}  &  =\left\langle X\right\rangle ^{k},\label{DAE2}\\
\left\langle \Phi\right\rangle _{%
{\textstyle\bigwedge\nolimits^{k}}
V^{\ast}}  &  =\left\langle \Phi\right\rangle _{k}. \label{DAE3}%
\end{align}
Thus, $%
{\textstyle\bigwedge\nolimits^{\diamond}}
V$-part operator and $%
{\textstyle\bigwedge\nolimits^{\diamond}}
V^{\ast}$-part operator are the generalizations of $\left\langle \left.
{}\right.  \right\rangle ^{k}$ and $\left\langle \left.  {}\right.
\right\rangle _{k}.$

\begin{definition}
Let $\tau$ be a multivector extensor of either one multivector variable or one
multiform variable. The \emph{duality adjoint of }$\tau$ is given by:
\end{definition}

If $\tau\in ext(%
{\textstyle\bigwedge\nolimits_{1}^{\diamond}}
V;%
{\textstyle\bigwedge\nolimits_{2}^{\diamond}}
V)$ (or, $\tau\in ext(%
{\textstyle\bigwedge\nolimits_{3}^{\diamond}}
V^{\ast};%
{\textstyle\bigwedge\nolimits_{4}^{\diamond}}
V)$), then $\tau^{\bigtriangleup}\in ext(%
{\textstyle\bigwedge\nolimits_{2}^{\diamond}}
V^{\ast};$ $%
{\textstyle\bigwedge\nolimits_{1}^{\diamond}}
V^{\ast})$ (respectively, $\tau^{\bigtriangleup}\in ext(%
{\textstyle\bigwedge\nolimits_{4}^{\diamond}}
V^{\ast};%
{\textstyle\bigwedge\nolimits_{3}^{\diamond}}
V)$) defined by
\begin{align}
\tau^{\bigtriangleup}(\Phi)  &  =\left\langle \Phi,\tau(\left\langle
1\right\rangle ^{%
{\textstyle\bigwedge\nolimits_{1}^{\diamond}}
V})\right\rangle \nonumber\\
&  +\overset{n}{\underset{k=1}{%
{\textstyle\sum}
}}\frac{1}{k!}\left\langle \Phi,\tau(\left\langle e_{j_{1}}\wedge\cdots\wedge
e_{j_{k}}\right\rangle ^{%
{\textstyle\bigwedge\nolimits_{1}^{\diamond}}
V})\right\rangle \varepsilon^{j_{1}}\wedge\cdots\wedge\varepsilon^{j_{k}}
\label{DAE4}%
\end{align}
for each $\Phi\in%
{\textstyle\bigwedge\nolimits_{2}^{\diamond}}
V^{\ast}$ (respectively, for each $\Phi\in%
{\textstyle\bigwedge\nolimits_{4}^{\diamond}}
V^{\ast}$)%

\begin{align}
\tau^{\bigtriangleup}(\Phi)  &  =\left\langle \Phi,\tau(\left\langle
1\right\rangle _{%
{\textstyle\bigwedge\nolimits_{3}^{\diamond}}
V^{\ast}})\right\rangle \nonumber\\
&  +\overset{n}{\underset{k=1}{%
{\textstyle\sum}
}}\frac{1}{k!}\left\langle \Phi,\tau(\left\langle \varepsilon^{j_{1}}%
\wedge\cdots\wedge\varepsilon^{j_{k}}\right\rangle _{%
{\textstyle\bigwedge\nolimits_{3}^{\diamond}}
V^{\ast}})\right\rangle e_{j_{1}}\wedge\cdots\wedge e_{j_{k}} \label{DAE5}%
\end{align}

The basic properties of the adjoint of multivector extensors are given by:

\begin{proposition}
Let $\tau\in ext(%
{\textstyle\bigwedge\nolimits_{1}^{\diamond}}
V;%
{\textstyle\bigwedge\nolimits_{2}^{\diamond}}
V).$ For all $X\in%
{\textstyle\bigwedge\nolimits_{1}^{\diamond}}
V$ and $\Phi\in%
{\textstyle\bigwedge\nolimits_{2}^{\diamond}}
V^{\ast},$ it holds%
\begin{equation}
\left\langle \tau(X),\Phi\right\rangle =\left\langle X,\tau^{\bigtriangleup
}(\Phi)\right\rangle . \label{DAE6}%
\end{equation}

Let $\tau\in ext(%
{\textstyle\bigwedge\nolimits_{3}^{\diamond}}
V^{\ast};%
{\textstyle\bigwedge\nolimits_{4}^{\diamond}}
V).$ For all $\Phi\in%
{\textstyle\bigwedge\nolimits_{3}^{\diamond}}
V^{\ast}$ and $\Psi\in%
{\textstyle\bigwedge\nolimits_{4}^{\diamond}}
V^{\ast},$ it holds
\begin{equation}
\left\langle \tau(\Phi),\Psi\right\rangle =\left\langle \Phi,\tau
^{\bigtriangleup}(\Psi)\right\rangle . \label{DAE7}%
\end{equation}

\end{proposition}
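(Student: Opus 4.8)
The plan is to reduce the identity, by bilinearity of the duality scalar product (Eq.~(\ref{DSP8})) together with linearity of $\tau$ and of the part operators, to a basis computation, and then to verify it term by term against the defining formula (\ref{DAE4}).

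First I would record a ``resolution of the identity'' for multivectors: for any pair of dual bases $\{e_{j},\varepsilon^{j}\}$ and any $X^{k}\in\bigwedge\nolimits^{k}V$ one has $X^{k}=\frac{1}{k!}\langle\varepsilon^{j_{1}}\wedge\cdots\wedge\varepsilon^{j_{k}},X^{k}\rangle\,e_{j_{1}}\wedge\cdots\wedge e_{j_{k}}$. This follows by writing $X^{k}$ in antisymmetric components, pairing both sides with an arbitrary $\varepsilon^{i_{1}}\wedge\cdots\wedge\varepsilon^{i_{k}}$, evaluating the pairing with (\ref{DSP6}) (a determinant of Kronecker deltas), and invoking non-degeneracy (\ref{DSP7}); the combinatorial factor $1/k!$ is exactly what absorbs the $k!$ permutations produced by the determinant. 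Summing over $k$ and using (\ref{DSP3}) gives the expansion of a general $X$, and when $X\in\bigwedge\nolimits_{1}^{\diamond}V$ only the grades $p_{1},\dots,p_{\mu}$ survive, so $X=\sum_{k}\frac{1}{k!}\langle\varepsilon^{j_{1}}\wedge\cdots\wedge\varepsilon^{j_{k}},X\rangle\,\langle e_{j_{1}}\wedge\cdots\wedge e_{j_{k}}\rangle^{\bigwedge\nolimits_{1}^{\diamond}V}$, the $k=0$ term being $\langle 1,X\rangle\,\langle 1\rangle^{\bigwedge\nolimits_{1}^{\diamond}V}$, since the part operator kills precisely the grades not occurring in $X$.

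Next I would push $\tau$ through this expansion by linearity, obtaining $\tau(X)=\sum_{k}\frac{1}{k!}\langle\varepsilon^{j_{1}}\wedge\cdots\wedge\varepsilon^{j_{k}},X\rangle\,\tau\big(\langle e_{j_{1}}\wedge\cdots\wedge e_{j_{k}}\rangle^{\bigwedge\nolimits_{1}^{\diamond}V}\big)$, and then pair this with $\Phi$ using bilinearity and the symmetry $\langle\tau(\cdot),\Phi\rangle=\langle\Phi,\tau(\cdot)\rangle$ of the scalar product (from (\ref{DSP2}), (\ref{DSP3})). On the other side I would substitute the defining formula (\ref{DAE4}) for $\tau^{\bigtriangleup}(\Phi)$ into $\langle X,\tau^{\bigtriangleup}(\Phi)\rangle$, distribute over the finite sum, and use $\langle X,\varepsilon^{j_{1}}\wedge\cdots\wedge\varepsilon^{j_{k}}\rangle=\langle\varepsilon^{j_{1}}\wedge\cdots\wedge\varepsilon^{j_{k}},X\rangle$ once more. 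The two expressions then match coefficient by coefficient, which proves (\ref{DAE6}).

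Finally, (\ref{DAE7}) is obtained by running the same argument with $V$ and $V^{\ast}$ interchanged: expand $\Phi\in\bigwedge\nolimits_{3}^{\diamond}V^{\ast}$ as $\Phi=\sum_{k}\frac{1}{k!}\langle\Phi,e_{j_{1}}\wedge\cdots\wedge e_{j_{k}}\rangle\,\langle\varepsilon^{j_{1}}\wedge\cdots\wedge\varepsilon^{j_{k}}\rangle_{\bigwedge\nolimits_{3}^{\diamond}V^{\ast}}$, apply $\tau$, pair with $\Psi$, and compare with the defining formula (\ref{DAE5}). The only delicate points are getting the factorials and the antisymmetrization in the resolution of the identity exactly right, and the bookkeeping of the part operators so that grades outside $\{p_{1},\dots,p_{\mu}\}$ (resp. $\{q_{1},\dots,q_{\nu}\}$) are correctly discarded; once these are settled the identity is just a rearrangement of a finite sum, so I expect no substantial obstacle beyond that care.
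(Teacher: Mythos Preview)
Your proof is correct, but it proceeds differently from the paper's. The paper argues by decomposing $X$ and $\Phi$ into their homogeneous components $X=X^{p_{1}}+\cdots+X^{p_{\mu}}$ and $\Phi=\Phi_{q_{1}}+\cdots+\Phi_{q_{\mu}}$, restricting $\tau$ to each $\bigwedge\nolimits^{p_{i}}V$, and then reducing $\langle\tau(X),\Phi\rangle$ to a sum of pairings $\langle\tau_{s_{l}}(X^{s_{l}}),\Phi^{s_{l}}\rangle$ between elements of the same grade, for which the adjoint relation is invoked as a known fact about linear maps between single homogeneous pieces. Your route instead expands $X$ directly in a dual basis (the ``resolution of the identity''), pushes $\tau$ through by linearity, and then matches the resulting expression term by term against the explicit defining sum (\ref{DAE4}) for $\tau^{\bigtriangleup}(\Phi)$. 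The advantage of your approach is that it is entirely self-contained: it derives (\ref{DAE6}) straight from the definition without appealing to any auxiliary adjoint identity in the homogeneous case, and the role of the part operators $\langle\,\cdot\,\rangle^{\bigwedge\nolimits_{1}^{\diamond}V}$ in (\ref{DAE4}) becomes transparent. The paper's approach, by contrast, highlights the grade-by-grade structure but leaves the homogeneous base case essentially unproved in the text.
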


\begin{proof}
We present only the proof of the property given by Eq.(\ref{DAE6}), the other
one is similar. First, observe that if $X\in%
{\textstyle\bigwedge\nolimits_{1}^{\diamond}}
V,$ then there exists $\mu$ integer numbers $p_{1},\ldots,p_{\mu}$ with $0\leq
p_{1}<\cdots<p_{\mu}\leq n$ such that $X=X^{p_{1}}+...+X^{p_{\mu}}$ with
$X^{p_{i}}\in%
{\textstyle\bigwedge\nolimits^{p_{i}}}
V$, where the $%
{\textstyle\bigwedge\nolimits^{p_{i}}}
V$ are homogeneous subspace of $%
{\textstyle\bigwedge}
V,$ thus if $\tau\in ext(%
{\textstyle\bigwedge\nolimits_{1}^{\diamond}}
V;%
{\textstyle\bigwedge\nolimits_{2}^{\diamond}}
V)$, we have that%
\[
\tau:%
{\textstyle\bigwedge\nolimits_{1}^{\diamond}}
V\rightarrow%
{\textstyle\bigwedge\nolimits_{2}^{\diamond}}
V\qquad\text{or\qquad}\tau:%
{\textstyle\bigwedge\nolimits^{p_{1}}}
V+...+%
{\textstyle\bigwedge\nolimits^{p_{\mu}}}
V\rightarrow%
{\textstyle\bigwedge\nolimits^{q_{1}}}
V+...+%
{\textstyle\bigwedge\nolimits^{q_{\mu}}}
V,
\]
where $%
{\textstyle\bigwedge\nolimits_{1}^{\diamond}}
V=%
{\textstyle\bigwedge\nolimits^{p_{1}}}
V+...+%
{\textstyle\bigwedge\nolimits^{p_{\mu}}}
V$ and $%
{\textstyle\bigwedge\nolimits_{2}^{\diamond}}
V=%
{\textstyle\bigwedge\nolimits^{q_{1}}}
V+...+%
{\textstyle\bigwedge\nolimits^{q_{\mu}}}
V$,\ such that
\[
\tau\left(  X\right)  =\tau\left(  X^{p_{1}}+...+X^{p_{\mu}}\right)
=\tau\left(  X^{p_{1}}\right)  +...+\tau\left(  X^{p_{\mu}}\right)  \in%
{\textstyle\bigwedge\nolimits^{q_{1}}}
V+...+%
{\textstyle\bigwedge\nolimits^{q_{\mu}}}
V,
\]
thus, we can put
\[
\tau\left(  X^{p_{i}}\right)  \in%
{\textstyle\bigwedge\nolimits^{q_{i}}}
V\qquad\text{or\qquad}\tau\left\vert _{%
{\textstyle\bigwedge\nolimits^{p_{i}}}
V}\right.  \equiv\tau_{p_{i}}\in ext(%
{\textstyle\bigwedge\nolimits^{p_{i}}}
V;%
{\textstyle\bigwedge\nolimits^{q_{i}}}
V).
\]
Now, with the above observation, we have
\begin{equation}%
\begin{array}
[c]{ll}%
\left\langle \tau(X),\Phi\right\rangle  & =\left\langle \tau\left(  X^{p_{1}%
}+...+X^{p_{\mu}}\right)  ,\Phi_{q_{1}}+...+\Phi_{q_{\mu}}\right\rangle
\medskip\\
& =\left\langle \tau_{p_{1}}\left(  X^{p_{1}}\right)  +...+\tau_{p_{\mu}%
}\left(  X^{p_{\mu}}\right)  ,\Phi_{q_{1}}+...+\Phi_{q_{\mu}}\right\rangle
\medskip\\
& =%
{\textstyle\sum\limits_{i,j}}
\left\langle \tau_{p_{i}}(X^{p_{i}}),\Phi^{q_{j}}\right\rangle ,
\end{array}
\label{EAN1}%
\end{equation}
but, by axiom \ given in Eq.(\ref{DSP3}), we have that%
\[
\left\langle \tau_{p_{i}}(X^{p_{i}}),\Phi^{q_{j}}\right\rangle \left\{
\begin{array}
[c]{l}%
=0\qquad\text{if \ }p_{i}\neq q_{j}\\
\neq0\qquad\text{if \ }p_{i}=q_{j}=s_{l}%
\end{array}
\right.  ,
\]
and from Eq. (\ref{EAN1}), we can write%
\begin{equation}
\left\langle \tau(X),\Phi\right\rangle =\left\langle \tau_{s_{1}}(X^{s_{1}%
}),\Phi^{s_{1}}\right\rangle +...+\left\langle \tau_{s_{\mu}}(X^{s_{\mu}%
}),\Phi^{s_{\mu}}\right\rangle . \label{EN2}%
\end{equation}
Now, if we see $\left\langle \tau_{s_{l}}(X^{s_{l}}),\Phi^{s_{l}}\right\rangle
$ as a scalar product of $s_{l}$-vectors, then we have%
\begin{equation}
\left\langle \tau_{s_{l}}(X^{s_{l}}),\Phi^{s_{l}}\right\rangle =\left\langle
X^{s_{l}},\tau_{s_{l}}^{\bigtriangleup}\Phi^{s_{l}}\right\rangle , \label{EN3}%
\end{equation}
and from Eqs. (\ref{EN2}), (\ref{EN3}), and taking into account Eq.
(\ref{DAE6}) we have%
\[%
\begin{array}
[c]{ll}%
\left\langle \tau(X),\Phi\right\rangle  & =\left\langle \tau_{s_{1}}(X^{s_{1}%
}),\Phi^{s_{1}}\right\rangle +...+\left\langle \tau_{s_{\mu}}(X^{s_{\mu}%
}),\Phi^{s_{\mu}}\right\rangle \\
& =\left\langle X^{s_{1}},\tau_{s_{1}}^{\bigtriangleup}\Phi^{s_{1}%
}\right\rangle +...+\left\langle X^{s_{\mu}},\tau_{s_{\mu}}^{\bigtriangleup
}\Phi^{s_{\mu}}\right\rangle \\
& =\left\langle X^{s_{1}}+...+X^{s_{\mu}},\tau_{s_{1}}^{\bigtriangleup}%
\Phi^{s_{1}}+...+\tau_{s_{\mu}}^{\bigtriangleup}\Phi^{s_{\mu}}\right\rangle \\
& =\left\langle X,\tau^{\bigtriangleup}\Phi\right\rangle ,
\end{array}
\]
and the result is proved.
\end{proof}

\begin{definition}
Let $\sigma$ be a multiform extensor of either one multivector variable or one
multiform variable. \ If $\sigma\in ext(%
{\textstyle\bigwedge\nolimits_{1}^{\diamond}}
V;%
{\textstyle\bigwedge\nolimits_{2}^{\diamond}}
V^{\ast})$ (or, $\sigma\in ext(%
{\textstyle\bigwedge\nolimits_{3}^{\diamond}}
V^{\ast};%
{\textstyle\bigwedge\nolimits_{4}^{\diamond}}
V^{\ast})$), then The \emph{duality adjoint operator } $\left(  \left.
{}\right.  \right)  ^{\bigtriangleup}$ of $\sigma^{\bigtriangleup}\in ext(%
{\textstyle\bigwedge\nolimits_{2}^{\diamond}}
V;$ $%
{\textstyle\bigwedge\nolimits_{1}^{\diamond}}
V^{\ast})$ (respectively, $\sigma^{\bigtriangleup}\in ext(%
{\textstyle\bigwedge\nolimits_{4}^{\diamond}}
V;%
{\textstyle\bigwedge\nolimits_{3}^{\diamond}}
V)$) is given by%
\begin{align}
\sigma^{\bigtriangleup}(X)  &  =\left\langle X,\sigma(\left\langle
1\right\rangle ^{%
{\textstyle\bigwedge\nolimits_{1}^{\diamond}}
V})\right\rangle \nonumber\\
&  +\overset{n}{\underset{k=1}{%
{\textstyle\sum}
}}\frac{1}{k!}\left\langle X,\sigma(\left\langle e_{j_{1}}\wedge\cdots\wedge
e_{j_{k}}\right\rangle ^{%
{\textstyle\bigwedge\nolimits_{1}^{\diamond}}
V})\right\rangle \varepsilon^{j_{1}}\wedge\cdots\wedge\varepsilon^{j_{k}}
\label{DAE8}%
\end{align}
for each $X\in%
{\textstyle\bigwedge\nolimits_{2}^{\diamond}}
V$ (respectively,
\begin{align}
\sigma^{\bigtriangleup}(X)  &  =\left\langle X,\sigma(\left\langle
1\right\rangle _{%
{\textstyle\bigwedge\nolimits_{3}^{\diamond}}
V^{\ast}})\right\rangle \nonumber\\
&  +\overset{n}{\underset{k=1}{%
{\textstyle\sum}
}}\frac{1}{k!}\left\langle X,\sigma(\left\langle \varepsilon^{j_{1}}%
\wedge\cdots\wedge\varepsilon^{j_{k}}\right\rangle _{%
{\textstyle\bigwedge\nolimits_{3}^{\diamond}}
V^{\ast}})\right\rangle e_{j_{1}}\wedge\cdots\wedge e_{j_{k}} \label{DAE9}%
\end{align}
for each $X\in%
{\textstyle\bigwedge\nolimits_{4}^{\diamond}}
V$).
\end{definition}

The basic properties for the adjoint of multiform extensors are given
by:\smallskip

Let $\sigma\in ext(%
{\textstyle\bigwedge\nolimits_{1}^{\diamond}}
V;%
{\textstyle\bigwedge\nolimits_{2}^{\diamond}}
V^{\ast})$. For all $X\in%
{\textstyle\bigwedge\nolimits_{1}^{\diamond}}
V$ and $Y\in%
{\textstyle\bigwedge\nolimits_{2}^{\diamond}}
V,$ it holds
\begin{equation}
\left\langle \sigma(X),Y\right\rangle =\left\langle X,\sigma^{\bigtriangleup
}(Y)\right\rangle . \label{DAE10}%
\end{equation}

Let $\sigma\in ext(%
{\textstyle\bigwedge\nolimits_{3}^{\diamond}}
V^{\ast};%
{\textstyle\bigwedge\nolimits_{4}^{\diamond}}
V^{\ast}).$ For all $\Phi\in%
{\textstyle\bigwedge\nolimits_{3}^{\diamond}}
V^{\ast}$ and $X\in%
{\textstyle\bigwedge\nolimits_{4}^{\diamond}}
V,$ it holds
\begin{equation}
\left\langle \sigma(\Phi),X\right\rangle =\left\langle \Phi,\sigma
^{\bigtriangleup}(X)\right\rangle . \label{DAE11}%
\end{equation}

\subsection{Extended Operators on Extensors}

Let $\lambda$ be an invertible linear operator on $V$. As we saw in appendix B
and C, $\lambda$ can be extended by considering two different procedures,
non-equivalent, denoted by $\underline{\lambda}$ and $\underset{\smile}%
{\gamma}$ both mapping multivectors over $V$ into multivectors over $V.$
However, it is possible to extend the action of $\underline{\lambda}$ and
$\underset{\smile}{\gamma}$\ in such a way that they map multivector extensors
over $V$ into multivector extensors over $V$.

\subsubsection{Exterior Power Extension Operators on Extensors}

\begin{definition}
Let $\lambda$ be an invertible linear operator on $V$. The exterior power
extension of a vector operator $\lambda$ on multivector extensors is the
linear mapping%
\[
\left.  \overset{\left.  {}\right.  }{ext}\right.  _{k}^{l}(V)\ni\tau
\mapsto\underline{\lambda}\tau\in\left.  \overset{\left.  {}\right.  }%
{ext}\right.  _{k}^{l}(V)
\]
such that%
\begin{equation}
\underline{\lambda}\tau(X_{1},\ldots,X_{k},\Phi^{1},\ldots,\Phi^{l}%
)=\underline{\lambda}\circ\tau(\underline{\lambda}^{-1}(X_{1}),\ldots
,\underline{\lambda}^{-1}(X_{k}),\underline{\lambda}^{\bigtriangleup}(\Phi
^{1}),\ldots,\underline{\lambda}^{\bigtriangleup}(\Phi^{l})) \label{EPO18}%
\end{equation}
for each $X_{1},\ldots,X_{k}\in%
{\textstyle\bigwedge}
V$ and $\Phi^{1},\ldots,\Phi^{l}\in%
{\textstyle\bigwedge}
V^{\ast}$.
\end{definition}

It that way $\underline{\lambda}$ can be thought as a linear multivector
extensor operator.

For instance, for $\tau\in\left.  \overset{\left.  {}\right.  }{ext}\right.
_{1}^{0}(V)$ the above definition means%
\begin{equation}
\underline{\lambda}\tau=\underline{\lambda}\circ\tau\circ\underline{\lambda
}^{-1}, \label{EPO19}%
\end{equation}
and for $\tau\in\left.  \overset{\left.  {}\right.  }{ext}\right.  _{0}%
^{1}(V)$ it yields%
\begin{equation}
\underline{\lambda}\tau=\underline{\lambda}\circ\tau\circ\underline{\lambda
}^{\bigtriangleup}. \label{EPO20}%
\end{equation}

Let $\lambda$ be an invertible linear operator on $V^{\ast}.$ Analogously to
the above case, it is possible to extend the action of $\underline{\lambda}$
in such a way to get a linear operator on $\left.  \overset{\left.
\ast\right.  }{ext}\right.  _{k}^{l}(V).$ We define%
\[
\left.  \overset{\left.  \ast\right.  }{ext}\right.  _{k}^{l}(V)\ni\tau
\mapsto\underline{\lambda}\tau\in\left.  \overset{\left.  \ast\right.  }%
{ext}\right.  _{k}^{l}(V)
\]
such that%
\begin{equation}
\underline{\lambda}\tau(X_{1},\ldots,X_{k},\Phi^{1},\ldots,\Phi^{l}%
)=\underline{\lambda}\circ\tau(\underline{\lambda}^{\bigtriangleup}%
(X_{1}),\ldots,\underline{\lambda}^{\bigtriangleup}(X_{k}),\underline{\lambda
}^{-1}(\Phi^{1}),\ldots,\underline{\lambda}^{-1}(\Phi^{l})) \label{EPO25}%
\end{equation}
for each $X_{1},\ldots,X_{k}\in%
{\textstyle\bigwedge}
V$ and $\Phi^{1},\ldots,\Phi^{l}\in%
{\textstyle\bigwedge}
V^{\ast}$.

For instance, for $\tau\in\left.  \overset{\left.  {}\right.  }{ext}\right.
_{1}^{0}(V)$ the above definition yields%
\begin{equation}
\underline{\lambda}\tau=\underline{\lambda}\circ\tau\circ\underline{\lambda
}^{\bigtriangleup}, \label{EPO26}%
\end{equation}
and for $\tau\in\left.  \overset{\left.  {}\right.  }{ext}\right.  _{0}%
^{1}(V)$ it implies that%
\begin{equation}
\underline{\lambda}\tau=\underline{\lambda}\circ\tau\circ\underline{\lambda
}^{-1}. \label{EPO27}%
\end{equation}

We give some of the properties of the action of the exterior power extension
of a vector operator $\lambda$ on multivector extensors.

\begin{proposition}
For all $\tau\in\left.  \overset{\left.  {}\right.  }{ext}\right.  _{k}%
^{l}(V)$ and $\sigma\in\left.  \overset{\left.  {}\right.  }{ext}\right.
_{r}^{s}(V):$%
\begin{equation}
\underline{\lambda}(\tau\wedge\sigma)=(\underline{\lambda}\tau)\wedge
(\underline{\lambda}\sigma). \label{EPO21}%
\end{equation}

For all $\tau\in\left.  \overset{\ast}{ext}\right.  _{k}^{l}(V)$ and
$\sigma\in\left.  \overset{\left.  {}\right.  }{ext}\right.  _{r}^{s}(V):$%
\begin{align}
\underline{\lambda}\left\langle \tau,\sigma\right\rangle  &  =\left\langle
\underline{\lambda}^{-\bigtriangleup}\tau,\underline{\lambda}\sigma
\right\rangle ,\label{EPO22}\\
\underline{\lambda}\left\langle \tau,\sigma\right\vert  &  =\left\langle
\underline{\lambda}^{-\bigtriangleup}\tau,\underline{\lambda}\sigma\right\vert
,\label{EPO23}\\
\underline{\lambda}\left\vert \sigma,\tau\right\rangle  &  =\left\vert
\underline{\lambda}\sigma,\underline{\lambda}^{-\bigtriangleup}\tau
\right\rangle . \label{EPO24}%
\end{align}

\end{proposition}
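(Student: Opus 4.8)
The plan is to reduce each of the four identities to a ``pointwise'' statement about the \emph{values} of the extensors on arbitrary arguments, and then to verify those pointwise statements using only three ingredients: that the exterior power extension $\underline{\lambda}$ of an invertible vector operator satisfies $\underline{\lambda}(A\wedge B)=\underline{\lambda}A\wedge\underline{\lambda}B$ (a basic property of the exterior power extension, recalled in Appendix~B), the adjoint relation Eq.(\ref{DAE6}) specialized to the operator $\underline{\lambda}$, and the non-degeneracy Eq.(\ref{DSP7}) of the duality scalar product. I will use throughout that, since $\lambda$ is invertible on $V$, the maps $\underline{\lambda}$, $\underline{\lambda}^{\bigtriangleup}$ and $\underline{\lambda}^{-\bigtriangleup}=(\underline{\lambda}^{\bigtriangleup})^{-1}$ are the exterior power extensions of $\lambda$, $\lambda^{\ast}$ and $(\lambda^{-1})^{\ast}$, respectively; hence each of them is an algebra morphism for $\wedge$ on the appropriate exterior algebra and each commutes with reversion, being grade preserving. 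From Eq.(\ref{DAE6}) with $\tau=\underline{\lambda}$ and the symmetry of the duality scalar product one obtains, for every $\Psi\in\bigwedge V^{\ast}$, $\Phi\in\bigwedge V^{\ast}$ and $X\in\bigwedge V$,
\[
\left\langle \Psi,\underline{\lambda}X\right\rangle=\left\langle \underline{\lambda}^{\bigtriangleup}\Psi,X\right\rangle,\qquad\text{and, in particular,}\qquad\left\langle \underline{\lambda}^{-\bigtriangleup}\Phi,\underline{\lambda}X\right\rangle=\left\langle \Phi,X\right\rangle .
\]

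For Eq.(\ref{EPO21}) I would simply unwind the definition of the exterior product of extensors (Appendix~D): the value of $\tau\wedge\sigma$ on a list of arguments is the exterior product of the value of $\tau$ on the first block of arguments with the value of $\sigma$ on the remaining block. Applying Eq.(\ref{EPO18}) to $\tau\wedge\sigma$, every argument slot of $\tau$ and of $\sigma$ is transformed in exactly the same way (by $\underline{\lambda}^{-1}$ on multivector slots and by $\underline{\lambda}^{\bigtriangleup}$ on multiform slots), so the identity $\underline{\lambda}(A\wedge B)=\underline{\lambda}A\wedge\underline{\lambda}B$ lets one pull $\underline{\lambda}$ past the wedge of the two values; re-reading the outcome as the value of $(\underline{\lambda}\tau)\wedge(\underline{\lambda}\sigma)$ on the original arguments finishes the identity. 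Nothing beyond the morphism property of $\underline{\lambda}$ is used here.

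For Eqs.(\ref{EPO22})--(\ref{EPO24}) the same unwinding --- now for the duality scalar product and the left and right contracted products of extensors (Appendix~D and \cite{fmcr1}) --- together with Eq.(\ref{EPO18}) for $\underline{\lambda}$ on multivector extensors and Eq.(\ref{EPO25}) for $\underline{\lambda}^{-\bigtriangleup}$ on multiform extensors, reduces the three identities to the following pointwise statements, valid for arbitrary $\Phi\in\bigwedge V^{\ast}$ and $X\in\bigwedge V$: (i) $\left\langle \underline{\lambda}^{-\bigtriangleup}\Phi,\underline{\lambda}X\right\rangle=\left\langle \Phi,X\right\rangle$; (ii) $\underline{\lambda}\!\left(\left\langle \Phi,X\right\vert\right)=\left\langle \underline{\lambda}^{-\bigtriangleup}\Phi,\underline{\lambda}X\right\vert$; (iii) $\underline{\lambda}\!\left(\left\vert X,\Phi\right\rangle\right)=\left\vert \underline{\lambda}X,\underline{\lambda}^{-\bigtriangleup}\Phi\right\rangle$. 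Statement (i) is the displayed identity above. For (ii) I would pair both sides with an arbitrary $\Psi\in\bigwedge V^{\ast}$: by Eq.(\ref{DCP8}) and the adjoint relation, the left side equals $\left\langle X,\widetilde{\Phi}\wedge\underline{\lambda}^{\bigtriangleup}\Psi\right\rangle$, whereas the right side equals $\left\langle X,\underline{\lambda}^{\bigtriangleup}\!\left(\widetilde{\underline{\lambda}^{-\bigtriangleup}\Phi}\wedge\Psi\right)\right\rangle$; since $\underline{\lambda}^{\bigtriangleup}$ distributes over $\wedge$ and commutes with reversion, $\underline{\lambda}^{\bigtriangleup}\!\left(\widetilde{\underline{\lambda}^{-\bigtriangleup}\Phi}\wedge\Psi\right)=\widetilde{\Phi}\wedge\underline{\lambda}^{\bigtriangleup}\Psi$, so the two expressions coincide for every $\Psi$ and Eq.(\ref{DSP7}) yields (ii). Statement (iii) is proved in the same fashion, pairing both sides with an arbitrary $\Psi\in\bigwedge V^{\ast}$ and using the analogous property of the right contracted product (see \cite{fmcr1}).

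The routine part is the bookkeeping: fixing the conventions for the exterior and contracted products of extensors and for how $\underline{\lambda}$, respectively $\underline{\lambda}^{-\bigtriangleup}$, transforms each variable slot of a multivector, respectively multiform, extensor. The only genuinely substantive point --- and the one I would single out as the heart of the argument --- is that the duality adjoint of the exterior power extension $\underline{\lambda}$ is again an exterior power extension (that of $\lambda^{\ast}$) and hence distributes over $\wedge$ on $\bigwedge V^{\ast}$ and commutes with reversion; this is precisely what makes the identity $\underline{\lambda}^{\bigtriangleup}\!\left(\widetilde{\underline{\lambda}^{-\bigtriangleup}\Phi}\wedge\Psi\right)=\widetilde{\Phi}\wedge\underline{\lambda}^{\bigtriangleup}\Psi$ hold, and therefore it is the crux of Eqs.(\ref{EPO23}) and (\ref{EPO24}).
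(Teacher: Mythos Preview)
Your proof is correct and follows essentially the same route as the paper. The paper also reduces Eq.(\ref{EPO23}) to a pointwise identity by unwinding Eq.(\ref{EPO18}), Eq.(\ref{E7}) and Eq.(\ref{EPO25}); the only difference is that where you re-derive the pointwise identity $\underline{\lambda}\left\langle \Phi,X\right\vert=\left\langle \underline{\lambda}^{-\bigtriangleup}\Phi,\underline{\lambda}X\right\vert$ from scratch via pairing and non-degeneracy, the paper simply cites this identity as Eq.(\ref{EPO13}) (already established in Appendix~B by exactly the argument you give).
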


We present the proof of the property given by Eq.(\ref{EPO23}), the other
proof are similar.

\begin{proof}
Without any loss of generality, we give the proof for the particular case
where $\tau\in ext(%
{\textstyle\bigwedge\nolimits_{1}^{\diamond}}
V,%
{\textstyle\bigwedge\nolimits_{2}^{\diamond}}
V^{\ast};%
{\textstyle\bigwedge\nolimits^{\diamond}}
V^{\ast})$ and $\sigma\in ext(%
{\textstyle\bigwedge\nolimits_{3}^{\diamond}}
V,%
{\textstyle\bigwedge\nolimits_{4}^{\diamond}}
V^{\ast};%
{\textstyle\bigwedge\nolimits^{\diamond}}
V)$. Take $X\in%
{\textstyle\bigwedge\nolimits_{1}^{\diamond}}
V,$ $Y\in%
{\textstyle\bigwedge\nolimits_{3}^{\diamond}}
V$ and $\Phi\in%
{\textstyle\bigwedge\nolimits_{2}^{\diamond}}
V^{\ast},$ $\Psi\in%
{\textstyle\bigwedge\nolimits_{4}^{\diamond}}
V^{\ast}.$ A straightforward calculation, using Eq.(\ref{EPO18}),
Eq.(\ref{E7}), Eq.(\ref{EPO13}) and Eq.(\ref{EPO25}), gives
\begin{align*}
\underline{\lambda}\left\langle \tau,\sigma\right\vert (X,Y,\Phi,\Psi)  &
=\underline{\lambda}\circ\left\langle \tau,\sigma\right\vert (\underline
{\lambda}^{-1}(X),\underline{\lambda}^{-1}(Y),\underline{\lambda
}^{\bigtriangleup}(\Phi),\underline{\lambda}^{\bigtriangleup}(\Psi))\\
&  =\underline{\lambda}\left\langle \tau(\underline{\lambda}^{-1}%
(X),\underline{\lambda}^{\bigtriangleup}(\Phi)),\sigma(\underline{\lambda
}^{-1}(Y),\underline{\lambda}^{\bigtriangleup}(\Psi))\right\vert \\
&  =\left\langle \underline{\lambda}^{-\bigtriangleup}\circ\tau(\underline
{\lambda}^{-1}(X),\underline{\lambda}^{\bigtriangleup}(\Phi)),\underline
{\lambda}\circ\sigma(\underline{\lambda}^{-1}(Y),\underline{\lambda
}^{\bigtriangleup}(\Psi))\right\vert \\
&  =\left\langle \underline{\lambda}^{-\bigtriangleup}\tau(X,\Phi
),\underline{\lambda}\sigma(Y,\Psi)\right\vert \\
&  =\left\langle \underline{\lambda}^{-\bigtriangleup}\tau,\underline{\lambda
}\sigma\right\vert (X,Y,\Phi,\Psi),
\end{align*}
whence, the expected result follows.
\end{proof}

We present \ now some of the properties of the action of the exterior power
extension of a form operator $\lambda$ on multiform extensors.

\begin{proposition}
For all $\tau\in\left.  \overset{\ast}{ext}\right.  _{k}^{l}(V)$ and
$\sigma\in\left.  \overset{\ast}{ext}\right.  _{r}^{s}(V):$%
\begin{equation}
\underline{\lambda}(\tau\wedge\sigma)=(\underline{\lambda}\tau)\wedge
\underline{\lambda}(\sigma). \label{EPO28}%
\end{equation}

For all $\tau\in\left.  \overset{\ast}{ext}\right.  _{k}^{l}(V)$ and
$\sigma\in\left.  \overset{\left.  {}\right.  }{ext}\right.  _{r}^{s}(V):$%
\begin{align}
\underline{\lambda}\left\langle \tau,\sigma\right\rangle  &  =\left\langle
\underline{\lambda}\tau,\underline{\lambda}^{-\bigtriangleup}\sigma
\right\rangle ,\label{EPO29}\\
\underline{\lambda}\left\langle \tau,\sigma\right\vert  &  =\left\langle
\underline{\lambda}\tau,\underline{\lambda}^{-\bigtriangleup}\sigma\right\vert
,\label{EPO30}\\
\underline{\lambda}\left\vert \sigma,\tau\right\rangle  &  =\left\vert
\underline{\lambda}^{-\bigtriangleup}\sigma,\underline{\lambda}\tau
\right\rangle . \label{EPO31}%
\end{align}

\end{proposition}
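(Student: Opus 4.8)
The plan is to repeat, with the roles of the vector-operator and form-operator extensions interchanged, the argument just given for the companion Proposition (Eqs.~(\ref{EPO21})--(\ref{EPO24})). Here $\underline{\lambda}$ is the exterior power extension of an invertible operator $\lambda$ on $V^{\ast}$: on extensors it acts through Eq.~(\ref{EPO25}), so that on a multivector slot it pre-composes with $\underline{\lambda}^{\bigtriangleup}$, on a multiform slot with $\underline{\lambda}^{-1}$, while the outer $\underline{\lambda}$ acts on the value of the extensor. All four identities are established by evaluating both sides on an arbitrary tuple $(X_{1},\ldots,X_{k},\Phi^{1},\ldots,\Phi^{l})$ of multivector and multiform arguments and comparing; the ingredients are Eqs.~(\ref{EPO18}) and~(\ref{EPO25}) (the two extension rules on extensors), the pointwise definitions of the exterior and contracted products of extensors recalled in Appendix~D, and the form-operator counterparts of the interaction identities (the analogue of Eq.~(\ref{EPO13})) asserting that $\underline{\lambda}$ intertwines the duality scalar and contracted products of $\bigwedge V$ and $\bigwedge V^{\ast}$ with $\underline{\lambda}^{-\bigtriangleup}$ on the partner factor.

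For Eq.~(\ref{EPO28}) I would evaluate $\underline{\lambda}(\tau\wedge\sigma)$ on such a tuple, move $\underline{\lambda}$ and the relevant $\underline{\lambda}^{-1}$, $\underline{\lambda}^{\bigtriangleup}$ inside by Eq.~(\ref{EPO25}), write $(\tau\wedge\sigma)(\cdots)=\tau(\cdots)\wedge\sigma(\cdots)$, and then use that the exterior power extension of a form operator is an algebra homomorphism of $\bigwedge V^{\ast}$, i.e.\ $\underline{\lambda}(A\wedge B)=\underline{\lambda}(A)\wedge\underline{\lambda}(B)$; refolding the two factors through Eq.~(\ref{EPO25}) returns $(\underline{\lambda}\tau)\wedge\underline{\lambda}(\sigma)$ on the same tuple. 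For Eqs.~(\ref{EPO29})--(\ref{EPO31}) I would follow the template of the displayed proof of Eq.~(\ref{EPO23}): fix a representative case, say $\tau\in ext(\bigwedge_{1}^{\diamond}V,\bigwedge_{2}^{\diamond}V^{\ast};\bigwedge^{\diamond}V^{\ast})$ and $\sigma\in ext(\bigwedge_{3}^{\diamond}V,\bigwedge_{4}^{\diamond}V^{\ast};\bigwedge^{\diamond}V)$, evaluate $\underline{\lambda}\langle\tau,\sigma|$ on $(X,Y,\Phi,\Psi)$, push everything inside via Eq.~(\ref{EPO25}), replace $\langle\tau,\sigma|(\cdots)$ by $\langle\tau(\cdots),\sigma(\cdots)|$, apply the form-operator counterpart of Eq.~(\ref{EPO13}) that expresses $\underline{\lambda}$ of a left contracted product $\langle\alpha,\beta|$ (with $\alpha\in\bigwedge V^{\ast}$, $\beta\in\bigwedge V$) through $\underline{\lambda}\alpha$ and $\underline{\lambda}^{-\bigtriangleup}\beta$, and finally refold each factor into $\underline{\lambda}\tau$ and $\underline{\lambda}^{-\bigtriangleup}\sigma$ using Eqs.~(\ref{EPO25}) and~(\ref{EPO18}). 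The scalar-product case~(\ref{EPO29}) and the right-contracted case~(\ref{EPO31}) are handled identically, only changing which of $\underline{\lambda}$, $\underline{\lambda}^{-\bigtriangleup}$ sits on which factor and on which side of the comma.

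The only substantive point, everything else being bookkeeping, is the interaction identity between the exterior power extension of a \emph{form} operator and the duality scalar/contracted products; I expect the care to be needed precisely in keeping straight, for a form operator, which of $\underline{\lambda}^{-1}$, $\underline{\lambda}^{\bigtriangleup}$, $\underline{\lambda}^{-\bigtriangleup}$ acts on the multivector slot and which on the multiform slot, since all of these are swapped relative to the vector-operator case of the preceding proposition and a single misplacement breaks the refolding step. Once that identity and the correct placement of the adjoints are fixed, the four displayed equations follow by the routine \emph{evaluate, push in, split pointwise, refold} computation sketched above, exactly as in the companion proof.
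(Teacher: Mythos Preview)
Your proposal is correct and follows precisely the approach the paper itself indicates: the paper's entire proof of this proposition is the sentence ``The proofs are similar to the one of the previous proposition,'' and your sketch spells out exactly that similar argument, swapping the roles of the vector-operator and form-operator extensions and invoking Eq.~(\ref{EPO16}) in place of Eq.~(\ref{EPO13}). Your identification of the key ingredients---Eqs.~(\ref{EPO18}) and~(\ref{EPO25}) for the two extension rules on extensors, the pointwise definitions of Appendix~D, and the form-operator interaction identities (\ref{EPO15})--(\ref{EPO17})---is on target, and your warning about keeping straight which of $\underline{\lambda}^{-1}$, $\underline{\lambda}^{\bigtriangleup}$, $\underline{\lambda}^{-\bigtriangleup}$ lands on which slot is exactly the point where care is needed.
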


The proofs are similar to the one of the previous proposition.

\subsubsection{Action of Contracted Extension Operators on Extensors}

\begin{definition}
Let $\gamma$ be a linear operator on $V$. Its \textit{contracted extension
operator} is an operator $\underset{\smile}{\gamma}$ which maps map
multivector extensors over $V$ into multivector extensors over $V$. We have%
\[
\left.  \overset{\left.  {}\right.  }{ext}\right.  _{k}^{l}(V)\ni\tau
\mapsto\underset{\smile}{\gamma}\tau\in\left.  \overset{\left.  {}\right.
}{ext}\right.  _{k}^{l}(V)
\]
such that%
\begin{align}
\underset{\smile}{\gamma}\tau(X_{1},\ldots,X_{k},\Phi^{1},\ldots,\Phi^{l})  &
=\underset{\smile}{\gamma}\circ\tau(X_{1},\ldots,X_{k},\Phi^{1},\ldots
,\Phi^{l})\nonumber\\
&  -\tau(\underset{\smile}{\gamma}(X_{1}),\ldots,X_{k},\Phi^{1},\ldots
,\Phi^{l})\nonumber\\
&  \cdots-\tau(X_{1},\ldots,\underset{\smile}{\gamma}(X_{k}),\Phi^{1}%
,\ldots,\Phi^{l})\label{GPO18}\\
&  +\tau(X_{1},\ldots,X_{k},\underset{\smile}{\gamma}^{\bigtriangleup}%
(\Phi^{1}),\ldots,\Phi^{l})\nonumber\\
&  \cdots+\tau(X_{1},\ldots,X_{k},\Phi^{1},\ldots,\underset{\smile}{\gamma
}^{\bigtriangleup}(\Phi^{l})\nonumber
\end{align}
for each $X_{1},\ldots,X_{k}\in%
{\textstyle\bigwedge}
V$ and $\Phi^{1},\ldots,\Phi^{l}\in%
{\textstyle\bigwedge}
V^{\ast}$.
\end{definition}

Thus, we can think of $\underset{\smile}{\gamma}$ as a linear
\emph{multivector extensor operator}. The definitions of $\underset{\smile
}{\gamma}$ for $%
{\textstyle\bigwedge}
V$ and $%
{\textstyle\bigwedge}
V^{\ast}$ are given by Eq.(\ref{GPO1}) and Eq.(\ref{GPO6}) respectively.

For instance, for $\tau\in\left.  \overset{\left.  {}\right.  }{ext}\right.
_{1}^{0}(V)$ the above definition gives%
\begin{equation}
\underset{\smile}{\gamma}\tau=\underset{\smile}{\gamma}\circ\tau-\tau
\circ\underset{\smile}{\gamma}=\left[  \underset{\smile}{\gamma},\tau\right]
, \label{GPO19}%
\end{equation}
and for $\tau\in\left.  \overset{\left.  {}\right.  }{ext}\right.  _{0}%
^{1}(V)$ it yields%
\begin{equation}
\underset{\smile}{\gamma}\tau=\underset{\smile}{\gamma}\circ\tau+\tau
\circ\underset{\smile}{\gamma}^{\bigtriangleup}. \label{GPO20}%
\end{equation}

\begin{definition}
Let $\gamma$ be an invertible linear operator on $V^{\ast}$. Analogously to
the above case, it is possible to generalize the action of $\underset{\smile
}{\gamma}$ in such a way to get a linear operator on $\left.  \overset{\left.
\ast\right.  }{ext}\right.  _{k}^{l}(V)$. We have%
\[
\left.  \overset{\left.  \ast\right.  }{ext}\right.  _{k}^{l}(V)\ni\tau
\mapsto\underset{\smile}{\gamma}\tau\in\left.  \overset{\left.  \ast\right.
}{ext}\right.  _{k}^{l}(V)
\]
such that%
\begin{align}
\underset{\smile}{\gamma}\tau(X_{1},\ldots,X_{k},\Phi^{1},\ldots,\Phi^{l})  &
=\underset{\smile}{\gamma}\circ\tau(X_{1},\ldots,X_{k},\Phi^{1},\ldots
,\Phi^{l})\nonumber\\
&  +\tau(\underset{\smile}{\gamma}^{\bigtriangleup}(X_{1}),\ldots,X_{k}%
,\Phi^{1},\ldots,\Phi^{l})\nonumber\\
&  \cdots+\tau(X_{1},\ldots,\underset{\smile}{\gamma}^{\bigtriangleup}%
(X_{k}),\Phi^{1},\ldots,\Phi^{l})\nonumber\\
&  -\tau(X_{1},\ldots,X_{k},\underset{\smile}{\gamma}(\Phi^{1}),\ldots
,\Phi^{l})\nonumber\\
&  \cdots-\tau(X_{1},\ldots,X_{k},\Phi^{1},\ldots,\underset{\smile}{\gamma
}(\Phi^{l}) \label{GPO25}%
\end{align}
for each $X_{1},\ldots,X_{k}\in%
{\textstyle\bigwedge}
V$ and $\Phi^{1},\ldots,\Phi^{l}\in%
{\textstyle\bigwedge}
V^{\ast}.$
\end{definition}

For instance, for $\tau\in\left.  \overset{\ast}{ext}\right.  _{1}^{0}(V)$ the
above definition yields%
\begin{equation}
\underset{\smile}{\gamma}\tau=\underset{\smile}{\gamma}\circ\tau+\tau
\circ\underset{\smile}{\gamma}^{\bigtriangleup}, \label{GPO26}%
\end{equation}
and for $\tau\in\left.  \overset{\left.  \ast\right.  }{ext}\right.  _{0}%
^{1}(V)$ it holds%
\begin{equation}
\underset{\smile}{\gamma}\tau=\underset{\smile}{\gamma}\circ\tau-\tau
\circ\underset{\smile}{\gamma}=\left[  \underset{\smile}{\gamma},\tau\right]
. \label{GPO27}%
\end{equation}

We present some of the main properties of the action of the contracted
extension operator of a vector operator $\gamma$ on multivector extensors.

\begin{proposition}
For all $\tau\in\left.  \overset{\left.  {}\right.  }{ext}\right.  _{k}%
^{l}(V)$ and $\sigma\in\left.  \overset{\left.  {}\right.  }{ext}\right.
_{r}^{s}(V):$%
\begin{equation}
\underset{\smile}{\gamma}(\tau\wedge\sigma)=(\underset{\smile}{\gamma}%
\tau)\wedge\sigma+\tau\wedge(\underset{\smile}{\gamma}\sigma). \label{GPO21}%
\end{equation}

For all $\tau\in\left.  \overset{\ast}{ext}\right.  _{k}^{l}(V)$ and
$\sigma\in\left.  \overset{\left.  {}\right.  }{ext}\right.  _{r}^{s}(V):$%
\begin{align}
\underset{\smile}{\gamma}\left\langle \tau,\sigma\right\rangle  &
=-\left\langle \underset{\smile}{\gamma}^{\bigtriangleup}\tau,\sigma
\right\rangle +\left\langle \tau,\underset{\smile}{\gamma}\sigma\right\rangle
,\label{GPO22}\\
\underset{\smile}{\gamma}\left\langle \tau,\sigma\right\vert  &
=-\left\langle \underset{\smile}{\gamma}^{\bigtriangleup}\tau,\sigma
\right\vert +\left\langle \tau,\underset{\smile}{\gamma}\sigma\right\vert
,\label{GPO23}\\
\underset{\smile}{\gamma}\left\vert \sigma,\tau\right\rangle  &  =\left\vert
\underset{\smile}{\gamma}\sigma,\tau\right\rangle -\left\vert \sigma
,\underset{\smile}{\gamma}^{\bigtriangleup}\tau\right\rangle . \label{GPO24}%
\end{align}

\end{proposition}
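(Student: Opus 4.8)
The plan is to establish all four identities by the same device used for the exterior power extension in the preceding proposition (cf.\ the proof of Eq.~(\ref{EPO23})): evaluate each side on an arbitrary admissible tuple $X_{1},\ldots\in\bigwedge V$ and $\Phi^{1},\ldots\in\bigwedge V^{\ast}$, expand the left-hand side by means of the defining identity (\ref{GPO18}) of $\underset{\smile}{\gamma}$ (and by its multiform-extensor analogue (\ref{GPO25}) wherever $\underset{\smile}{\gamma}^{\bigtriangleup}$ acts on a multiform extensor), and then reassemble. Throughout one uses that the four extensor operations are defined slotwise: $\tau\wedge\sigma$, $\langle\tau,\sigma\rangle$, $\langle\tau,\sigma|$ and $|\sigma,\tau\rangle$ are obtained by feeding the first block of arguments into $\tau$, the second into $\sigma$, and applying, respectively, $\wedge$, the duality scalar product, and the left/right contracted product to the resulting multivector and multiform values. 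In this way the general statement reduces to its zero-variable case (plain multivectors and multiforms) plus a purely mechanical unfolding.

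For the Leibniz rule (\ref{GPO21}): feeding the arguments into $\tau\wedge\sigma$ and applying (\ref{GPO18}) produces one ``outer'' term, namely $\underset{\smile}{\gamma}$ applied to the multivector $\tau(\cdots)\wedge\sigma(\cdots)$, together with one term per inner slot. Since $\underset{\smile}{\gamma}$ is a derivation of $\bigwedge V$ (Eq.~(\ref{GPO1})), the outer term splits into the $\underset{\smile}{\gamma}\circ\tau$ piece wedged with $\sigma(\cdots)$ plus $\tau(\cdots)$ wedged with the $\underset{\smile}{\gamma}\circ\sigma$ piece, while the inner-slot terms separate according to whether a slot belongs to $\tau$ or to $\sigma$. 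Collecting the $\underset{\smile}{\gamma}\circ\tau$ piece together with the $\tau$-slot terms and reading (\ref{GPO18}) backwards for $\tau$ reconstitutes $(\underset{\smile}{\gamma}\tau)(\cdots)\wedge\sigma(\cdots)$; the remaining terms give $\tau(\cdots)\wedge(\underset{\smile}{\gamma}\sigma)(\cdots)$, which is (\ref{GPO21}).

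For (\ref{GPO22})--(\ref{GPO24}) the scheme is unchanged; only the outer term is handled differently. In (\ref{GPO22}) the value $\langle\tau(\cdots),\sigma(\cdots)\rangle$ is a scalar, which $\underset{\smile}{\gamma}$ annihilates, so the outer term drops and the left side equals the sum of the $\tau$-slot and $\sigma$-slot terms. On the right side one expands $\langle\tau,\underset{\smile}{\gamma}\sigma\rangle$ by (\ref{GPO18}) and $\langle\underset{\smile}{\gamma}^{\bigtriangleup}\tau,\sigma\rangle$ by (\ref{GPO25}): each yields its own slot terms --- matching those of the left side, the leading minus in $-\langle\underset{\smile}{\gamma}^{\bigtriangleup}\tau,\sigma\rangle$ being exactly the sign that (\ref{GPO18}) attaches to a derivation acting through a slot --- plus a ``double'' piece $\langle\tau(\cdots),\underset{\smile}{\gamma}\circ\sigma(\cdots)\rangle$, respectively $-\langle\underset{\smile}{\gamma}^{\bigtriangleup}\circ\tau(\cdots),\sigma(\cdots)\rangle$, and the two double pieces cancel because $\underset{\smile}{\gamma}^{\bigtriangleup}$ on $\bigwedge V^{\ast}$ is the duality adjoint of $\underset{\smile}{\gamma}$ on $\bigwedge V$ (the zero-variable case of (\ref{GPO22}) itself, which is immediate from the definitions (\ref{GPO1}), (\ref{GPO6})). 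For (\ref{GPO23}) and (\ref{GPO24}) the value is a multivector rather than a scalar, so the surviving outer term is governed by the zero-variable instances $\underset{\smile}{\gamma}\langle\Phi,X| = -\langle\underset{\smile}{\gamma}^{\bigtriangleup}\Phi,X|+\langle\Phi,\underset{\smile}{\gamma}X|$ and $\underset{\smile}{\gamma}|X,\Phi\rangle = |\underset{\smile}{\gamma}X,\Phi\rangle-|X,\underset{\smile}{\gamma}^{\bigtriangleup}\Phi\rangle$, which follow from (\ref{GPO1}), (\ref{GPO6}) together with the characterizations (\ref{DCP6})--(\ref{DCP9}) of the contracted products; combined with the slotwise reassembly they yield (\ref{GPO23}) and (\ref{GPO24}).

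The only genuinely delicate point is the sign bookkeeping: in each term one must keep straight whether $\underset{\smile}{\gamma}$ or its duality adjoint $\underset{\smile}{\gamma}^{\bigtriangleup}$ is the relevant extension --- according as the object it acts on is multivector- or multiform-valued --- and where the minus signs in (\ref{GPO18}) and (\ref{GPO25}) land. A convenient independent check, which I would mention but not adopt as the official argument, is that $\underset{\smile}{\gamma}$ equals the derivative at $t=0$ of the family $\underline{e^{t\gamma}}$ and, on forms, $\underset{\smile}{\gamma}^{\bigtriangleup}$ equals minus the derivative at $t=0$ of $(\underline{e^{t\gamma}})^{-\bigtriangleup}$; differentiating the already proven identities (\ref{EPO21})--(\ref{EPO24}) at the identity then reproduces (\ref{GPO21})--(\ref{GPO24}), the minus signs originating precisely from the inverse and the adjoint that appear there.
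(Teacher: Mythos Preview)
Your argument is correct and follows essentially the same route as the paper: evaluate on a representative tuple, expand via the defining rule (\ref{GPO18}), split the ``outer'' term using the derivation property of $\underset{\smile}{\gamma}$ on $\bigwedge V$ (this is Eq.~(\ref{GPO5}), not (\ref{GPO1}) as you cite) or its contracted-product analogues (\ref{GPO12})--(\ref{GPO14}), and regroup into $(\underset{\smile}{\gamma}\tau)$ and $(\underset{\smile}{\gamma}\sigma)$ pieces. The paper writes out only the case (\ref{GPO21}) and declares the others similar; your additional remark about recovering (\ref{GPO21})--(\ref{GPO24}) by differentiating (\ref{EPO21})--(\ref{EPO24}) along $t\mapsto\underline{e^{t\gamma}}$ is a nice independent check not present in the paper.
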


\begin{proof}
We present only the proof for the property given by Eq. (\ref{GPO21}), the
others are similar. Without any loss of generality, we give the proof for the
particular case where $\tau\in ext(%
{\textstyle\bigwedge\nolimits_{1}^{\diamond}}
V,%
{\textstyle\bigwedge\nolimits_{2}^{\diamond}}
V^{\ast};%
{\textstyle\bigwedge\nolimits^{\diamond}}
V^{\ast})$ and $\sigma\in ext(%
{\textstyle\bigwedge\nolimits_{3}^{\diamond}}
V,%
{\textstyle\bigwedge\nolimits_{4}^{\diamond}}
V^{\ast};%
{\textstyle\bigwedge\nolimits^{\diamond}}
V)$. Take $X\in%
{\textstyle\bigwedge\nolimits_{1}^{\diamond}}
V,$ $Y\in%
{\textstyle\bigwedge\nolimits_{3}^{\diamond}}
V$ and $\Phi\in%
{\textstyle\bigwedge\nolimits_{2}^{\diamond}}
V^{\ast},$ $\Psi\in%
{\textstyle\bigwedge\nolimits_{4}^{\diamond}}
V^{\ast}$. By using the definition of $\underset{\smile}{\gamma}$\ , we have%
\begin{equation}%
\begin{array}
[c]{ll}%
\underset{\smile}{\gamma}(\tau\wedge\sigma)\left(  X,Y,\Phi,\Psi\right)  &
=\underset{\smile}{\gamma}\circ(\tau\wedge\sigma)\left(  X,Y,\Phi,\Psi\right)
-(\tau\wedge\sigma)\left(  \underset{\smile}{\gamma}X,Y,\Phi,\Psi\right) \\
& -(\tau\wedge\sigma)\left(  X,\underset{\smile}{\gamma}Y,\Phi,\Psi\right)
+(\tau\wedge\sigma)\left(  X,Y,\underset{\smile}{\gamma}\Phi,\Psi\right) \\
& +(\tau\wedge\sigma)\left(  X,Y,\Phi,\underset{\smile}{\gamma}\Psi\right)  .
\end{array}
\label{EN4}%
\end{equation}
Now, using the property (\ref{GPO5}) we can write the first term of right side
of the Eq. (\ref{EN4}) as%
\[
\underset{\smile}{\gamma}\circ(\tau\wedge\sigma)\left(  X,Y,\Phi,\Psi\right)
=\underset{\smile}{\gamma}\circ\tau\left(  X,\Phi\right)  \wedge\sigma\left(
Y,\Psi\right)  +\tau\left(  X,\Phi\right)  \wedge\underset{\smile}{\gamma
}\circ\sigma\left(  Y,\Psi\right)  ,
\]
and remembering that
\[
(\tau\wedge\sigma)\left(  X,Y,\Phi,\Psi\right)  =\tau\left(  X,\Phi\right)
\wedge\sigma\left(  Y,\Psi\right)  ,
\]
the Eq. (\ref{EN4}) can be written as
\[%
\begin{array}
[c]{ll}%
\underset{\smile}{\gamma}(\tau\wedge\sigma)\left(  X,Y,\Phi,\Psi\right)  &
=\left[  \underset{\smile}{\gamma}\circ\tau\left(  X,\Phi\right)  -\tau\left(
\underset{\smile}{\gamma}X,\Phi\right)  +\tau\left(  X,\underset{\smile
}{\gamma^{\bigtriangleup}}\Phi\right)  \right]  \wedge\sigma\left(
Y,\Psi\right) \\
& +\tau\left(  X,\Phi\right)  \wedge\left[  \underset{\smile}{\gamma}%
\circ\sigma\left(  Y,\Psi\right)  -\sigma\left(  \underset{\smile}{\gamma
}Y,\Psi\right)  +\sigma\left(  Y,\underset{\smile}{\gamma^{\bigtriangleup}%
}\Psi\right)  \right]
\end{array}
\]
or
\[%
\begin{array}
[c]{ll}%
\underset{\smile}{\gamma}(\tau\wedge\sigma)\left(  X,Y,\Phi,\Psi\right)  &
=\left[  \left(  \underset{\smile}{\gamma}\tau\right)  \left(  X,\Phi\right)
\right]  \wedge\sigma\left(  Y,\Psi\right)  +\tau\left(  X,\Phi\right)
\wedge\left[  \left(  \underset{\smile}{\gamma}\sigma\right)  \left(
Y,\Psi\right)  \right] \\
& =\left(  \underset{\smile}{\gamma}\tau\wedge\sigma+\tau\wedge\underset
{\smile}{\gamma}\sigma\right)  \left(  X,Y,\Phi,\Psi\right)  ,
\end{array}
\]
and the property is proved.
\end{proof}

We present some properties for the action of the contacted extension operator
of a form operator $\gamma$ on multiform extensors.

\begin{proposition}
For all $\tau\in\left.  \overset{\left.  \ast\right.  }{ext}\right.  _{k}%
^{l}(V)$ and $\sigma\in\left.  \overset{\left.  \ast\right.  }{ext}\right.
_{r}^{s}(V):$%
\begin{equation}
\underset{\smile}{\gamma}(\tau\wedge\sigma)=(\underset{\smile}{\gamma}%
\tau)\wedge\sigma+\tau\wedge(\underset{\smile}{\gamma}\sigma). \label{GPO28}%
\end{equation}

For all $\tau\in\left.  \overset{\ast}{ext}\right.  _{k}^{l}(V)$ and
$\sigma\in\left.  \overset{\left.  {}\right.  }{ext}\right.  _{r}^{s}(V):$%
\begin{align}
\underset{\smile}{\gamma}\left\langle \tau,\sigma\right\rangle  &
=\left\langle \underset{\smile}{\gamma}\tau,\sigma\right\rangle -\left\langle
\tau,\underset{\smile}{\gamma}^{\bigtriangleup}\sigma\right\rangle
,\label{GPO29}\\
\underset{\smile}{\gamma}\left\langle \tau,\sigma\right\vert  &  =\left\langle
\underset{\smile}{\gamma}\tau,\sigma\right\vert -\left\langle \tau
,\underset{\smile}{\gamma}^{\bigtriangleup}\sigma\right\vert ,\label{GPO30}\\
\underset{\smile}{\gamma}\left\vert \sigma,\tau\right\rangle  &  =-\left\vert
\underset{\smile}{\gamma}^{\bigtriangleup}\sigma,\tau\right\rangle +\left\vert
\sigma,\underset{\smile}{\gamma}\tau\right\rangle . \label{GPO31}%
\end{align}

\end{proposition}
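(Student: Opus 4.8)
The plan is to follow, step for step, the proof of the preceding proposition (the vector-operator case, Eqs.~(\ref{GPO21})--(\ref{GPO24})), interchanging the roles of $\bigwedge V$ and $\bigwedge V^{\ast}$ and inserting the sign changes dictated by the defining formula~(\ref{GPO25}) for the contracted extension of a form operator. As there, it suffices to treat one representative slot-distribution; the general statement then follows by additivity over the homogeneous decompositions of the arguments and of the values of the extensors, exactly as in the proof of Eq.~(\ref{DAE6}).

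For~(\ref{GPO28}) take $\tau\in ext(\bigwedge_{1}^{\diamond}V,\bigwedge_{2}^{\diamond}V^{\ast};\bigwedge^{\diamond}V^{\ast})$ and $\sigma\in ext(\bigwedge_{3}^{\diamond}V,\bigwedge_{4}^{\diamond}V^{\ast};\bigwedge^{\diamond}V^{\ast})$, and evaluate both sides on arbitrary $X\in\bigwedge_{1}^{\diamond}V$, $Y\in\bigwedge_{3}^{\diamond}V$, $\Phi\in\bigwedge_{2}^{\diamond}V^{\ast}$, $\Psi\in\bigwedge_{4}^{\diamond}V^{\ast}$. Expanding $\underset{\smile}{\gamma}(\tau\wedge\sigma)(X,Y,\Phi,\Psi)$ by~(\ref{GPO25}) produces the term $\underset{\smile}{\gamma}\circ(\tau\wedge\sigma)(X,Y,\Phi,\Psi)$ together with a $+$-term for each multivector slot ($\underset{\smile}{\gamma}^{\bigtriangleup}$ on $X$, on $Y$) and a $-$-term for each multiform slot ($\underset{\smile}{\gamma}$ on $\Phi$, on $\Psi$). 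Applying the Leibniz-type property~(\ref{GPO5}) in its form-operator version (i.e.\ on an exterior product of multiforms) gives $\underset{\smile}{\gamma}\circ(\tau\wedge\sigma)(X,Y,\Phi,\Psi)=\underset{\smile}{\gamma}\circ\tau(X,\Phi)\wedge\sigma(Y,\Psi)+\tau(X,\Phi)\wedge\underset{\smile}{\gamma}\circ\sigma(Y,\Psi)$; since $(\tau\wedge\sigma)(X,Y,\Phi,\Psi)=\tau(X,\Phi)\wedge\sigma(Y,\Psi)$, the eight terms regroup as $\bigl[\underset{\smile}{\gamma}\circ\tau(X,\Phi)+\tau(\underset{\smile}{\gamma}^{\bigtriangleup}X,\Phi)-\tau(X,\underset{\smile}{\gamma}\Phi)\bigr]\wedge\sigma(Y,\Psi)+\tau(X,\Phi)\wedge\bigl[\underset{\smile}{\gamma}\circ\sigma(Y,\Psi)+\sigma(\underset{\smile}{\gamma}^{\bigtriangleup}Y,\Psi)-\sigma(Y,\underset{\smile}{\gamma}\Psi)\bigr]$, which by~(\ref{GPO25}) equals $(\underset{\smile}{\gamma}\tau)(X,\Phi)\wedge\sigma(Y,\Psi)+\tau(X,\Phi)\wedge(\underset{\smile}{\gamma}\sigma)(Y,\Psi)=(\underset{\smile}{\gamma}\tau\wedge\sigma+\tau\wedge\underset{\smile}{\gamma}\sigma)(X,Y,\Phi,\Psi)$.

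For~(\ref{GPO29})--(\ref{GPO31}) take $\tau\in ext(\bigwedge_{1}^{\diamond}V,\bigwedge_{2}^{\diamond}V^{\ast};\bigwedge^{\diamond}V^{\ast})$ multiform-valued and $\sigma\in ext(\bigwedge_{3}^{\diamond}V,\bigwedge_{4}^{\diamond}V^{\ast};\bigwedge^{\diamond}V)$ multivector-valued, so that the duality scalar product $\langle\tau,\sigma\rangle$, the left contracted product $\langle\tau,\sigma\vert$ and the right contracted product $\vert\sigma,\tau\rangle$ of extensors are well defined and are computed pointwise (Appendix~D), e.g.\ $\langle\tau,\sigma\rangle(X,Y,\Phi,\Psi)=\langle\tau(X,\Phi),\sigma(Y,\Psi)\rangle$. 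One applies $\underset{\smile}{\gamma}$ via~(\ref{GPO25}) exactly as above, and then transfers $\underset{\smile}{\gamma}$ across the duality product of the \emph{values} $\tau(X,\Phi)$ and $\sigma(Y,\Psi)$ using the already-established identities for a form operator acting on duality scalar and contracted products of multivecfors and multiforms (the form-operator counterparts, at the level of $\bigwedge V$ and $\bigwedge V^{\ast}$, of what is used in the preceding proof). This converts each $\underset{\smile}{\gamma}^{\bigtriangleup}$ attached to a multivector argument into a $\underset{\smile}{\gamma}^{\bigtriangleup}$ attached to one slot of the extensor product on the right, with the appropriate sign; collecting terms and comparing with the desired right-hand sides yields~(\ref{GPO29})--(\ref{GPO31}), and the remaining slot-distributions are handled by the identical manipulation.

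The only real difficulty is sign bookkeeping. Formula~(\ref{GPO25}) reverses all the signs of~(\ref{GPO18}) ($+$ on the $\underset{\smile}{\gamma}^{\bigtriangleup}$-terms at multivector slots, $-$ on the $\underset{\smile}{\gamma}$-terms at multiform slots), and these signs must be matched against the signs produced when $\underset{\smile}{\gamma}^{\bigtriangleup}$ is pushed through the scalar and contracted products, so that the final pattern is exactly the one asserted --- in particular the single minus sign in front of $\langle\tau,\underset{\smile}{\gamma}^{\bigtriangleup}\sigma\rangle$ in~(\ref{GPO29}) and of $\langle\tau,\underset{\smile}{\gamma}^{\bigtriangleup}\sigma\vert$ in~(\ref{GPO30}), and the asymmetric placement of the minus in~(\ref{GPO31}). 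Once the representative cases are carried out with the signs tracked carefully, no further ideas are needed.
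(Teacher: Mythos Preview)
Your proposal is correct and follows exactly the approach the paper intends: the paper's own proof of this proposition consists of the single sentence ``Proofs are analogous to the ones of the previous proposition,'' and you have simply written out that analogy in detail, replacing the vector-operator definition~(\ref{GPO18}) by the form-operator definition~(\ref{GPO25}) and the Leibniz rule~(\ref{GPO5}) by its multiform counterpart~(\ref{GPO10}), with the corresponding sign changes tracked. There is nothing to add.
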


Proofs are analogous to the ones of the previous proposition.

\subsection{Multivector and Multiform Fields}

Let $U$ be an open set of a smooth manifold $M$ (i.e., $U\subseteq M$). The
set of smooth\footnote{Smooth in this paper means, $\mathcal{C}^{\infty}%
$-differentiable or at least enough differentiable in order for our statements
to hold.} scalar fields on $U,$ as well-known, has a natural structure of
\emph{ring} (\emph{with identity}), and it will be denoted by $\mathcal{S}%
(U).$ The set of smooth vector fields on $U,$ as well-known, have natural
structure of \emph{modules over} $\mathcal{S}(U).$ It will be denoted by
$\mathcal{V}(U)$. The set of smooth form fields on $U$ \emph{can be
identified} with the \emph{dual module} for $\mathcal{V}(U)$. It could be
denoted by $\mathcal{V}^{\ast}(U)$.

Let $M$ be a $n$-dimensional differentiable manifold and let $U\subset M$ be
an open set. A $k$-vector mapping%
\[
X^{k}:U\longrightarrow\underset{p\in U}{%
{\textstyle\bigcup}
}%
{\textstyle\bigwedge\nolimits^{k}}
T_{p}^{\ast}M,
\]
such that for each $p\in U,$ $X_{(p)}^{k}\in%
{\textstyle\bigwedge\nolimits^{k}}
T_{p}^{\ast}M$ is called a $k$\emph{-vector field on }$U.$

Such $X^{k}$ with $1\leq k\leq n$ is said to be a smooth $k$-vector field on
$U,$ if and only if, for all $\omega^{1},\ldots,\omega^{k}\in\mathcal{V}%
^{\ast}(U)$, the scalar mapping defined by%
\begin{equation}
U\ni p\mapsto X_{(p)}^{k}(\omega_{(p)}^{1},\ldots,\omega_{(p)}^{k}%
)\in\mathbb{R} \label{MMF2}%
\end{equation}
is a smooth scalar field on $U$.

A multivector mapping%
\[
X:U\longrightarrow\underset{p\in U}{%
{\textstyle\bigcup}
}%
{\textstyle\bigwedge}
T_{p}^{\ast}M,
\]
such that for each $p\in U$, $X_{(p)}\in%
{\textstyle\bigwedge}
T_{p}^{\ast}M$ is called a \emph{multivector field on }$U.$

Any multivector at $p\in M$ can be written as a sum of $k$-vectors (i.e.,
homogeneous multivectors of degree $k$) at $p\in M,$ with $k$ running from
$k=0$ to $k=n$, i.e., there exist exactly $n+1$ homogeneous multivector of
degree $k$ fields on $U,$ conveniently denote by $X^{0},X^{1},\ldots,X^{n},$
such that for every $p\in U,$%
\begin{equation}
X_{(p)}=X_{(p)}^{0}+X_{(p)}^{1}+\cdots+X_{(p)}^{n}. \label{MMF4}%
\end{equation}

We say that $X$ is a smooth multivector field on $U$ when each of one of the
$X^{0},X^{1},\ldots,X^{n}$ is a smooth $k$-vector field on $U.$

We emphasize that according with the definitions of smoothness as given above,
a smooth $k$-vector field on $U$ \emph{can be identified} to a $k$-vector over
$\mathcal{V}(U)$, and a smooth multivector field on $U$ \emph{can be seen
properly} as a multivector over $\mathcal{V}(U).$ Thus, the set of smooth
$k$-vector fields on $U$ may be denoted by $%
{\textstyle\bigwedge\nolimits^{k}}
\mathcal{V}(U)$, and the set of smooth multivector fields on $U$ may be
denoted by $%
{\textstyle\bigwedge}
\mathcal{V}(U)$.

A $k$-form mapping%
\[
\Phi_{k}:U\longrightarrow\underset{p\in U}{%
{\textstyle\bigcup}
}%
{\textstyle\bigwedge\nolimits^{k}}
T_{p}^{\ast}M
\]
such that for each $p\in U$, $\Phi_{k(p)}\in%
{\textstyle\bigwedge\nolimits^{k}}
T_{p}^{\ast}M$ is called a $k$\emph{-form field on }$U.$

Such $\Phi_{k}$ with $1\leq k\leq n$ is said to be a smooth $k$-form field on
$U,$ if and only if, for all $v_{1},\ldots,v_{k}\in\mathcal{V}(U)$, the scalar
mapping defined by%
\begin{equation}
U\ni p\mapsto\Phi_{k(p)}(v_{1(p)},\ldots,v_{k(p)})\in\mathbb{R} \label{MMF6}%
\end{equation}
is a smooth scalar field on $U$.

A multiform mapping%
\[
\Phi:U\longrightarrow\underset{p\in U}{%
{\textstyle\bigcup}
}%
{\textstyle\bigwedge}
T_{p}^{\ast}M
\]
such that for each $p\in U$, $\Phi_{(p)}\in%
{\textstyle\bigwedge}
T_{p}^{\ast}M$ is called a \emph{multiform field on }$U.$

Any multiform at $p\in M$ can be written (see, e.g.,\cite{rodoliv2006}) as a
sum of $k$-forms (i.e., homogeneous multiforms of degree $k$) at $p\in M$ with
$k$ running from $k=0$ to $k=n.$ It follows that there exist exactly $n+1$
homogeneous multiform of degree $k$ fields on $U,$ named as $\Phi_{0},\Phi
_{1},\ldots,\Phi_{n}$ such that%
\begin{equation}
\Phi_{(p)}=\Phi_{0(p)}+\Phi_{1(p)}+\cdots+\Phi_{n(p)} \label{MMF8}%
\end{equation}
for every $p\in U.$

We say that $\Phi$ is a smooth multiform field on $U,$ if and only if, each of
$\Phi_{0},\Phi_{1},\ldots,\Phi_{n}$ is just a smooth $k$-form field on $U. $

Note that according with the definitions of smoothness as given above, a
smooth $k$-form field on $U$ \emph{can be identified} with a $k$-form over
$\mathcal{V}(U),$ and a smooth multiform field on $U$ \emph{can be seen} as a
multiform over $\mathcal{V}(U).$ Thus, the set of smooth $k$-form fields on
$U$ will be denoted by $%
{\textstyle\bigwedge\nolimits^{k}}
\mathcal{V}^{\ast}(U),$ and the set of smooth multiform fields on $U$ will be
denoted by $%
{\textstyle\bigwedge}
\mathcal{V}^{\ast}(U).$

\subsubsection{Algebra of Multivector and Multiform Fields}

We recall first the module (over a ring) structure operations of the set of
smooth multivector fields on $U$ and of the set of smooth multiform fields on
$U$. We recall also the concept of the exterior product both of smooth
multivector fields as well as of smooth multiform fields on $U$ and we present
the definitions of the duality products of a given smooth multivector fields
on $U$ by a smooth multiform field on $U.$

The addition of multivector fields $X$ and $Y,$ or multiform fields $\Phi$ and
$\Psi,$ is defined by%
\begin{align}
\left(  X+Y\right)  _{(p)}  &  =X_{(p)}+Y_{(p)},\label{MMF9}\\
\left(  \Phi+\Psi\right)  _{(p)}  &  =\Phi_{(p)}+\Psi_{(p)}, \label{MMF10}%
\end{align}
for every $p\in U.$

The scalar multiplication of a multivector field $X,$ or a multiform field
$\Phi,$ by a scalar field $f,$ is defined by%
\begin{align}
\left(  fX\right)  _{(p)}  &  =f(p)X_{(p)},\label{MMF11}\\
\left(  f\Phi\right)  _{(p)}  &  =f(p)\Phi_{(p)}, \label{MMF12}%
\end{align}
for every $p\in U.$

The exterior product of multivector fields $X$ and $Y,$ and the exterior
product of multiform fields $\Phi$ and $\Psi,$ are defined by%
\begin{align}
\left(  X\wedge Y\right)  _{(p)}  &  =X_{(p)}\wedge Y_{(p)},\label{MMF13}\\
\left(  \Phi\wedge\Psi\right)  _{(p)}  &  =\Phi_{(p)}\wedge\Psi_{(p)},
\label{MMF14}%
\end{align}
for every $p\in U.$

Each module, of either the smooth multivector fields on $U$, or the smooth
multiform fields on $U$, endowed with the respective exterior product has a
natural structure of associative algebra. They are called \emph{the exterior
algebras of multivector and multiform fields on} $U.$

The duality scalar product of a multiform field $\Phi$ with a multivector
field $X$ is (see algebraic details in \cite{fmcr1}) the scalar field
$\left\langle \Phi,X\right\rangle $ defined by%
\begin{equation}
\left\langle \Phi,X\right\rangle (p)=\left\langle \Phi_{(p)},X_{(p)}%
\right\rangle , \label{MMF15}%
\end{equation}
for every $p\in U.$

The duality left contracted product of a multiform field $\Phi$ with a
multivector field $X$ (or, a multivector field $X$ with a multiform field
$\Phi$) is the multivector field $\left\langle \Phi,X\right\vert $
(respectively, the multiform field $\left\langle X,\Phi\right\vert $) defined
by%
\begin{align}
\left\langle \Phi,X\right\vert _{(p)}  &  =\left\langle \Phi_{(p)}%
,X_{(p)}\right\vert ,\label{MMF16}\\
\left\langle X,\Phi\right\vert _{(p)}  &  =\left\langle X_{(p)},\Phi
_{(p)}\right\vert , \label{MMF17}%
\end{align}
for every $p\in U.$

The duality right contracted product of a multiform field $\Phi$ with a
multivector field $X$ (or, a multivector field $X$ with a multiform field
$\Phi$) is the multiform field $\left\vert \Phi,X\right\rangle $
(respectively, the multivector field $\left\vert X,\Phi\right\rangle $)
defined by%
\begin{align}
\left\vert \Phi,X\right\rangle _{(p)}  &  =\left\vert \Phi_{(p)}%
,X_{(p)}\right\rangle ,\label{MMF18}\\
\left\vert X,\Phi\right\rangle _{(p)}  &  =\left\vert X_{(p)},\Phi
_{(p)}\right\rangle , \label{MMF19}%
\end{align}
for every $p\in U.$

Each duality contracted product of smooth multivector fields on $U$ with
smooth multiform fields on $U$ yields a natural structure of
\textit{non-associative} algebra.

\subsection{Extensor Fields}

Let $T_{p}M$ the set of all tangent vectors to $M$ at $p.$ A multivector
extensor mapping
\[
\tau:U\longrightarrow\underset{p\in U}{%
{\textstyle\bigcup}
}\left.  \overset{\left.  {}\right.  }{ext}\right.  _{k}^{l}(T_{p}M)
\]
such that for each $p\in U$, $\tau_{(p)}\in\left.  \overset{\left.  {}\right.
}{ext}\right.  _{k}^{l}(T_{p}M)$ is called a\emph{\ multivector extensor field
of }$k$\emph{\ multivector and }$l$\emph{\ multiform variables on }$U.$

A multiform extensor mapping
\[
\upsilon:U\longrightarrow\underset{p\in U}{%
{\textstyle\bigcup}
}\left.  \overset{\ast}{ext}\right.  _{k}^{l}(T_{p}M)
\]
such that for each $p\in U$, $\upsilon_{(p)}\in\left.  \overset{\ast}%
{ext}\right.  _{k}^{l}(T_{p}M)$ is called a\emph{\ multiform }%
\textit{extensor}\emph{\textbf{\ }field of }$k$\emph{\ multivector and }%
$l$\emph{\ multiform variables on }$U.$

In the above formulas, $\left.  \overset{\left.  {}\right.  }{ext}\right.
_{k}^{l}(T_{p}M)$ is a short notation for the space of multivector extensors
of $k$ multivector and $l$ multiform variables over $T_{p}M,$ i.e., for each
$p\in U:$%
\[
\left.  \overset{\left.  {}\right.  }{ext}\right.  _{k}^{l}(T_{p}M):=ext(%
{\displaystyle\bigwedge\nolimits_{1}^{\Diamond}}
T_{p}M,\ldots,%
{\displaystyle\bigwedge\nolimits_{k}^{\Diamond}}
T_{p}M,%
{\displaystyle\bigwedge\nolimits_{1}^{\Diamond}}
T_{p}^{\star}M,\ldots,%
{\displaystyle\bigwedge\nolimits_{l}^{\Diamond}}
T_{p}^{\star}M;%
{\displaystyle\bigwedge\nolimits^{\Diamond}}
T_{p}M),
\]
and $\left.  \overset{\ast}{ext}\right.  _{k}^{l}(T_{p}M)$ is a short notation
for the space of multiform \textit{extensors} of $k$ multivector and $l$
multiform variables over $T_{p}M,$ i.e., for each $p\in U:$%
\[
\left.  \overset{\ast}{ext}\right.  _{k}^{l}(T_{p}M):=ext(%
{\displaystyle\bigwedge\nolimits_{1}^{\Diamond}}
T_{p}M,\ldots,%
{\displaystyle\bigwedge\nolimits_{k}^{\Diamond}}
T_{p}M,%
{\displaystyle\bigwedge\nolimits_{1}^{\Diamond}}
T_{p}^{\star}M,\ldots,%
{\displaystyle\bigwedge\nolimits_{l}^{\Diamond}}
T_{p}^{\star}M;%
{\displaystyle\bigwedge\nolimits^{\Diamond}}
T_{p}^{\star}M),
\]
where $T_{p}^{\star}M$ is the dual space of $T_{p}M.$

Let us denote the smooth multivector fields: $U\ni p\mapsto X_{1(p)}\in%
{\displaystyle\bigwedge\nolimits_{1}^{\Diamond}}
T_{p}M,\ldots$, $U\ni p\mapsto X_{k(p)}\in%
{\displaystyle\bigwedge\nolimits_{k}^{\Diamond}}
T_{p}M$, and $U\ni p\mapsto X_{(p)}\in%
{\displaystyle\bigwedge\nolimits^{\Diamond}}
T_{p}M$, respectively by $%
{\displaystyle\bigwedge\nolimits_{1}^{\Diamond}}
\mathcal{V}(U),\ldots$, $%
{\displaystyle\bigwedge\nolimits_{k}^{\Diamond}}
\mathcal{V}(U)$, and $%
{\displaystyle\bigwedge\nolimits^{\Diamond}}
\mathcal{V}(U)$. Let us denote the smooth multiform fields: $U\ni p\mapsto
\Phi_{(p)}^{1}\in%
{\displaystyle\bigwedge\nolimits_{1}^{\Diamond}}
T_{p}^{\ast}M,\ldots$, $U\ni p\mapsto\Phi_{(p)}^{l}\in%
{\displaystyle\bigwedge\nolimits_{l}^{\Diamond}}
T_{p}^{\ast}M$, and $U\ni p\mapsto\Phi_{(p)}^{l}\in%
{\displaystyle\bigwedge\nolimits^{\Diamond}}
T_{p}^{\ast}M$, by $%
{\displaystyle\bigwedge\nolimits_{1}^{\Diamond}}
\mathcal{V}^{\ast}(U),$ $\ldots,$ $%
{\displaystyle\bigwedge\nolimits_{l}^{\Diamond}}
\mathcal{V}^{\ast}(U)$ and $%
{\displaystyle\bigwedge\nolimits^{\Diamond}}
\mathcal{V}^{\ast}(U)$.

Such a multivector extensor field $\tau$ will be said to be smooth, if and
only if, for all $X_{1}\in%
{\displaystyle\bigwedge\nolimits_{1}^{\Diamond}}
\mathcal{V}(U),\ldots,X_{k}\in%
{\displaystyle\bigwedge\nolimits_{k}^{\Diamond}}
\mathcal{V}(U),$ and for all $\Phi^{1}\in%
{\displaystyle\bigwedge\nolimits_{1}^{\Diamond}}
\mathcal{V}^{\ast}(U),\ldots,\Phi^{l}\in%
{\displaystyle\bigwedge\nolimits_{l}^{\Diamond}}
\mathcal{V}^{\star}(U)$, the multivector mapping defined by
\begin{equation}
U\ni p\mapsto\tau_{(p)}(X_{1(p)},\ldots,X_{k(p)},\Phi_{(p)}^{1},\ldots
,\Phi_{(p)}^{l})\in%
{\displaystyle\bigwedge\nolimits^{\Diamond}}
T_{p}M \label{EF3}%
\end{equation}
is a smooth multivector field on $U,$ (i.e., an object living on $%
{\displaystyle\bigwedge\nolimits^{\Diamond}}
\mathcal{V}(U)$).

Such a multiform extensor field $\upsilon$ will be said to be smooth , if and
only if, for all $X_{1}\in%
{\displaystyle\bigwedge\nolimits_{1}^{\Diamond}}
\mathcal{V}(U),\ldots,X_{k}\in%
{\displaystyle\bigwedge\nolimits_{k}^{\Diamond}}
\mathcal{V}(U),$ and for all $\Phi^{1}\in%
{\displaystyle\bigwedge\nolimits_{1}^{\Diamond}}
\mathcal{V}^{\ast}(U),\ldots,\Phi^{l}\in%
{\displaystyle\bigwedge\nolimits_{l}^{\Diamond}}
\mathcal{V}^{\ast}(U)$, the multiform mapping defined by
\begin{equation}
U\ni p\mapsto\upsilon_{(p)}(X_{1(p)},\ldots,X_{k(p)},\Phi_{(p)}^{1}%
,\ldots,\Phi_{(p)}^{l})\in%
{\displaystyle\bigwedge\nolimits^{\Diamond}}
T_{p}^{\star}M \label{EF4}%
\end{equation}
is a smooth multiform field on $U,$ (i.e., an object living on $%
{\displaystyle\bigwedge^{\Diamond}}
\mathcal{V}^{\star}(U)$).

We emphasize\footnote{A \emph{short name} for a multivector (or, multiform)
extensor of $k$ multivector and $l$ multiform variables could be: a $\binom
{l}{k}$ multivector (respectively, multiform) extensor.} that according with
the definitions of smoothness as given above, a smooth\emph{\ }$\binom{l}{k}$
multivector extensor field on $U$ \emph{can be identified} to a $\binom{l}{k}
$ multivector extensor over $\mathcal{V}(U).$ It is also true that a
smooth\emph{\ }$\binom{l}{k}$ multiform extensor field on $U$ \emph{can be
properly seen }as a $\binom{l}{k}$ multiform extensor over $\mathcal{V}(U).$

Thus, the set of smooth\emph{\ }$\binom{l}{k}$ multivector extensor fields on
$U$ is just a module over $\mathcal{S}(U)$ which could be denoted by $\left.
\overset{\left.  {}\right.  }{ext}\right.  _{k}^{l}\mathcal{V}(U).$ And, the
set of smooth\emph{\ }$\binom{l}{k}$ multiform extensor fields on $U$ is also
a module over $\mathcal{S}(U)$ which can be symbolized as $\left.
\overset{\ast}{ext}\right.  _{k}^{l}\mathcal{V}(U).$

\subsubsection{Algebras of Extensor Fields}

We define now the exterior products of smooth multivector extensor fields on
$U$ and smooth multiform extensor\textbf{\ }fields on $U$. We also present the
definitions of smooth multivector extensor\textbf{\ }fields on $U$ with smooth
multiform extensor\textbf{\ }fields on $U.$

The exterior product of either multivector extensor fields or multiform
extensor fields $\tau$ and $\sigma$ is defined as%
\begin{equation}
\left(  \tau\wedge\sigma\right)  _{(p)}=\tau_{(p)}\wedge\sigma_{(p)}
\label{EF5}%
\end{equation}
for every $p\in U$.

Each module over $\mathcal{S}(U)$ of either the smooth multivector extensor
fields on $U$ or the smooth multiform extensor fields on $U$ endowed with the
respective exterior product is an associative algebra.

The duality scalar product of a multiform extensor field $\tau$ with a
multivector extensor field $\sigma$ is a scalar extensor field $\left\langle
\tau,\sigma\right\rangle $ defined by%
\begin{equation}
\left\langle \tau,\sigma\right\rangle _{(p)}=\left\langle \tau_{(p)}%
,\sigma_{(p)}\right\rangle , \label{EF6}%
\end{equation}
for every $p\in U.$

The duality left contracted product of a multiform extensor field $\tau$ with
a multivector extensor field $\sigma$ (or, a multivector extensor field $\tau$
with a multiform extensor field $\sigma$) is the multivector extensor field
(respectively, the multiform extensor field) denoted by $\left\langle
\tau,\sigma\right\vert $ and defined by%
\begin{equation}
\left\langle \tau,\sigma\right\vert _{(p)}=\left\langle \tau_{(p)}%
,\sigma_{(p)}\right\vert , \label{EF7}%
\end{equation}
for every $p\in U.$

The duality right contracted product of a multiform extensor field $\tau$ with
a multivector extensor field $\sigma$ (or, a multivector extensor field $\tau$
with a multiform extensor field $\sigma$) is the multiform extensor field
(respectively, the multivector extensor field) named as $\left\vert
\tau,\sigma\right\rangle $, and defined by%
\begin{equation}
\left\vert \tau,\sigma\right\rangle _{(p)}=\left\vert \tau_{(p)},\sigma
_{(p)}\right\rangle \label{EF8}%
\end{equation}
for every $p\in U.$

As in the case of multivectors and multiforms, each duality contracted product
of smooth multivector extensor fields on $U$ with smooth multiform fields on
$U$ yields a non-associative algebra.

\section{Parallelism Structure and Covariant Derivatives}

In this section we present a theory of a general parallelism structure on an
arbitrary real differentiable manifold $M$ of dimension $n$. We present, with
this structure, a detailed theory of the covariant derivatives, deformed
covariant derivatives and relative covariant derivatives of
\textit{multivector}, \textit{multiform and extensor }fields.

We detailed some particular formulas valid for a symmetric parallelism
structure. These concepts play an important role in different physical
theories, in particular they are essential for those that want to have a deep
understanding of the geometric theories of the gravitational field.(see, e.g.,
\cite{fr2010, Sach, Eddington}) and do not want to mislead the curvature of a
connection defined in a manifold $M$ ($dimM=m)$ with the fact that $M$ may be
a bended submanifold (a brane) embedded in a Euclidean or speudo-Euclidean
manifold $\mathring{M}\simeq\mathbb{R}^{n}$ ($n>m$). Bending of a brane is
characterized by the shape extensor. See \cite{rw2013} for details.

\subsection{Parallelism Structure}

Let $U$ be an open set of the smooth manifolf $M.$

\begin{definition}
A connection for $M$ is a smooth $2$ vector variables vector operator field
$U\subset M$,
\[
\Gamma:\mathcal{V}(U)\times\mathcal{V}(U)\longrightarrow\mathcal{V}(U),
\]
for all $U\subset M$ such that it satisfies the following axioms:

\emph{(i)}\textbf{ }for all $f,g\in\mathcal{S}(U)$ and $a,b,v\in
\mathcal{V}(U)$%
\begin{equation}
\Gamma(fa+gb,v)=f\Gamma(a,v)+g\Gamma(b,v), \label{PS1}%
\end{equation}

\emph{(ii}\textbf{)} for all $f,g\in\mathcal{S}(U)$ and $a,v,w\in
\mathcal{V}(U)$%
\begin{equation}
\Gamma(a,fv+gw)=(af)v+(ag)w+f\Gamma(a,v)+g\Gamma(a,w), \label{PS2}%
\end{equation}

\end{definition}

The behavior of $\Gamma$ with respect to the first variable will be called
\emph{strong linearity}, and the behavior of $\Gamma$ with respect to the
second variable will be called \emph{quasi linearity}.

The restriction of $\Gamma$ to $U,$ may be called \emph{parallelism structure}
on $U.$ We will denote this by $\left\langle U,\Gamma\right\rangle $.

\subsection{Covariant Derivative of Multivector and Multiform Fields}

\begin{definition}
The $\emph{a}$\emph{-Directional Covariant Derivative} ($\emph{a}$%
\emph{-DCD})\emph{\ }of a smooth multivector field on $U$, associated with
$\left\langle U,\Gamma\right\rangle $, is the mapping%
\[%
{\displaystyle\bigwedge}
\mathcal{V}(U)\ni X\mapsto\nabla_{a}X\in%
{\displaystyle\bigwedge}
\mathcal{V}(U),
\]
such that the following axioms are satisfied:

\emph{(i)}\textbf{ }For all $f\in\mathcal{S}(U):$%
\begin{equation}
\nabla_{a}f=af. \label{CDMMF1}%
\end{equation}

\emph{(ii) }For all $X^{k}\in%
{\textstyle\bigwedge\nolimits^{k}}
\mathcal{V}(U)$ with $k\geq1:$%
\begin{align}
\nabla_{a}X^{k}(\omega^{1},\ldots,\omega^{k})  &  =aX^{k}(\omega^{1}%
,\ldots,\omega^{k})\nonumber\\
&  -X^{k}(\nabla_{a}\omega^{1},\ldots,\omega^{k})\cdots-X^{k}(\omega
^{1},\ldots,\nabla_{a}\omega^{k}), \label{CDMMF2}%
\end{align}
for every $\omega^{1},\ldots,\omega^{k}\in\mathcal{V}^{\star}(U).$
\end{definition}

\emph{(iii)} For all $X\in%
{\textstyle\bigwedge}
\mathcal{V}(U)$, if $X=\overset{n}{\underset{k=0}{%
{\textstyle\sum}
}}X^{k}$, then%
\begin{equation}
\nabla_{a}X=\overset{n}{\underset{k=0}{%
{\textstyle\sum}
}}\nabla_{a}X^{k}. \label{CDMMF3}%
\end{equation}

The basic properties of the $a$-\textit{DCD} of smooth multivector fields are:

The $a$-Directional Covariant Derivative Operator ($a$-\textit{DCDO)}
$\nabla_{a}$ when acting on multivector fields is grade-preserving, i.e.,
\begin{equation}
\text{if }X\in%
{\textstyle\bigwedge\nolimits^{k}}
\mathcal{V}(U),\text{ then }\nabla_{a}X\in%
{\textstyle\bigwedge\nolimits^{k}}
\mathcal{V}(U). \label{CDMMF4}%
\end{equation}

For all $f\in\mathcal{S}(U),$ $a,b\in\mathcal{V}(U)$ and $X\in%
{\displaystyle\bigwedge}
\mathcal{V}(U)$%
\begin{align}
\nabla_{a+b}X  &  =\nabla_{a}X+\nabla_{b}X,\nonumber\\
\nabla_{fa}X  &  =f\nabla_{a}X. \label{CDMMF5}%
\end{align}

For all $f\in\mathcal{S}(U),$ $a\in\mathcal{V}(U)$ and $X,Y\in%
{\displaystyle\bigwedge}
\mathcal{V}(U)$%
\begin{align}
\nabla_{a}(X+Y)  &  =\nabla_{a}X+\nabla_{a}Y,\nonumber\\
\nabla_{a}(fX)  &  =(af)X+f\nabla_{a}X. \label{CDMMF6}%
\end{align}

For all $a\in\mathcal{V}(U)$ and $X,Y\in%
{\displaystyle\bigwedge}
\mathcal{V}(U)$%
\begin{equation}
\nabla_{a}(X\wedge Y)=(\nabla_{a}X)\wedge Y+X\wedge\nabla_{a}Y. \label{CDMMF7}%
\end{equation}

\begin{definition}
The $a$-\textit{DCD} of a smooth multiform field on $U$ associated with
$\left\langle U,\Gamma\right\rangle $ is the mapping%
\[%
{\displaystyle\bigwedge}
\mathcal{V}^{\star}(U)\ni\Phi\mapsto\nabla_{a}\Phi\in%
{\displaystyle\bigwedge}
\mathcal{V}^{\star}(U),
\]
such that the following axioms are satisfied:
\end{definition}

\emph{(i)}\textbf{ }For all $f\in\mathcal{S}(U):$%
\begin{equation}
\nabla_{a}f=af. \label{CDMMF8}%
\end{equation}

\emph{(ii)} For all $\Phi_{k}\in%
{\displaystyle\bigwedge^{k}}
\mathcal{V}^{\star}(U)$ with $k\geq1:$%
\begin{align}
\nabla_{a}\Phi_{k}(v_{1},\ldots,v_{k})  &  =a\Phi_{k}(v_{1},\ldots
,v_{k})\nonumber\\
&  -\Phi_{k}(\nabla_{a}v_{1},\ldots,v_{k})\cdots-\Phi_{k}(v_{1},\ldots
,\nabla_{a}v_{k}), \label{CDMMF9}%
\end{align}
for every $v_{1},\ldots,v_{k}\in\mathcal{V}(U).$

\emph{(iii)} For all $\Phi\in%
{\displaystyle\bigwedge}
\mathcal{V}^{\star}(U):$ if $\Phi=\underset{k=0}{\overset{n}{%
{\textstyle\sum}
}}\Phi_{k},$ then%
\begin{equation}
\nabla_{a}\Phi=\underset{k=0}{\overset{n}{%
{\textstyle\sum}
}}\nabla_{a}\Phi_{k}. \label{CDMMF10}%
\end{equation}

The basic properties for the $a$-\textit{DCD} of smooth multiform fields are:

The $a$-\textit{DCDO} $\nabla_{a}$ when acting on multiform fields is
grade-preserving, i.e.,
\begin{equation}
\text{if }\Phi\in%
{\textstyle\bigwedge\nolimits^{k}}
\mathcal{V}^{\star}(U),\text{ then }\nabla_{a}\Phi\in%
{\textstyle\bigwedge\nolimits^{k}}
\mathcal{V}^{\star}(U). \label{CDMMF11}%
\end{equation}

For all $f\in\mathcal{S}(U),$ $a,b\in\mathcal{V}(U)$ and $\Phi\in%
{\displaystyle\bigwedge}
\mathcal{V}^{\star}(U)$
\begin{align}
\nabla_{a+b}\Phi &  =\nabla_{a}\Phi+\nabla_{b}\Phi,\nonumber\\
\nabla_{fa}\Phi &  =f\nabla_{a}\Phi. \label{CDMMF12}%
\end{align}

For all $f\in\mathcal{S}(u),$ $a\in\mathcal{V}(U)$ and $\Phi,\Psi\in%
{\displaystyle\bigwedge}
\mathcal{V}^{\star}(U)$
\begin{align}
\nabla_{a}(\Phi+\Psi)  &  =\nabla_{a}\Phi+\nabla_{a}\Psi,\nonumber\\
\nabla_{a}(f\Phi)  &  =(af)\Phi+f\nabla_{a}\Phi. \label{CDMMF13}%
\end{align}

For all $a\in\mathcal{V}(U)$ and $\Phi,\Psi\in%
{\displaystyle\bigwedge}
\mathcal{V}^{\star}(U)$
\begin{equation}
\nabla_{a}(\Phi\wedge\Psi)=(\nabla_{a}\Phi)\wedge\Psi+\Phi\wedge\nabla_{a}%
\Psi. \label{CDMMF14}%
\end{equation}
\medskip

We now present three remarkable properties involving the action of $\nabla
_{a}$ on the duality products of multivector and multiform fields.

\begin{proposition}
When $\nabla_{a}$ acts on the duality scalar product of $\Phi\in%
{\displaystyle\bigwedge}
\mathcal{V}^{\star}(U)$ with $X\in%
{\displaystyle\bigwedge}
\mathcal{V}(U)$ follows the Leibniz rule, i.e.,%
\begin{equation}
a\left\langle \Phi,X\right\rangle =\left\langle \nabla_{a}\Phi,X\right\rangle
+\left\langle \Phi,\nabla_{a}X\right\rangle . \label{CDMMF15}%
\end{equation}

$\nabla_{a}$ acting on the duality left contracted product of $\Phi\in%
{\displaystyle\bigwedge}
\mathcal{V}^{\star}(U)$ with $X\in%
{\displaystyle\bigwedge}
\mathcal{V}(U)$ (or, $X\in%
{\displaystyle\bigwedge}
\mathcal{V}(U)$ with $\Phi\in%
{\displaystyle\bigwedge}
\mathcal{V}^{\star}(U)$) satisfies the Leibniz rule, i.e.,
\begin{align}
\nabla_{a}\left\langle \Phi,X\right\vert  &  =\left\langle \nabla_{a}%
\Phi,X\right\vert +\left\langle \Phi,\nabla_{a}X\right\vert ,\label{CDMMF16}\\
\nabla_{a}\left\langle X,\Phi\right\vert  &  =\left\langle \nabla_{a}%
X,\Phi\right\vert +\left\langle X,\nabla_{a}\Phi\right\vert . \label{CDMMF17}%
\end{align}

$\nabla_{a}$ acting on the duality right contracted product of $\Phi\in%
{\displaystyle\bigwedge}
\mathcal{V}^{\star}(U)$ with $X\in%
{\displaystyle\bigwedge}
\mathcal{V}(U)$ (or, $X\in%
{\displaystyle\bigwedge}
\mathcal{V}^{^{\star}}(U)$ with $\Phi\in%
{\displaystyle\bigwedge}
\mathcal{V}(U)$) satisfies the Leibniz rule, i.e.,%
\begin{align}
\nabla_{a}\left\vert \Phi,X\right\rangle  &  =\left\vert \nabla_{a}%
\Phi,X\right\rangle +\left\vert \Phi,\nabla_{a}X\right\rangle ,
\label{CDMMF18}\\
\nabla_{a}\left\vert X,\Phi\right\rangle  &  =\left\vert \nabla_{a}%
X,\Phi\right\rangle +\left\vert X,\nabla_{a}\Phi\right\rangle .
\label{CDMMF19}%
\end{align}

\end{proposition}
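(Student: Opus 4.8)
The plan is to establish the scalar Leibniz rule (\ref{CDMMF15}) first, by a direct reduction to simple homogeneous fields, and then to deduce the four contracted Leibniz rules (\ref{CDMMF16})--(\ref{CDMMF19}) from it together with the adjointness identities (\ref{DCP8})--(\ref{DCP9}) (and their right-contracted analogues) and the non-degeneracy of the duality scalar product.

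To prove (\ref{CDMMF15}) I would first split all three terms over homogeneous degrees: by the decomposition axioms (\ref{DSP3}), (\ref{CDMMF3}), (\ref{CDMMF10}) and the grade-preserving properties (\ref{CDMMF4}), (\ref{CDMMF11}), both sides reduce to the claim for a $p$-form field $\Phi_p$ and a $p$-vector field $X^p$. Using additivity in $X$ (from (\ref{DSP8}) and (\ref{CDMMF6})) I further reduce $X^p$ to a simple field $v_1\wedge\cdots\wedge v_p$ with $v_i\in\mathcal{V}(U)$. Now (\ref{DSP5}) gives $\langle\Phi_p,v_1\wedge\cdots\wedge v_p\rangle=\Phi_p(v_1,\ldots,v_p)$; applying the derivation $a$ and invoking axiom (\ref{CDMMF9}) I rewrite $a\,\Phi_p(v_1,\ldots,v_p)$ as $\nabla_a\Phi_p(v_1,\ldots,v_p)+\sum_i\Phi_p(v_1,\ldots,\nabla_a v_i,\ldots,v_p)$. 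Reading the first summand back through (\ref{DSP5}) (legitimate since $\nabla_a\Phi_p$ is again a $p$-form by (\ref{CDMMF11})) gives $\langle\nabla_a\Phi_p,X^p\rangle$, while the remaining sum, again by (\ref{DSP5}), additivity, and the iterated Leibniz rule (\ref{CDMMF7}), equals $\langle\Phi_p,\nabla_a(v_1\wedge\cdots\wedge v_p)\rangle$. This is exactly (\ref{CDMMF15}) in the reduced case; reassembling the homogeneous pieces finishes the scalar rule.

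For (\ref{CDMMF16}) I would argue by non-degeneracy: fix an arbitrary $\Psi\in\bigwedge\mathcal{V}^{\star}(U)$ and pair both candidate sides with $\Psi$. On one hand, (\ref{CDMMF15}) applied to the scalar $\langle\langle\Phi,X\vert,\Psi\rangle$ gives $\langle\nabla_a\langle\Phi,X\vert,\Psi\rangle=a\langle\langle\Phi,X\vert,\Psi\rangle-\langle\langle\Phi,X\vert,\nabla_a\Psi\rangle$. On the other hand, identity (\ref{DCP8}) rewrites $\langle\langle\Phi,X\vert,\Psi\rangle=\langle X,\widetilde{\Phi}\wedge\Psi\rangle$, to which I apply (\ref{CDMMF15}) once more, then the Leibniz rule for $\wedge$ (\ref{CDMMF14}), and the fact that reversion commutes with $\nabla_a$ (immediate from grade preservation, since reversion acts by a grade-dependent sign); re-collapsing each of the three resulting terms through (\ref{DCP8}) shows $a\langle\langle\Phi,X\vert,\Psi\rangle=\langle\langle\nabla_a\Phi,X\vert,\Psi\rangle+\langle\langle\Phi,\nabla_a X\vert,\Psi\rangle+\langle\langle\Phi,X\vert,\nabla_a\Psi\rangle$. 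Subtracting, $\langle\nabla_a\langle\Phi,X\vert,\Psi\rangle=\langle\langle\nabla_a\Phi,X\vert+\langle\Phi,\nabla_a X\vert,\Psi\rangle$ for every $\Psi$, so the field version of non-degeneracy (\ref{DSP7}) yields (\ref{CDMMF16}). Equation (\ref{CDMMF17}) is identical with (\ref{DCP9}) in place of (\ref{DCP8}), and (\ref{CDMMF18})--(\ref{CDMMF19}) follow the same template with the right-contracted adjointness identities in place of (\ref{DCP8})--(\ref{DCP9}).

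I expect the only genuinely computational point to be the base case of (\ref{CDMMF15}); the thing to watch there is that the reduction to a simple $p$-vector field is valid --- which it is, because all three terms are additive in $X$ --- and that one correctly matches the $p$ summands produced by (\ref{CDMMF9}) against the $p$ summands produced by iterating (\ref{CDMMF7}). After that everything is formal bookkeeping with identities already in hand, so I do not anticipate a further obstacle; in particular the non-degeneracy step for fields is routine, following from pointwise non-degeneracy together with the existence of smooth form fields taking a prescribed value at a given point.
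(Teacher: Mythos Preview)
Your proposal is correct and, for the part the paper actually proves (namely Eq.~(\ref{CDMMF16})), follows essentially the same route: pair with an arbitrary $\Psi$, rewrite via the adjointness identity (\ref{DCP8}), apply the scalar Leibniz rule (\ref{CDMMF15}) together with (\ref{CDMMF14}) and grade preservation, then invoke non-degeneracy. The paper does not prove (\ref{CDMMF15}) at all---it simply uses it---so your reduction of that identity to simple homogeneous fields via (\ref{DSP5}), (\ref{CDMMF9}) and (\ref{CDMMF7}) is additional content beyond what the paper supplies, and it is sound.
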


\begin{proof}
We will prove only the statement given by Eq.(\ref{CDMMF16}). Take $\Phi\in%
{\displaystyle\bigwedge}
\mathcal{V}^{\star}(U),$ $X\in%
{\displaystyle\bigwedge}
\mathcal{V}(U)$ and $\Psi\in%
{\displaystyle\bigwedge}
\mathcal{V}^{\star}(U)$. By a property of the duality left contracted product,
we have%
\begin{equation}
\left\langle \left\langle \Phi,X\right\vert ,\Psi\right\rangle =\left\langle
X,\widetilde{\Phi}\wedge\Psi\right\rangle . \tag{a}%
\end{equation}

Now, by using Eq.(\ref{CDMMF15}) and Eq.(\ref{CDMMF14}), we can write%
\begin{align}
&  \left\langle \nabla_{a}\left\langle \Phi,X\right\vert ,\Psi\right\rangle
+\left\langle \left\langle \Phi,X\right\vert ,\nabla_{a}\Psi\right\rangle
\nonumber\\
&  =\left\langle \nabla_{a}X,\widetilde{\Phi}\wedge\Psi\right\rangle
+\left\langle X,\nabla_{a}(\widetilde{\Phi}\wedge\Psi)\right\rangle
\nonumber\\
&  =\left\langle \nabla_{a}X,\widetilde{\Phi}\wedge\Psi\right\rangle
+\left\langle X,(\nabla_{a}\widetilde{\Phi})\wedge\Psi)\right\rangle
+\left\langle X,\widetilde{\Phi}\wedge\nabla_{a}\Psi)\right\rangle . \tag{b}%
\end{align}
Thus, taking into account Eq.(\ref{CDMMF11}) and by recalling once again a
property of the duality left contracted product, it follows that%
\begin{equation}
\left\langle \nabla_{a}\left\langle \Phi,X\right\vert ,\Psi\right\rangle
=\left\langle \left\langle \Phi,\nabla_{a}X\right\vert ,\Psi\right\rangle
+\left\langle \left\langle \nabla_{a}\Phi,X\right\vert ,\Psi)\right\rangle .
\tag{c}%
\end{equation}
Then, by the non-degeneracy of the duality scalar product, the required result follows.
\end{proof}

\subsection{Covariant Derivative of Extensor Fields}

\begin{definition}
Let $\left\langle U,\Gamma\right\rangle $ be a parallelism structure on $U,$
and let us take $a\in\mathcal{V}(U)$. The $a$\emph{-DCD,} associated with
$\left\langle U,\Gamma\right\rangle $,\emph{\ }of a smooth
\textit{multivector} extensor field on $U$ or a smooth \textit{multiform}
extensor field on $U$ are the mappings
\[
\left.  \overset{\left.  {}\right.  }{ext}\right.  _{k}^{l}\mathcal{V}%
(U)\ni\tau\mapsto\nabla_{a}\tau\in\left.  \overset{\left.  {}\right.  }%
{ext}\right.  _{k}^{l}\mathcal{V}(U),
\]
and
\[
\left.  \overset{\ast}{ext}\right.  _{k}^{l}\mathcal{V}(U)\ni\tau\mapsto
\nabla_{a}\tau\in\left.  \overset{\ast}{ext}\right.  _{k}^{l}\mathcal{V}(U),
\]
such that for all $X_{1}\in%
{\displaystyle\bigwedge\nolimits_{1}^{\Diamond}}
\mathcal{V}(U),\ldots,X_{k}\in%
{\displaystyle\bigwedge\nolimits_{k}^{\Diamond}}
\mathcal{V}(U),$ and for all $\Phi^{1}\in%
{\displaystyle\bigwedge\nolimits_{1}^{\Diamond}}
\mathcal{V}^{\ast}(U),\ldots,$ $\Phi^{l}\in%
{\displaystyle\bigwedge\nolimits_{l}^{\Diamond}}
\mathcal{V}^{\ast}(U)$ we have%
\begin{align}
\nabla_{a}\tau(X_{1},...,X_{k},\Phi^{1},...,\Phi^{l})  &  =\nabla_{a}%
(\tau(X_{1},...,X_{k},\Phi^{1},...,\Phi^{l}))\nonumber\\
&  -\tau(\nabla_{a}X_{1},...,X_{k},\Phi^{1},...,\Phi^{l})-\cdots\nonumber\\
&  -\tau(X_{1},...,\nabla_{a}X_{k},\Phi^{1},...,\Phi^{l})\nonumber\\
&  -\tau(X_{1},...,X_{k},\nabla_{a}\Phi^{1},...,\Phi^{l})-\cdots\nonumber\\
&  -\tau(X_{1},...,X_{k},\Phi^{1},...,\nabla_{a}\Phi^{l}). \label{CDEF1}%
\end{align}

\end{definition}

The covariant derivative of smooth multivector (or multiform) extensor fields
has two basic properties:

(i) For $f\in\mathcal{S}(U),$ and $a,b\in\mathcal{V}(U),\mathcal{\ }$and
$\tau\in\left.  \overset{\left.  {}\right.  }{ext}\right.  _{k}^{l}%
\mathcal{V}(U)$ (or $\tau\in\left.  \overset{\ast}{ext}\right.  _{k}%
^{l}\mathcal{V}(U)$)%
\begin{align}
\nabla_{a+b}\tau &  =\nabla_{a}\tau+\nabla_{b}\tau\label{CDEF2a}\\
\nabla_{fa}\tau &  =f\nabla_{a}\tau. \label{CDEF2b}%
\end{align}

(ii) For $f\in\mathcal{S}(U),$ and $a\in\mathcal{V}(U),\mathcal{\ }$and
$\tau,\sigma\in\left.  \overset{\left.  {}\right.  }{ext}\right.  _{k}%
^{l}\mathcal{V}(U)$ (or $\tau,\sigma\in\left.  \overset{\ast}{ext}\right.
_{k}^{l}\mathcal{V}(U)$)%
\begin{align}
\nabla_{a}(\tau+\sigma)  &  =\nabla_{a}\tau+\nabla_{a}\sigma,\label{CDEF3a}\\
\nabla_{a}(f\tau)  &  =(af)\tau+f\nabla_{a}\tau. \label{CDEF3b}%
\end{align}

The covariant differentiation of the exterior product of smooth multivector
(or multiform) extensor fields satisfies the Leibniz's rule. We have

\begin{proposition}
For all $\tau\in\left.  \overset{\left.  {}\right.  }{ext}\right.  _{k}%
^{l}\mathcal{V}(U)$ and $\sigma\in\left.  \overset{\left.  {}\right.  }%
{ext}\right.  _{r}^{s}\mathcal{V}(U)$ (or, $\tau\in\left.  \overset{\ast}%
{ext}\right.  _{k}^{l}\mathcal{V}(U)$ and $\sigma\in\left.  \overset{\ast
}{ext}\right.  _{r}^{s}\mathcal{V}(U)$), it holds
\begin{equation}
\nabla_{a}(\tau\wedge\sigma)=(\nabla_{a}\tau)\wedge\sigma+\tau\wedge\left(
\nabla_{a}\sigma\right)  . \label{CDEF4}%
\end{equation}

\end{proposition}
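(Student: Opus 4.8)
The plan is to run the argument in close parallel with the proof of Eq.~(\ref{GPO21}) for the contracted extension operator, since $\nabla_{a}$ acting on extensor fields is defined, via~(\ref{CDEF1}), by exactly the same ``derivation on the slots'' prescription. First I would note that, by multilinearity over $\mathcal{S}(U)$ and by the pointwise definition~(\ref{EF5}) of the exterior product of extensor fields, it suffices to verify the identity after evaluating both sides on arbitrary homogeneous arguments, and that — as in the proof of~(\ref{GPO21}) — one may treat the representative case of a one-multivector, one-multiform extensor field, with the exterior product distributing the four arguments as $(\tau\wedge\sigma)(X,Y,\Phi,\Psi)=\tau(X,\Phi)\wedge\sigma(Y,\Psi)$. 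The general multi-argument case is then obtained by carrying out the very same computation termwise on each summand of the defining expansion of $\tau\wedge\sigma$.

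The key steps, in order, would be: (1) apply~(\ref{CDEF1}) to $\nabla_{a}(\tau\wedge\sigma)$ evaluated on $(X,Y,\Phi,\Psi)$, obtaining $\nabla_{a}$ of the field $(\tau\wedge\sigma)(X,Y,\Phi,\Psi)$ minus the four correction terms in which $\nabla_{a}$ hits one of $X,Y,\Phi,\Psi$; (2) replace $(\tau\wedge\sigma)(\cdot)$ by $\tau(\cdot)\wedge\sigma(\cdot)$ in every term; (3) on the leading term invoke the Leibniz rule for $\nabla_{a}$ on an exterior product of multivector (resp.\ multiform) fields, i.e.\ Eq.~(\ref{CDMMF7}) (resp.\ Eq.~(\ref{CDMMF14})), writing it as $(\nabla_{a}(\tau(X,\Phi)))\wedge\sigma(Y,\Psi)+\tau(X,\Phi)\wedge(\nabla_{a}(\sigma(Y,\Psi)))$; (4) regroup, so that $(\nabla_{a}(\tau(X,\Phi)))\wedge\sigma(Y,\Psi)$ together with the two corrections carrying $\nabla_{a}X$ and $\nabla_{a}\Phi$ assemble — again by~(\ref{CDEF1}), now applied to $\tau$ alone — into $(\nabla_{a}\tau)(X,\Phi)\wedge\sigma(Y,\Psi)$, and symmetrically the remaining three terms into $\tau(X,\Phi)\wedge(\nabla_{a}\sigma)(Y,\Psi)$; (5) recognize the resulting sum as $\bigl((\nabla_{a}\tau)\wedge\sigma+\tau\wedge(\nabla_{a}\sigma)\bigr)(X,Y,\Phi,\Psi)$ by~(\ref{EF5}) once more. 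Since $X,Y,\Phi,\Psi$ are arbitrary, the claimed equality of extensor fields follows.

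I do not expect a genuine obstacle; the only point requiring care is the bookkeeping in step~(4) — attributing each correction term to the correct factor and checking that no additional signs appear when $\nabla_{a}$ is moved past a factor. The latter is guaranteed because $\nabla_{a}$ is grade-preserving on both multivector and multiform fields (Eqs.~(\ref{CDMMF4}) and~(\ref{CDMMF11})), so no Koszul signs intervene and the regrouping is clean, exactly as in the proof of~(\ref{GPO21}). The multiform-extensor version is handled identically, using~(\ref{CDMMF14}) in place of~(\ref{CDMMF7}) at step~(3).
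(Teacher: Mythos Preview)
Your proposal is correct and follows essentially the same approach as the paper's own proof: reduce without loss of generality to extensors of the form $(X,\Phi)\mapsto\tau(X,\Phi)$ and $(Y,\Psi)\mapsto\sigma(Y,\Psi)$, expand $\nabla_{a}(\tau\wedge\sigma)(X,Y,\Phi,\Psi)$ via Eq.~(\ref{CDEF1}), apply Eq.~(\ref{EF5}) and the Leibniz rule Eq.~(\ref{CDMMF7}) to the leading term, then regroup using Eq.~(\ref{CDEF1}) for $\tau$ and $\sigma$ separately. Your explicit remark that grade-preservation (Eqs.~(\ref{CDMMF4}), (\ref{CDMMF11})) rules out Koszul signs in the regrouping is a nice clarification the paper leaves implicit.
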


\begin{proof}
Without loss of generality, we prove this statement only for multivector
extensor fields $(X,\Phi)\mapsto\tau(X,\Phi)$ and $(Y,\Psi)\mapsto
\sigma(Y,\Psi).$ Using Eq.(\ref{CDEF1}), we can write
\begin{align*}
&  \nabla_{a}(\tau\wedge\sigma)(X,Y,\Phi,\Psi)\\
&  =\nabla_{a}((\tau\wedge\sigma)(X,Y,\Phi,\Psi))\\
&  -(\tau\wedge\sigma)(\nabla_{a}X,Y,\Phi,\Psi)-(\tau\wedge\sigma
)(X,\nabla_{a}Y,\Phi,\Psi)\\
&  -(\tau\wedge\sigma)(X,Y,\nabla_{a}\Phi,\Psi)-(\tau\wedge\sigma
)(X,Y,\Phi,\nabla_{a}\Psi).
\end{align*}

Using Eq.(\ref{EF5}) and recalling Leibniz's rule for the covariant
differentiation of the exterior product of multivector fields, we have
\begin{align*}
&  \nabla_{a}(\tau\wedge\sigma)(X,Y,\Phi,\Psi)\\
&  =\nabla_{a}(\tau(X,\Phi))\wedge\sigma(Y,\Psi)+\tau(X,\Phi)\wedge\nabla
_{a}(\sigma(Y,\Psi))\\
&  -\tau(\nabla_{a}X,\Phi)\wedge\sigma(Y,\Psi)-\tau(X,\Phi)\wedge\sigma
(\nabla_{a}Y,\Psi)\\
&  -\tau(X,\nabla_{a}\Phi)\wedge\sigma(Y,\Psi)-\tau(X,\Phi)\wedge
\sigma(Y,\nabla_{a}\Psi),
\end{align*}
i.e.,
\begin{align*}
&  \nabla_{a}(\tau\wedge\sigma)(X,Y,\Phi,\Psi)\\
&  =(\nabla_{a}(\tau(X,\Phi))-\tau(\nabla_{a}X,\Phi)-\tau(X,\nabla_{a}%
\Phi))\wedge\sigma(Y,\Psi)\\
&  +\tau(X,\Phi)\wedge(\nabla_{a}(\sigma(Y,\Psi))-\sigma(\nabla_{a}%
Y,\Psi)-\sigma(Y,\nabla_{a}\Psi)).
\end{align*}
Then, using once again Eq.(\ref{CDEF1}) and Eq.(\ref{EF5}), the expected
result follows.
\end{proof}

The covariant differentiation of the duality scalar product and each one of
the duality contracted products of smooth extensor fields satisfies the
Leibniz's rule.

\begin{proposition}
For all $\tau\in\left.  \overset{\ast}{ext}\right.  _{k}^{l}\mathcal{V}(U)$
and $\sigma\in\left.  \overset{\left.  {}\right.  }{ext}\right.  _{r}%
^{s}\mathcal{V}(U)$ (or, $\tau\in\left.  \overset{\left.  {}\right.  }%
{ext}\right.  _{k}^{l}\mathcal{V}(U)$ and $\tau\in\left.  \overset{\ast}%
{ext}\right.  _{r}^{s}\mathcal{V}(U)$), we have that
\begin{equation}
\nabla_{a}\left\langle \tau,\sigma\right\rangle =\left\langle \nabla_{a}%
\tau,\sigma\right\rangle +\left\langle \tau,\nabla_{a}\sigma\right\rangle .
\label{CDEF5}%
\end{equation}

For all $\tau\in\left.  \overset{\ast}{ext}\right.  _{k}^{l}\mathcal{V}(U)$
and $\sigma\in\left.  \overset{\left.  {}\right.  }{ext}\right.  _{r}%
^{s}\mathcal{V}(U)$ (or, $\tau\in\left.  \overset{\left.  {}\right.  }%
{ext}\right.  _{k}^{l}\mathcal{V}(U)$ and $\tau\in\left.  \overset{\ast}%
{ext}\right.  _{r}^{s}\mathcal{V}(U)$), it holds
\begin{align}
\nabla_{a}\left\langle \tau,\sigma\right\vert  &  =\left\langle \nabla_{a}%
\tau,\sigma\right\vert +\left\langle \tau,\nabla_{a}\sigma\right\vert
,\label{CDEF6a}\\
\nabla_{a}\left\langle \sigma,\tau\right\vert  &  =\left\langle \nabla
_{a}\sigma,\tau\right\vert +\left\langle \sigma,\nabla_{a}\tau\right\vert .
\label{CDEF6b}%
\end{align}

For all $\tau\in\left.  \overset{\ast}{ext}\right.  _{k}^{l}\mathcal{V}(U)$
and $\sigma\in\left.  \overset{\left.  {}\right.  }{ext}\right.  _{r}%
^{s}\mathcal{V}(U)$ (or, $\tau\in\left.  \overset{\left.  {}\right.  }%
{ext}\right.  _{k}^{l}\mathcal{V}(U)$ and $\tau\in\left.  \overset{\ast}%
{ext}\right.  _{r}^{s}\mathcal{V}(U)$), it holds
\begin{align}
\nabla_{a}\left\vert \tau,\sigma\right\rangle  &  =\left\vert \nabla_{a}%
\tau,\sigma\right\rangle +\left\vert \tau,\nabla_{a}\sigma\right\rangle
,\label{CDEF7a}\\
\nabla_{a}\left\vert \tau,\sigma\right\rangle  &  =\left\vert \nabla_{a}%
\tau,\sigma\right\rangle +\left\vert \tau,\nabla_{a}\sigma\right\rangle .
\label{CDEF7b}%
\end{align}

\end{proposition}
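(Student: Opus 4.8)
The plan is to reduce all of (\ref{CDEF5})--(\ref{CDEF7b}) to the pointwise Leibniz rules already established for the duality products of multivector and multiform fields, namely Eqs.~(\ref{CDMMF15})--(\ref{CDMMF19}), exactly in the spirit of the proof of the preceding proposition for $\nabla_a$ on the exterior product of extensor fields. As there, there is no loss of generality in running the argument for extensor fields of one multivector variable and one multiform variable, say $(X,\Phi)\mapsto\tau(X,\Phi)$ and $(Y,\Psi)\mapsto\sigma(Y,\Psi)$; for general $k,l$ the computation is identical with more variables and more correction terms. I would write out (\ref{CDEF5}) in detail and then remark that the two contracted-product statements follow by the same manipulation, with $\left\langle \cdot,\cdot\right\rangle $ replaced throughout by $\left\langle \cdot,\cdot\right\vert $ or $\left\vert \cdot,\cdot\right\rangle $ and with (\ref{CDMMF15}) replaced by the relevant one of (\ref{CDMMF16})--(\ref{CDMMF19}) (note that (\ref{CDEF7a}) and (\ref{CDEF7b}) assert the same identity).

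First I would evaluate $\nabla_a\left\langle \tau,\sigma\right\rangle $ on arbitrary fields $X,Y,\Phi,\Psi$ using the defining formula (\ref{CDEF1}) for the covariant derivative of an extensor field together with the definition (\ref{EF6}) of the duality scalar product of extensor fields, so that $\left\langle \tau,\sigma\right\rangle (X,Y,\Phi,\Psi)=\left\langle \tau(X,\Phi),\sigma(Y,\Psi)\right\rangle $. This yields a leading term $\nabla_a\left\langle \tau(X,\Phi),\sigma(Y,\Psi)\right\rangle $ together with four correction terms, each with a minus sign, carrying $\nabla_a X$, $\nabla_a Y$, $\nabla_a\Phi$, $\nabla_a\Psi$ respectively. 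Next I would apply the Leibniz rule (\ref{CDMMF15}) to the leading term, splitting it as $\left\langle \nabla_a(\tau(X,\Phi)),\sigma(Y,\Psi)\right\rangle +\left\langle \tau(X,\Phi),\nabla_a(\sigma(Y,\Psi))\right\rangle $.

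Then I would regroup: two of the four correction terms combine with the first piece to give $\left\langle \nabla_a(\tau(X,\Phi))-\tau(\nabla_a X,\Phi)-\tau(X,\nabla_a\Phi),\sigma(Y,\Psi)\right\rangle $, and the other two combine symmetrically with the second piece. Recognizing the bracketed expressions as $(\nabla_a\tau)(X,\Phi)$ and $(\nabla_a\sigma)(Y,\Psi)$ by invoking (\ref{CDEF1}) once more, and re-reading the extensor duality products through (\ref{EF6}), the identity becomes $\nabla_a\left\langle \tau,\sigma\right\rangle (X,Y,\Phi,\Psi)=\left\langle \nabla_a\tau,\sigma\right\rangle (X,Y,\Phi,\Psi)+\left\langle \tau,\nabla_a\sigma\right\rangle (X,Y,\Phi,\Psi)$. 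Since $X,Y,\Phi,\Psi$ were arbitrary, (\ref{CDEF5}) follows.

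I do not expect a genuine obstacle here; the argument is a bookkeeping exercise driven entirely by (\ref{CDEF1}), (\ref{EF6})--(\ref{EF8}) and the already-proven field-level Leibniz rules. The one place that deserves care is the adaptation to the left and right contracted products: when passing from $\left\langle \cdot,\cdot\right\rangle $ to $\left\langle \cdot,\cdot\right\vert $ and $\left\vert \cdot,\cdot\right\rangle $ one must make sure to invoke the correctly oriented member of (\ref{CDMMF16})--(\ref{CDMMF19}) (i.e., match the order of multiform and multivector argument), since an inattentive regrouping there could attach the wrong correction terms to the wrong factor. For the scalar-product case (\ref{CDEF5}) the situation is fully symmetric and all signs are $+$, so that case is immediate.
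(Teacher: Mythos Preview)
Your proposal is correct and follows essentially the same approach as the paper: expand via the defining formula (\ref{CDEF1}), apply the field-level Leibniz rule for the relevant duality product, and regroup using (\ref{CDEF1}) again. The only cosmetic difference is that the paper writes out the left-contracted case (\ref{CDEF6a}) explicitly rather than the scalar-product case (\ref{CDEF5}), but the mechanics are identical.
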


\begin{proof}
We present only the proof of the property given by Eq.(\ref{CDEF6a}). Without
loss of generality, we prove this statement only for a multiform extensor
field $\tau$ and a multivector extensor field $\sigma$ such that
$(X,\Phi)\mapsto\tau(X,\Phi)$ and $(Y,\Psi)\mapsto\sigma(Y,\Psi)$.%
\[%
\begin{array}
[c]{ll}%
\nabla_{a}\left\langle \tau,\sigma\left\vert \left(  X,Y,\Phi,\Psi\right)
\right.  \right.  & =\nabla_{a}\left(  \left\langle \tau,\sigma\left\vert
\left(  X,Y,\Phi,\Psi\right)  \right.  \right.  \right)  -\left\langle
\tau,\sigma\left\vert \left(  \nabla_{a}X,Y,\Phi,\Psi\right)  \right.  \right.
\\
& -\left\langle \tau,\sigma\left\vert \left(  X,\nabla_{a}Y,\Phi,\Psi\right)
\right.  \right.  -\left\langle \tau,\sigma\left\vert \left(  X,Y,\nabla
_{a}\Phi,\Psi\right)  \right.  \right. \\
& -\left\langle \tau,\sigma\left\vert \left(  X,Y,\Phi,\nabla_{a}\Psi\right)
\right.  \right.
\end{array}
\]
or recalling that $\left\langle \tau,\sigma\left\vert _{\left(  p\right)
}\right.  \right.  =\left\langle \tau_{\left(  p\right)  },\sigma_{\left(
p\right)  }\left\vert {}\right.  \right.  $,%
\begin{equation}%
\begin{array}
[c]{ll}%
\nabla_{a}\left\langle \tau,\sigma\left\vert \left(  X,Y,\Phi,\Psi\right)
\right.  \right.  & =\nabla_{a}\left(  \left\langle \tau\left(  X,\Phi\right)
,\sigma\left(  Y,\Psi\right)  \left\vert {}\right.  \right.  \right)
-\left\langle \tau\left(  \nabla_{a}X,\Phi\right)  ,\sigma\left(
Y,\Psi\right)  \left\vert {}\right.  \right. \\
& -\left\langle \tau\left(  X,\Phi\right)  ,\sigma\left(  \nabla_{a}%
Y,\Psi\right)  \left\vert {}\right.  \right.  -\left\langle \tau\left(
X,\nabla_{a}\Phi\right)  ,\sigma\left(  Y,\Psi\right)  \left\vert {}\right.
\right. \\
& -\left\langle \tau\left(  X,\Phi\right)  ,\sigma\left(  Y,\nabla_{a}%
\Psi\right)  \left\vert {}\right.  \right.  .
\end{array}
\label{ENC1}%
\end{equation}
On the other hand, from Eq.(\ref{CDMMF18}), we can write
\[
\nabla_{a}\left(  \left\langle \tau\left(  X,\Phi\right)  ,\sigma\left(
Y,\Psi\right)  \left\vert {}\right.  \right.  \right)  =\left\langle
\nabla_{a}\tau\left(  X,\Phi\right)  ,\sigma\left(  Y,\Psi\right)  \left\vert
{}\right.  \right.  +\left\langle \tau\left(  X,\Phi\right)  ,\nabla_{a}%
\sigma\left(  Y,\Psi\right)  \left\vert {}\right.  \right.  ,
\]
Eq.(\ref{ENC1}) can by written as%
\[%
\begin{array}
[c]{l}%
\nabla_{a}\left\langle \tau,\sigma\left\vert \left(  X,Y,\Phi,\Psi\right)
\right.  \right. \\
=\left\langle \nabla_{a}\tau\left(  X,\Phi\right)  -\tau\left(  \nabla
_{a}X,\Phi\right)  -\tau\left(  X,\nabla_{a}\Phi\right)  ,\sigma\left(
Y,\Psi\right)  \left\vert {}\right.  \right. \\
+\left\langle \tau\left(  X,\Phi\right)  ,\nabla_{a}\sigma\left(
Y,\Psi\right)  -\sigma\left(  \nabla_{a}Y,\Psi\right)  -\sigma\left(
Y,\nabla_{a}\Psi\right)  \left\vert {}\right.  \right. \\
=\left\langle \left(  \nabla_{a}\tau\right)  \left(  X,\Phi\right)
,\sigma\left(  Y,\Psi\right)  \left\vert {}\right.  \right.  +\left\langle
\tau\left(  X,\Phi\right)  ,\left(  \nabla_{a}\sigma\right)  \left(
Y,\Psi\right)  \left\vert {}\right.  \right. \\
=\left\langle \nabla_{a}\tau,\sigma\left\vert \left(  X,Y,\Phi,\Psi\right)
\right.  \right.  +\left\langle \tau,\nabla_{a}\sigma\left\vert \left(
X,Y,\Phi,\Psi\right)  \right.  \right.  ,
\end{array}
\]
which proves our result.
\end{proof}

Finally we prove that the duality adjoint operator commutes with the
$a$-\textit{DCDO}, i.e., we have

\begin{proposition}
If $\tau$ is any one of the four smooth one-variable extensor fields on $U$,
then
\begin{equation}
(\nabla_{a}\tau)^{\bigtriangleup}=\nabla_{a}\tau^{\bigtriangleup}.
\label{CDEF8}%
\end{equation}

\end{proposition}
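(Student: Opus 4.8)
The plan is to reduce the identity to the defining property of the duality adjoint, Eq.~(\ref{DAE6}) (together with its analogues (\ref{DAE7}), (\ref{DAE10}), (\ref{DAE11}) for the other three flavours of one-variable extensor field), combined with the Leibniz rule for $\nabla_a$ on the duality scalar product, Eq.~(\ref{CDMMF15}), and the defining formula (\ref{CDEF1}) for the covariant derivative of a one-variable extensor field, which in that case reads $\nabla_a\tau(X)=\nabla_a(\tau(X))-\tau(\nabla_aX)$ (with $\Phi$ in place of $X$ when the argument is a multiform). Non-degeneracy of the duality scalar product, Eq.~(\ref{DSP7}), will then deliver the equality of the two extensor fields.

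First I would treat the representative case $\tau\in ext(\bigwedge\nolimits_{1}^{\diamond}V;\bigwedge\nolimits_{2}^{\diamond}V)$, a smooth one-multivector-variable multivector extensor field on $U$. Fix arbitrary smooth fields $X\in\bigwedge\nolimits_{1}^{\diamond}\mathcal{V}(U)$ and $\Phi\in\bigwedge\nolimits_{2}^{\diamond}\mathcal{V}^{\ast}(U)$. Applying (\ref{DAE6}) pointwise gives the equality of scalar fields $\left\langle\tau(X),\Phi\right\rangle=\left\langle X,\tau^{\bigtriangleup}(\Phi)\right\rangle$ on $U$. I would then apply $\nabla_a$ (which on scalar fields is just $a$) to both sides and expand each side by the Leibniz rule (\ref{CDMMF15}); next, on each side I replace $\nabla_a(\tau(X))$ by $(\nabla_a\tau)(X)+\tau(\nabla_aX)$ and $\nabla_a(\tau^{\bigtriangleup}(\Phi))$ by $(\nabla_a\tau^{\bigtriangleup})(\Phi)+\tau^{\bigtriangleup}(\nabla_a\Phi)$, using (\ref{CDEF1}). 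The resulting identity carries, on the left, the extra terms $\left\langle\tau(\nabla_aX),\Phi\right\rangle$ and $\left\langle\tau(X),\nabla_a\Phi\right\rangle$, and on the right the extra terms $\left\langle\nabla_aX,\tau^{\bigtriangleup}(\Phi)\right\rangle$ and $\left\langle X,\tau^{\bigtriangleup}(\nabla_a\Phi)\right\rangle$; invoking (\ref{DAE6}) once more, applied to the pairs $(\nabla_aX,\Phi)$ and $(X,\nabla_a\Phi)$, shows these matched pairs of extra terms coincide and cancel. What survives is exactly
\[
\left\langle(\nabla_a\tau)(X),\Phi\right\rangle=\left\langle X,(\nabla_a\tau^{\bigtriangleup})(\Phi)\right\rangle .
\]

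On the other hand, (\ref{DAE6}) applied to the extensor field $\nabla_a\tau$ (which again lies in $ext(\bigwedge\nolimits_{1}^{\diamond}V;\bigwedge\nolimits_{2}^{\diamond}V)$ by (\ref{CDEF1}) and the grade-preserving property of $\nabla_a$) gives $\left\langle(\nabla_a\tau)(X),\Phi\right\rangle=\left\langle X,(\nabla_a\tau)^{\bigtriangleup}(\Phi)\right\rangle$. Comparing the two displays, $\left\langle X,(\nabla_a\tau)^{\bigtriangleup}(\Phi)-(\nabla_a\tau^{\bigtriangleup})(\Phi)\right\rangle=0$ for every $X$, so non-degeneracy (\ref{DSP7}) forces $(\nabla_a\tau)^{\bigtriangleup}(\Phi)=(\nabla_a\tau^{\bigtriangleup})(\Phi)$, and since $\Phi$ was arbitrary the two one-form-variable extensor fields agree.

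Finally I would remark that the remaining three cases — $\tau$ a one-multivector-variable multiform extensor field, a one-multiform-variable multivector extensor field, and a one-multiform-variable multiform extensor field — are handled verbatim, the only change being to invoke the appropriate defining relation among (\ref{DAE7}), (\ref{DAE10}), (\ref{DAE11}) for the adjoint and the corresponding Leibniz rule among (\ref{CDMMF15})--(\ref{CDMMF19}); in each case the two cross terms produced by (\ref{CDEF1}) cancel against the two produced by the Leibniz rule, leaving precisely the defining identity for the adjoint of $\nabla_a\tau$. I do not anticipate a genuine obstacle: the argument is essentially bookkeeping. The one point requiring a little care is checking, in each of the four cases, that $\nabla_a\tau^{\bigtriangleup}$ has the same domain and codomain as $(\nabla_a\tau)^{\bigtriangleup}$, so that the concluding comparison via non-degeneracy is legitimate; this is immediate from (\ref{CDEF1}) together with the grade-preserving property of $\nabla_a$.
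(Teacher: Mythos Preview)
Your proof is correct and follows essentially the same approach as the paper's: both arguments treat the representative case of a one-multivector-variable multivector extensor field, combine the defining property (\ref{DAE6}) of the duality adjoint with the Leibniz rule (\ref{CDMMF15}) and the definition (\ref{CDEF1}), and conclude via non-degeneracy (\ref{DSP7}). The only cosmetic difference is that the paper starts from $\langle\nabla_a\tau^{\bigtriangleup}(\Phi),X\rangle$ and manipulates it step by step into $\langle\Phi,(\nabla_a\tau)(X)\rangle$, whereas you differentiate the identity $\langle\tau(X),\Phi\rangle=\langle X,\tau^{\bigtriangleup}(\Phi)\rangle$ and expand both sides symmetrically; the ingredients and logic are identical.
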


\begin{proof}
Without loss of generality, we prove this statement only for $\tau\in ext(%
{\displaystyle\bigwedge\nolimits_{1}^{\diamond}}
\mathcal{V}(U),%
{\displaystyle\bigwedge\nolimits^{\diamond}}
\mathcal{V}(U)).$

Let us $X\in%
{\displaystyle\bigwedge\nolimits_{1}^{\diamond}}
\mathcal{V}(U)$ and $\Phi\in%
{\displaystyle\bigwedge\nolimits^{\diamond}}
\mathcal{V}^{\ast}(U).$ We must prove that
\[
\left\langle \nabla_{a}\tau^{\bigtriangleup}(\Phi),X\right\rangle
=\left\langle \Phi,\nabla_{a}\tau(X)\right\rangle .
\]

By using Eq.(\ref{CDEF1}) and recalling the Leibniz's rule for the covariant
differentiation of the duality scalar product of multiform fields with
multivector fields, we can write
\[
\left\langle \nabla_{a}\tau^{\bigtriangleup}(\Phi),X\right\rangle
=\left\langle \nabla_{a}(\tau^{\bigtriangleup}(\Phi)),X\right\rangle
-\left\langle \tau^{\bigtriangleup}(\nabla_{a}(\Phi)),X\right\rangle ,
\]
and from Eq.(\ref{CDMMF15})%
\[
\left\langle \nabla_{a}\tau^{\bigtriangleup}(\Phi),X\right\rangle
=a\left\langle \tau^{\bigtriangleup}(\Phi),X\right\rangle -\left\langle
\tau^{\bigtriangleup}(\Phi),\nabla_{a}X\right\rangle -\left\langle
\tau^{\bigtriangleup}(\nabla_{a}\Phi),X\right\rangle .
\]

Recalling now the fundamental property of the duality \textit{adjoint}, we
get
\[
\left\langle \nabla_{a}\tau^{\bigtriangleup}(\Phi),X\right\rangle
=a\left\langle \Phi,\tau(X)\right\rangle -\left\langle \Phi,\tau(\nabla
_{a}X)\right\rangle -\left\langle \nabla_{a}\Phi,\tau(X)\right\rangle .
\]
Using once again the Leibniz's rule, Eq.(\ref{CDMMF15}), for the covariant
differentiation of the duality scalar product we get%
\[
\left\langle \nabla_{a}\tau^{\bigtriangleup}(\Phi),X\right\rangle
=\left\langle \Phi,\nabla_{a}(\tau(X))\right\rangle -\left\langle \Phi
,\tau(\nabla_{a}X)\right\rangle ,
\]
and thus, using once again Eq.(\ref{CDEF1}), the required result follows.
\end{proof}

In particular, the $a$\emph{-DCD,} \emph{of a smooth }$k$\emph{-covariant and
}$l$\emph{-contravariant vector }(\emph{or form})\emph{\ extensor field} on
$U,$ and associated with $\left\langle U,\Gamma\right\rangle ,$ are defined
by
\[
\left.  \overset{}{ext}\right.  _{k}^{l}\mathcal{V}(U)\ni\tau\mapsto\nabla
_{a}\tau\in\left.  \overset{}{ext}\right.  _{k}^{l}\mathcal{V}(U)\text{ and
}\left.  \overset{\ast}{ext}\right.  _{k}^{l}\mathcal{V}(U)\ni\tau
\mapsto\nabla_{a}\tau\in\left.  \overset{\ast}{ext}\right.  _{k}%
^{l}\mathcal{V}(U),
\]
respectively, such that for every $v_{1},\ldots,v_{k}\in\mathcal{V}(U),$ and
$\omega^{1},\ldots,\omega^{l}\in\mathcal{V}^{\ast}(U)$:%
\begin{align}
\nabla_{a}\tau(v_{1},\ldots,v_{k},\omega^{1},\ldots,\omega^{l})  &
=\nabla_{a}(\tau(v_{1},\ldots,v_{k},\omega^{1},\ldots,\omega^{l}))\nonumber\\
&  -\tau(\nabla_{a}v_{1},\ldots,v_{k},\omega^{1},\ldots,\omega^{l}%
)-\cdots\nonumber\\
&  -\tau(v_{1},\ldots,\nabla_{a}v_{k},\omega^{1},\ldots,\omega^{l})\nonumber\\
&  -\tau(v_{1},\ldots,v_{k},\nabla_{a}\omega^{1},\ldots,\omega^{l}%
)-\cdots\nonumber\\
&  -\tau(v_{1},\ldots,v_{k},\omega^{1},\ldots,\nabla_{a}\omega^{l}).
\label{EEF8}%
\end{align}

Where we can easily see that it meets the basic properties (\ref{CDEF2a}),
(\ref{CDEF2b}), (\ref{CDEF3a}) and (\ref{CDEF3b}).

It is also worth recalling here that the $a$\emph{-DCD\ }of a smooth vector
field on $U$, associated with $\left\langle U,\Gamma\right\rangle $, is the
mapping
\[
\mathcal{V}(U)\ni v\mapsto\nabla_{a}v\in\mathcal{V}(U)
\]
such that
\begin{equation}
\nabla_{a}v=\Gamma(a,v), \label{CDV1}%
\end{equation}
and the $a$\emph{-DCD }of a smooth form field on $U,$ is the mapping
\[
V^{\ast}(U)\ni\omega\mapsto\nabla_{a}\omega\in\mathcal{V}^{\ast}(U)
\]
such that for every $v\in\mathcal{V}(U)$
\begin{equation}
\nabla_{a}\omega(v)=a\omega(v)-\omega(\nabla_{a}v). \label{CDF1}%
\end{equation}

\subsection{Deformed Covariant Derivative}

Let $\left\langle U,\Gamma\right\rangle $ be a parallelism structure on $U.$
Let us take an invertible smooth extensor operator field $\lambda$ on
$V\supseteq U,$ i.e., $\lambda:\mathcal{V}(U)\rightarrow\mathcal{V}(U).$ The
$\lambda$\emph{-deformation} of $\left\langle U,\Gamma\right\rangle $ is a
well-defined connection on $U,$ namely $\overset{\lambda}{\Gamma}$, given by
\[
\mathcal{V}(U)\times\mathcal{V}(U)\ni(a,v)\mapsto\overset{\lambda}{\Gamma
}(a,v)\in\mathcal{V}(U)
\]
such that
\begin{equation}
\overset{\lambda}{\Gamma}(a,v)=\lambda(\Gamma(a,\lambda^{-1}(v))).
\label{DPS1}%
\end{equation}
It is easy to see that $\overset{\lambda}{\Gamma}$ is indeed a connection on
$U$ since it satisfies Eq.(\ref{PS1}) and Eq.(\ref{PS2}).

The parallelism structure $\left\langle U,\overset{\lambda}{\Gamma
}\right\rangle $ is said to be the $\lambda$\emph{-deformation} of
$\left\langle U,\Gamma\right\rangle $.

Let us take $a\in\mathcal{V}(U).$ The $a$-\emph{DCDO} associated with
$\left\langle U,\overset{\lambda}{\Gamma}\right\rangle ,$ namely
$\overset{\lambda}{\nabla}_{a}$, has the basic properties:

\begin{proposition}
For all $v\in\mathcal{V}(U)$%
\begin{equation}
\overset{\lambda}{\nabla}_{a}v=\lambda(\nabla_{a}\lambda^{-1}(v)).
\label{DPS2}%
\end{equation}
It follows from Eq.(\ref{CDV1}) and Eq.(\ref{DPS1}).

For all $\omega\in\mathcal{V}^{\ast}(U)$%
\begin{equation}
\overset{\lambda}{\nabla}_{a}\omega=\lambda^{-\bigtriangleup}(\nabla
_{a}\lambda^{\bigtriangleup}(\omega)). \label{DPS4}%
\end{equation}

\end{proposition}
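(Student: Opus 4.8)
The plan is to reduce the statement to the defining relation Eq.(\ref{CDF1}) for the directional covariant derivative of a form field, now read off for the \emph{deformed} parallelism structure $\left\langle U,\overset{\lambda}{\Gamma}\right\rangle$, together with the already established formula Eq.(\ref{DPS2}) for $\overset{\lambda}{\nabla}_{a}$ acting on vector fields and the defining property of the duality adjoint. Fix $a\in\mathcal{V}(U)$, $\omega\in\mathcal{V}^{\ast}(U)$ and an arbitrary $v\in\mathcal{V}(U)$. Since $\overset{\lambda}{\Gamma}$ was shown to be a genuine connection (it satisfies Eqs.(\ref{PS1}) and (\ref{PS2})), its form-$a$-DCD obeys the same axiom Eq.(\ref{CDF1}) as any other, so
\[
\overset{\lambda}{\nabla}_{a}\omega(v)=a\,\omega(v)-\omega(\overset{\lambda}{\nabla}_{a}v).
\]
It therefore suffices to show that the right-hand side equals $\lambda^{-\bigtriangleup}(\nabla_{a}\lambda^{\bigtriangleup}(\omega))(v)$.

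First I would record two elementary consequences of the adjoint property Eq.(\ref{DAE6}) (combined with Eq.(\ref{DSP4})) specialized to the vector operator $\lambda$ and its inverse: for any form $\Phi$ and any vector $w$ one has $\lambda^{\bigtriangleup}(\Phi)(w)=\Phi(\lambda(w))$ and $\lambda^{-\bigtriangleup}(\Phi)(w)=\Phi(\lambda^{-1}(w))$, where $\lambda^{-\bigtriangleup}:=(\lambda^{\bigtriangleup})^{-1}=(\lambda^{-1})^{\bigtriangleup}$; the last equality follows from applying the adjoint identity to $\lambda(\lambda^{-1}(w))=w$, which shows that inverting and taking the adjoint commute. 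Now set $\eta:=\lambda^{\bigtriangleup}(\omega)\in\mathcal{V}^{\ast}(U)$, so that $\eta(w)=\omega(\lambda(w))$ for all $w$.

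Then I would simply compute. Using $\lambda^{-\bigtriangleup}(\Phi)(v)=\Phi(\lambda^{-1}(v))$ with $\Phi=\nabla_{a}\eta$, and then Eq.(\ref{CDF1}) for the undeformed connection,
\[
\lambda^{-\bigtriangleup}(\nabla_{a}\eta)(v)=(\nabla_{a}\eta)(\lambda^{-1}(v))=a\,\eta(\lambda^{-1}(v))-\eta(\nabla_{a}\lambda^{-1}(v)).
\]
Since $\eta(\lambda^{-1}(v))=\omega(\lambda(\lambda^{-1}(v)))=\omega(v)$ and, by Eq.(\ref{DPS2}), $\eta(\nabla_{a}\lambda^{-1}(v))=\omega(\lambda(\nabla_{a}\lambda^{-1}(v)))=\omega(\overset{\lambda}{\nabla}_{a}v)$, the right-hand side becomes $a\,\omega(v)-\omega(\overset{\lambda}{\nabla}_{a}v)=\overset{\lambda}{\nabla}_{a}\omega(v)$. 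As $v\in\mathcal{V}(U)$ was arbitrary and a form field is determined by its values on vector fields, Eq.(\ref{DPS4}) follows.

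The main obstacle is bookkeeping rather than any real difficulty: one must be careful that the symbol $\lambda^{-\bigtriangleup}$ is unambiguous (i.e. that $(\lambda^{\bigtriangleup})^{-1}=(\lambda^{-1})^{\bigtriangleup}$) and that Eq.(\ref{CDF1}) is legitimately available for the deformed structure $\left\langle U,\overset{\lambda}{\Gamma}\right\rangle$. Both points are immediate — the first from the adjoint identity, the second because $\overset{\lambda}{\Gamma}$ is an honest connection — and everything else is a direct substitution of Eq.(\ref{DPS2}) and the adjoint identity into the definition of $\overset{\lambda}{\nabla}_{a}\omega$.
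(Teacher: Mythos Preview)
Your proof is correct and follows essentially the same route as the paper's: both start from Eq.(\ref{CDF1}) applied to the deformed connection, invoke Eq.(\ref{DPS2}) to expand $\overset{\lambda}{\nabla}_{a}v$, and use the defining property of the duality adjoint together with Eq.(\ref{CDF1}) for the undeformed connection to finish. The only cosmetic difference is direction: the paper expands $\overset{\lambda}{\nabla}_{a}\omega(v)$ and manipulates the second term until it matches $\lambda^{-\bigtriangleup}(\nabla_{a}\lambda^{\bigtriangleup}(\omega))(v)$, whereas you compute the latter expression directly and recognize it as the former.
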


\begin{proof}
Eq.(\ref{DPS2} follows from Eq.(\ref{CDV1}) and Eq.(\ref{DPS1})).\ To prove
Eq.(\ref{DPS4}) take $v\in\mathcal{V}(U)$. Then using Eq.(\ref{CDF1}) and
Eq.(\ref{DPS2}), we have%
\begin{align}
\overset{\lambda}{\nabla}_{a}\omega(v)  &  =a\omega(v)-\omega(\overset
{\lambda}{\nabla}_{a}v)\nonumber\\
&  =a\omega(v)-\omega(\lambda(\nabla_{a}\lambda^{-1}(v)))\nonumber\\
&  =a\left\langle \omega,v\right\rangle -\left\langle \omega,\lambda
(\nabla_{a}\lambda^{-1}(v))\right\rangle , \label{a}%
\end{align}
but, by recalling the fundamental property of the \emph{duality adjoint} and
by using once again Eq.(\ref{CDF1}), the second term in Eq.(\ref{a}) can be
written
\begin{align}
\left\langle \omega,\lambda(\nabla_{a}\lambda^{-1}(v))\right\rangle  &
=\left\langle \lambda^{\bigtriangleup}(\omega),\nabla_{a}\lambda
^{-1}(v)\right\rangle \nonumber\\
&  =a\left\langle \lambda^{\bigtriangleup}(\omega),\lambda^{-1}%
(v)\right\rangle -\left\langle \nabla_{a}\lambda^{\bigtriangleup}%
(\omega),\lambda^{-1}(v)\right\rangle \nonumber\\
&  =a\left\langle \omega,v\right\rangle -\left\langle \lambda^{-\bigtriangleup
}\nabla_{a}\lambda^{\bigtriangleup}(\omega),v\right\rangle , \label{b}%
\end{align}
Finally, putting Eq.(\ref{b}) into Eq.(\ref{a}), the expected result follows.
\end{proof}

\subsubsection{Deformed Covariant Derivative of Multivector and Multiform
Fields}

Recall that $\lambda^{\bigtriangleup}$ is the so-called \emph{duality adjoint}
of $\lambda,$ and $\lambda^{-\bigtriangleup}$ is a \emph{short notation} for
$(\lambda^{\bigtriangleup})^{-1}=(\lambda^{-1})^{\bigtriangleup}$. We give now
two properties of $\overset{\lambda}{\text{ }\nabla}_{a}$ which are
generalizations of the basic properties (\ref{DPS2}) and (\ref{DPS4}) above.

\begin{proposition}
For all $X\in%
{\displaystyle\bigwedge}
\mathcal{V}(U)$%
\begin{equation}
\overset{\lambda}{\nabla}_{a}X=\underline{\lambda}(\nabla_{a}\underline
{\lambda}^{-1}(X)), \label{DCD3}%
\end{equation}
where $\underline{\lambda}$ is the so-called \emph{exterior power extension}
of $\lambda,$ and $\underline{\lambda}^{-1}$ is a \emph{more simple notation}
for $(\underline{\lambda})^{-1}=\underline{(\lambda^{-1})}.$

For all $\Phi\in%
{\displaystyle\bigwedge}
\mathcal{V}^{\star}(U)$%
\begin{equation}
\overset{\lambda}{\nabla}_{a}\Phi=\underline{\lambda}^{-\bigtriangleup}%
(\nabla_{a}\underline{\lambda}^{\bigtriangleup}(\Phi)), \label{DCD4}%
\end{equation}
where $\underline{\lambda}^{\bigtriangleup}=(\underline{\lambda}%
)^{\bigtriangleup}=\underline{(\lambda^{\bigtriangleup})}$ and $\underline
{\lambda}^{-\bigtriangleup}=(\underline{\lambda}^{\bigtriangleup})^{-1}.$
\end{proposition}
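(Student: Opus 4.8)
The plan is to derive each identity by reducing it to the grade-one cases already proved, namely Eq.(\ref{DPS2}) for vectors and Eq.(\ref{DPS4}) for forms, and then propagating these through the exterior product by means of the Leibniz rules Eq.(\ref{CDMMF7}) and Eq.(\ref{CDMMF14}). The crucial structural facts to be invoked are: (a) $\underline{\lambda}$, $\underline{\lambda}^{-1}=\underline{(\lambda^{-1})}$, $\underline{\lambda}^{\bigtriangleup}=\underline{(\lambda^{\bigtriangleup})}$ and $\underline{\lambda}^{-\bigtriangleup}=\underline{(\lambda^{-\bigtriangleup})}$ are all exterior power extensions of their respective vector/form operators, hence are multiplicative with respect to $\wedge$ and restrict on $\mathcal{V}(U)$ (resp. $\mathcal{V}^{\ast}(U)$) to $\lambda,\lambda^{-1},\lambda^{\bigtriangleup},\lambda^{-\bigtriangleup}$; (b) they preserve grade and fix scalar fields; and (c) the Leibniz rules Eq.(\ref{CDMMF7}), Eq.(\ref{CDMMF14}) hold for the $a$-DCDO of \emph{any} parallelism structure, in particular for $\overset{\lambda}{\nabla}_{a}$ associated with $\langle U,\overset{\lambda}{\Gamma}\rangle$.

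For Eq.(\ref{DCD3}) I would first observe that both sides are additive in $X$ and grade preserving, so by axiom (iii) of the covariant derivative (Eq.(\ref{CDMMF3})) and the fact that every homogeneous $k$-vector field is, on any coordinate domain, a finite sum of simple fields, it suffices to treat $X=v_{1}\wedge\cdots\wedge v_{k}$ with $v_{i}\in\mathcal{V}(U)$ (the case $k=0$ being trivial since $\underline{\lambda}$ fixes scalars). Then $\underline{\lambda}^{-1}(X)=\lambda^{-1}(v_{1})\wedge\cdots\wedge\lambda^{-1}(v_{k})$, applying Eq.(\ref{CDMMF7}) for the undeformed $\nabla_{a}$ gives $\nabla_{a}\underline{\lambda}^{-1}(X)=\sum_{i}\lambda^{-1}(v_{1})\wedge\cdots\wedge\nabla_{a}\lambda^{-1}(v_{i})\wedge\cdots\wedge\lambda^{-1}(v_{k})$, and since $\underline{\lambda}$ distributes over $\wedge$ and acts as $\lambda$ on vectors, applying $\underline{\lambda}$ together with Eq.(\ref{DPS2}) yields $\sum_{i}v_{1}\wedge\cdots\wedge\overset{\lambda}{\nabla}_{a}v_{i}\wedge\cdots\wedge v_{k}$, which by Eq.(\ref{CDMMF7}) for the deformed connection equals $\overset{\lambda}{\nabla}_{a}(v_{1}\wedge\cdots\wedge v_{k})=\overset{\lambda}{\nabla}_{a}X$.

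The argument for Eq.(\ref{DCD4}) is entirely parallel on a simple multiform $\Phi=\omega^{1}\wedge\cdots\wedge\omega^{k}$: one computes $\underline{\lambda}^{\bigtriangleup}(\Phi)=\lambda^{\bigtriangleup}(\omega^{1})\wedge\cdots\wedge\lambda^{\bigtriangleup}(\omega^{k})$, then $\nabla_{a}\underline{\lambda}^{\bigtriangleup}(\Phi)=\sum_{i}\lambda^{\bigtriangleup}(\omega^{1})\wedge\cdots\wedge\nabla_{a}\lambda^{\bigtriangleup}(\omega^{i})\wedge\cdots$ by Eq.(\ref{CDMMF14}), and finally applying $\underline{\lambda}^{-\bigtriangleup}$ (multiplicative, equal to $\lambda^{-\bigtriangleup}$ on forms) together with Eq.(\ref{DPS4}) gives $\sum_{i}\omega^{1}\wedge\cdots\wedge\overset{\lambda}{\nabla}_{a}\omega^{i}\wedge\cdots=\overset{\lambda}{\nabla}_{a}\Phi$ by Eq.(\ref{CDMMF14}) for $\langle U,\overset{\lambda}{\Gamma}\rangle$; one then extends by additivity over simple forms and by axiom (iii) for multiforms (Eq.(\ref{CDMMF10})) over homogeneous components. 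The routine parts are the additivity and grade bookkeeping; the only point that genuinely needs care is the identification $\underline{\lambda}^{\bigtriangleup}=\underline{(\lambda^{\bigtriangleup})}$, $(\underline{\lambda})^{-1}=\underline{(\lambda^{-1})}$ and the commuting of inversion with the power extension, since it is precisely these that allow the grade-one formulas Eq.(\ref{DPS2}), Eq.(\ref{DPS4}) to pass through the wedge — these are the content of the appendices on extended operators and can simply be quoted, so I expect no obstacle beyond invoking them correctly.
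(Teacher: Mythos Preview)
Your proposal is correct and follows essentially the same approach as the paper: check scalars, verify the identity on simple $k$-vectors (resp.\ $k$-forms) via the Leibniz rules Eq.(\ref{CDMMF7}) and Eq.(\ref{CDMMF14}) combined with the grade-one cases Eq.(\ref{DPS2}) and Eq.(\ref{DPS4}), then extend by additivity and by Eq.(\ref{CDMMF3})/Eq.(\ref{CDMMF10}). Your write-up is in fact more explicit than the paper's, which only outlines these four steps and leaves the multiform case Eq.(\ref{DCD4}) entirely to the reader.
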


\begin{proof}
To prove the property for smooth multivector fields as given by Eq.(\ref{DCD3}%
), we make use the following criterion:

(\textbf{i}) We check the statement for scalar fields $f\in\mathcal{S}(U)$
(use Eq.(\ref{CDMMF1})).

(\textbf{ii}) We check the statement for simple $k$-vector fields $v_{1}%
\wedge\cdots\wedge v_{k}\in%
{\textstyle\bigwedge\nolimits^{k}}
\mathcal{V}(U)$ (use mathematical induction).

(\textbf{iii}) We check the statement for\ a finite addition of simple
$k$-vector fields $X^{k}+\cdots Z^{k}\in%
{\textstyle\bigwedge\nolimits^{k}}
\mathcal{V}(U)$ (use Eq.(\ref{CDMMF6}) and recalling the linear operator
character for the extended of a linear operator)

(\textbf{iv}) Then it becomes easy to prove the statement for multivector
fields $X\in%
{\textstyle\bigwedge}
\mathcal{V}(U)$.
\end{proof}

\subsubsection{Deformed Covariant Derivative of Extensor Fields}

We present now two properties for $\overset{\lambda}{\nabla}_{a}$ which are
generalizations of the properties (\ref{DCD3}) and (\ref{DCD4}).

\begin{proposition}
For all $\tau\in\left.  \overset{\left.  {}\right.  }{ext}\right.  _{k}%
^{l}\mathcal{V}(U):$%
\begin{equation}
\overset{\lambda}{\nabla}_{a}\tau=\underline{\lambda}\nabla_{a}\underline
{\lambda}^{-1}\tau, \label{DCD3a}%
\end{equation}
where $\underline{\lambda}^{-1}\tau$ means the action of $\underline{\lambda
}^{-1}$ on the smooth multivector extensor field $\tau,$ and $\underline
{\lambda}\nabla_{a}\underline{\lambda}^{-1}\tau$ is the action of
$\underline{\lambda}$ on the smooth multivector extensor field $\nabla
_{a}\underline{\lambda}^{-1}\tau$.
\end{proposition}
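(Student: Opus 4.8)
The plan is to reduce the claim to the identities (\ref{DCD3}) and (\ref{DCD4}) already established for multivector and multiform fields. Recall that $\overset{\lambda}{\nabla}_{a}\tau$ is by definition the $a$-DCD of extensor fields, Eq.(\ref{CDEF1}), taken with respect to the deformed parallelism structure $\langle U,\overset{\lambda}{\Gamma}\rangle$; that is, the expression obtained from the right-hand side of (\ref{CDEF1}) by replacing each $\nabla_{a}$ by $\overset{\lambda}{\nabla}_{a}$. Since a smooth multivector extensor field is determined by its values, it suffices to show that $(\underline{\lambda}\nabla_{a}\underline{\lambda}^{-1}\tau)$ and $(\overset{\lambda}{\nabla}_{a}\tau)$ agree when evaluated on an arbitrary admissible tuple $(X_{1},\ldots,X_{k},\Phi^{1},\ldots,\Phi^{l})$, with $X_{i}\in\bigwedge\nolimits_{i}^{\Diamond}\mathcal{V}(U)$ and $\Phi^{j}\in\bigwedge\nolimits_{j}^{\Diamond}\mathcal{V}^{\ast}(U)$.

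First I would expand the right-hand side by applying the definition (\ref{EPO18}) of the exterior power extension (for the operator $\lambda$) to peel off the outermost $\underline{\lambda}$, which leaves $\underline{\lambda}$ applied to $(\nabla_{a}\underline{\lambda}^{-1}\tau)$ evaluated on the transported arguments $\underline{\lambda}^{-1}(X_{i})$ and $\underline{\lambda}^{\bigtriangleup}(\Phi^{j})$. Next I would apply the definition (\ref{CDEF1}) of $\nabla_{a}$ to the extensor field $\underline{\lambda}^{-1}\tau$, producing one ``undifferentiated'' term and one term for each slot hit by $\nabla_{a}$; and in every such term I would rewrite $(\underline{\lambda}^{-1}\tau)(\cdots)$ by (\ref{EPO18}) applied now to the operator $\lambda^{-1}$ (so that $(\underline{\lambda^{-1}})^{-1}=\underline{\lambda}$ and $(\underline{\lambda^{-1}})^{\bigtriangleup}=\underline{\lambda}^{-\bigtriangleup}$) as $\underline{\lambda}^{-1}\circ\tau(\cdots)$, the interior slots now carrying $\underline{\lambda}$ on the multivector entries and $\underline{\lambda}^{-\bigtriangleup}$ on the multiform entries. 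In the undifferentiated term the inner and outer extensions cancel, since $\underline{\lambda}\circ\underline{\lambda}^{-1}=\mathrm{id}$ on the $V$-entries and $\underline{\lambda}^{-\bigtriangleup}\circ\underline{\lambda}^{\bigtriangleup}=\mathrm{id}$ on the $V^{\ast}$-entries, so that term collapses to $\underline{\lambda}(\nabla_{a}(\underline{\lambda}^{-1}(\tau(X_{1},\ldots,\Phi^{l}))))$, which equals $\overset{\lambda}{\nabla}_{a}(\tau(X_{1},\ldots,\Phi^{l}))$ by (\ref{DCD3}) applied to the multivector field $\tau(X_{1},\ldots,\Phi^{l})$. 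In the term where $\nabla_{a}$ falls on the $i$-th multivector slot, that slot picks up $\underline{\lambda}(\nabla_{a}\underline{\lambda}^{-1}(X_{i}))$, which is $\overset{\lambda}{\nabla}_{a}X_{i}$ by (\ref{DCD3}); in the term where $\nabla_{a}$ falls on the $j$-th multiform slot it picks up $\underline{\lambda}^{-\bigtriangleup}(\nabla_{a}\underline{\lambda}^{\bigtriangleup}(\Phi^{j}))=\overset{\lambda}{\nabla}_{a}\Phi^{j}$ by (\ref{DCD4}); and in both cases the outermost $\underline{\lambda}$ cancels the prefixing $\underline{\lambda}^{-1}$, leaving $\tau$ itself evaluated on the original arguments with exactly one slot covariantly differentiated. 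Collecting these contributions reproduces verbatim the expression defining $\overset{\lambda}{\nabla}_{a}\tau$ on $(X_{1},\ldots,X_{k},\Phi^{1},\ldots,\Phi^{l})$, which finishes the argument; the analogous formula for multiform extensor fields is obtained in exactly the same way, with (\ref{DCD4}) replacing (\ref{DCD3}) in the role of the output transformation.

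I expect the only real difficulty to be purely bookkeeping: keeping track, at each argument slot, of which member of the quartet $\underline{\lambda}$, $\underline{\lambda}^{-1}$, $\underline{\lambda}^{\bigtriangleup}$, $\underline{\lambda}^{-\bigtriangleup}$ appears, and verifying that the nested extensions cancel as claimed. This tacitly uses functoriality of the extension operation, $\underline{\mu\circ\nu}=\underline{\mu}\circ\underline{\nu}$, which is precisely what legitimizes $(\underline{\lambda})^{-1}=\underline{(\lambda^{-1})}$ and the cancellations; beyond (\ref{DCD3}), (\ref{DCD4}), (\ref{EPO18}) and the definition (\ref{CDEF1}), no new idea is required.
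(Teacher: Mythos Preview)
Your proposal is correct and follows essentially the same approach as the paper: both arguments rely on the definition (\ref{CDEF1}), the action (\ref{EPO18}) of the exterior power extension on extensors, and the already-established identities (\ref{DCD3}) and (\ref{DCD4}) for fields. The only cosmetic difference is that the paper works with the representative case $(X,\Phi)\mapsto\tau(X,\Phi)$ and computes from $(\overset{\lambda}{\nabla}_{a}\tau)(X,\Phi)$ toward $(\underline{\lambda}\nabla_{a}\underline{\lambda}^{-1}\tau)(X,\Phi)$, whereas you treat the general tuple and run the computation in the opposite direction.
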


\begin{proposition}
For all $\upsilon\in\left.  \overset{\ast}{ext}\right.  _{k}^{l}%
\mathcal{V}(U):$%
\begin{equation}
\overset{\lambda}{\nabla}_{a}\upsilon=\underline{\lambda}^{-\bigtriangleup
}\nabla_{a}\underline{\lambda}^{\bigtriangleup}\upsilon, \label{DCD3b}%
\end{equation}
where $\underline{\lambda}^{\bigtriangleup}\upsilon$ means the action of
$\underline{\lambda}^{\bigtriangleup}$ on the smooth multiform extensor field
$\upsilon,$ and $\underline{\lambda}^{-\bigtriangleup}\nabla_{a}%
\underline{\lambda}^{\bigtriangleup}\upsilon$ is the action of $\underline
{\lambda}^{-\bigtriangleup}$ on the smooth multiform extensor field
$\nabla_{a}\underline{\lambda}^{\bigtriangleup}\upsilon.$
\end{proposition}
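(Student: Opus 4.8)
The plan is to mirror the proof of the preceding proposition, Eq.~(\ref{DCD3a}), reducing the claim to the field-level deformation formulas Eq.~(\ref{DCD3}) and Eq.~(\ref{DCD4}) together with the defining identity Eq.~(\ref{CDEF1}) of the $a$-DCD of an extensor field and the commutation Eq.~(\ref{CDEF8}) of the duality adjoint with $\nabla_{a}$. As in the proofs of Eq.~(\ref{CDEF4}) and Eq.~(\ref{CDEF6a}), there is no loss of generality in establishing the identity for a multiform extensor field of one multivector variable and one multiform variable, say $(X,\Phi)\mapsto\upsilon(X,\Phi)$ with $\upsilon(X,\Phi)\in\bigwedge\mathcal{V}^{\star}(U)$; the general case follows by running the same argument slot by slot. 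Accordingly, fix $X\in\bigwedge\mathcal{V}(U)$ and $\Phi\in\bigwedge\mathcal{V}^{\star}(U)$ and evaluate both sides of Eq.~(\ref{DCD3b}) on the pair $(X,\Phi)$.

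For the left-hand side I would apply Eq.~(\ref{CDEF1}) with $\Gamma$ replaced by the deformed connection $\overset{\lambda}{\Gamma}$, obtaining $\overset{\lambda}{\nabla}_{a}\upsilon(X,\Phi)=\overset{\lambda}{\nabla}_{a}\bigl(\upsilon(X,\Phi)\bigr)-\upsilon(\overset{\lambda}{\nabla}_{a}X,\Phi)-\upsilon(X,\overset{\lambda}{\nabla}_{a}\Phi)$. Since $\upsilon(X,\Phi)$ and $\Phi$ are multiform fields, Eq.~(\ref{DCD4}) rewrites $\overset{\lambda}{\nabla}_{a}\bigl(\upsilon(X,\Phi)\bigr)=\underline{\lambda}^{-\bigtriangleup}\bigl(\nabla_{a}\underline{\lambda}^{\bigtriangleup}(\upsilon(X,\Phi))\bigr)$ and $\overset{\lambda}{\nabla}_{a}\Phi=\underline{\lambda}^{-\bigtriangleup}\bigl(\nabla_{a}\underline{\lambda}^{\bigtriangleup}(\Phi)\bigr)$, while Eq.~(\ref{DCD3}) rewrites $\overset{\lambda}{\nabla}_{a}X=\underline{\lambda}\bigl(\nabla_{a}\underline{\lambda}^{-1}(X)\bigr)$. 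Substituting these three expressions turns the left-hand side into a formula built solely from the undeformed $\nabla_{a}$ and the operators $\underline{\lambda}$, $\underline{\lambda}^{-1}$, $\underline{\lambda}^{\bigtriangleup}$ and $\underline{\lambda}^{-\bigtriangleup}$.

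For the right-hand side I would expand $\underline{\lambda}^{-\bigtriangleup}\nabla_{a}\underline{\lambda}^{\bigtriangleup}\upsilon$ evaluated at $(X,\Phi)$. First, the action of the exterior power extension of a form operator on an extensor field, Eq.~(\ref{EPO25}) applied to $\lambda^{\bigtriangleup}$ and using $(\underline{\lambda}^{\bigtriangleup})^{\bigtriangleup}=\underline{\lambda}$ and $(\underline{\lambda}^{\bigtriangleup})^{-1}=\underline{\lambda}^{-\bigtriangleup}$, gives $(\underline{\lambda}^{\bigtriangleup}\upsilon)(X,\Phi)=\underline{\lambda}^{\bigtriangleup}\circ\upsilon\bigl(\underline{\lambda}(X),\underline{\lambda}^{-\bigtriangleup}(\Phi)\bigr)$. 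Then I would differentiate the multiform extensor field $\underline{\lambda}^{\bigtriangleup}\upsilon$ with Eq.~(\ref{CDEF1}) (now with the undeformed $\Gamma$), distribute $\nabla_{a}$ across the composition $\underline{\lambda}^{\bigtriangleup}\circ\upsilon$ by applying Eq.~(\ref{CDEF1}) once more to $\underline{\lambda}^{\bigtriangleup}$ regarded as a one-variable multiform extensor field, and use Eq.~(\ref{CDEF8}) so that $\nabla_{a}$ commutes with the adjoint. Composing on the outside with $\underline{\lambda}^{-\bigtriangleup}$ and comparing term by term with the left-hand side, the compensating factors $\underline{\lambda}$ and $\underline{\lambda}^{-\bigtriangleup}$ that Eq.~(\ref{EPO25}) inserts on the argument slots cancel exactly against the factors $\underline{\lambda}^{-1}$ and $\underline{\lambda}^{\bigtriangleup}$ produced by Eq.~(\ref{DCD3}) and Eq.~(\ref{DCD4}), whence the two sides agree on $(X,\Phi)$ and the proposition follows.

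I expect the substitution itself to be routine; the delicate point, and the main obstacle, is the bookkeeping of adjoints and inverses. One must use the involutivity $(\underline{\lambda}^{\bigtriangleup})^{\bigtriangleup}=\underline{\lambda}$, the $\mathcal{S}(U)$-linearity of $\upsilon$ in each slot so that the $\underline{\lambda}$-type operators can be pulled through, and must keep track of the fact that, because the \emph{output} of $\upsilon$ is a multiform, it is $\underline{\lambda}^{-\bigtriangleup}$ and not $\underline{\lambda}$ that sits outermost, exactly as dictated by Eq.~(\ref{DCD4}). As a practical matter I would verify the cancellation first in the two extreme cases ($k=1$, $l=0$) and ($k=0$, $l=1$), and then assemble the general $(k,l)$ case by the same slot-by-slot reasoning used in the proof of Eq.~(\ref{CDEF4}).
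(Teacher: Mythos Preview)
Your overall plan matches the paper's approach: the paper only writes out the proof of Eq.~(\ref{DCD3a}) and leaves Eq.~(\ref{DCD3b}) as the analogous computation, and your reduction to a one-multivector/one-multiform extensor together with Eq.~(\ref{CDEF1}), Eq.~(\ref{DCD3}), Eq.~(\ref{DCD4}) and Eq.~(\ref{EPO25}) is exactly that adaptation.

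There is one step you should drop, however. You propose to ``distribute $\nabla_{a}$ across the composition $\underline{\lambda}^{\bigtriangleup}\circ\upsilon$'' and then invoke Eq.~(\ref{CDEF8}). This is neither needed nor harmless: treating $\underline{\lambda}^{\bigtriangleup}$ as a one-variable extensor and applying Eq.~(\ref{CDEF1}) produces an extra term $(\nabla_{a}\underline{\lambda}^{\bigtriangleup})(\cdots)$, which is generally nonzero since $\lambda$ is a smooth field rather than a constant. Eq.~(\ref{CDEF8}) only rewrites that term as $(\nabla_{a}\underline{\lambda})^{\bigtriangleup}(\cdots)$; it does not make it vanish. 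You would then have to generate compensating $(\nabla_{a}\underline{\lambda})$-type terms on the argument slots and check a further cancellation, which is avoidable work.

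The paper's computation (for Eq.~(\ref{DCD3a})) bypasses this entirely: it keeps the block $\nabla_{a}\bigl(\underline{\lambda}^{\bigtriangleup}(\upsilon(X,\Phi))\bigr)$ intact and recognizes it directly, via the definition of $\underline{\lambda}^{\bigtriangleup}\upsilon$, as the first term of Eq.~(\ref{CDEF1}) applied to $\nabla_{a}(\underline{\lambda}^{\bigtriangleup}\upsilon)$ evaluated at $(\underline{\lambda}^{-1}(X),\underline{\lambda}^{\bigtriangleup}(\Phi))$. After the outer action of $\underline{\lambda}^{-\bigtriangleup}$ substitutes those arguments, the three terms of the right-hand side line up with the three terms of the left-hand side without any product-rule expansion of $\underline{\lambda}^{\bigtriangleup}$. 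Remove that one step and your argument is clean and complete.
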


\begin{proof}
We will prove only the property for smooth multivector extensor fields as
given by Eq.(\ref{DCD3a}). Without restrictions on generality, we can work
with a multivector extensor field $(X,\Phi)\mapsto\tau(X,\Phi)$.

Then, using Eq.(\ref{CDEF1}) and taking into account the properties
(\ref{DCD3}) and (\ref{DCD4}), we can write%
\begin{align*}
&  (\overset{\lambda}{\nabla}_{a}\tau)(X,\Phi)\\
&  =\overset{\lambda}{\nabla}_{a}\tau(X,\Phi)-\tau(\overset{\lambda}{\nabla
}_{a}X,\Phi)-\tau(X,\overset{\lambda}{\nabla}_{a}\Phi)\\
&  =\underline{\lambda}(\nabla_{a}\underline{\lambda}^{-1}\circ\tau
(X,\Phi))-\tau(\underline{\lambda}(\nabla_{a}\underline{\lambda}^{-1}%
(X)),\Phi)-\tau(X,\underline{\lambda}^{-\bigtriangleup}(\nabla_{a}%
\underline{\lambda}^{\bigtriangleup}(\Phi))),
\end{align*}
i.e.,%
\begin{align*}
&  (\underline{\lambda}^{-1}\circ\overset{\lambda}{\nabla}_{a}\tau)(X,\Phi)\\
&  =\nabla_{a}\underline{\lambda}^{-1}\circ\tau(X,\Phi)-\underline{\lambda
}^{-1}\circ\tau(\underline{\lambda}(\nabla_{a}\underline{\lambda}%
^{-1}(X)),\Phi)\\
&  -\underline{\lambda}^{-1}\circ\tau(X,\underline{\lambda}^{-\bigtriangleup
}(\nabla_{a}\underline{\lambda}^{\bigtriangleup}(\Phi))).
\end{align*}

By recalling the action of an extended operator $\underline{\lambda}^{-1}$ on
a multivector extensor $\tau,$ we get%
\begin{align*}
&  (\underline{\lambda}^{-1}\circ\overset{\lambda}{\nabla}_{a}\tau)(X,\Phi)\\
&  =\nabla_{a}\underline{\lambda}^{-1}\tau(\underline{\lambda}^{-1}%
(X),\underline{\lambda}^{\bigtriangleup}(\Phi))-\underline{\lambda}^{-1}%
\tau(\nabla_{a}\underline{\lambda}^{-1}(X)),\underline{\lambda}%
^{\bigtriangleup}(\Phi))\\
&  -\underline{\lambda}^{-1}\tau(\underline{\lambda}^{-1}(X)),\nabla
_{a}\underline{\lambda}^{\bigtriangleup}(\Phi)).
\end{align*}

Using once again Eq.(\ref{CDEF1}), we have%
\[
(\overset{\lambda}{\nabla}_{a}\tau)(X,\Phi)=\underline{\lambda}\circ
(\nabla_{a}\underline{\lambda}^{-1}\tau)(\underline{\lambda}^{-1}%
(X),\underline{\lambda}^{\bigtriangleup}(\Phi)),
\]
and finally recalling once again the action of an exterior power extension
operator $\underline{\lambda}$ on a multivector extensor $\nabla_{a}%
\underline{\lambda}^{-1}\tau$, the required result follows.
\end{proof}

\subsection{Relative Covariant Derivative}

\begin{definition}
Let $\left\{  b_{\mu},\beta^{\mu}\right\}  $ be a pair of dual frame fields
for $U\subseteq M$. Associated with $\left\{  b_{\mu},\beta^{\mu}\right\}  $
we can construct a well-defined connection on\emph{\ }$U$ given by the
mapping
\[
B:\mathcal{V}(U)\times\mathcal{V}(U)\longrightarrow\mathcal{V}(U),
\]
such that
\begin{equation}
B(a,v)=\left[  a\beta^{\sigma}(v)\right]  b_{\sigma}. \label{RPS1}%
\end{equation}
$B$ is called \emph{relative connection }on $U$ with respect to\emph{\ }%
$\left\{  b_{\mu},\beta^{\mu}\right\}  $ \emph{(}or simply relative connection
for short\emph{)}.
\end{definition}

The parallelism structure $\left\langle U,B\right\rangle $ will be called
\emph{relative parallelism structure} with respect to $\left\{  b_{\mu}%
,\beta^{\mu}\right\}  .$

The \emph{\ }$a$\emph{-DCDO }induced by the relative connection will be
denoted by $\partial_{a}$. According with Eq.(\ref{CDV1}), the $a$\emph{-DCD}
of a smooth vector field is given by
\begin{equation}
\partial_{a}v=\left[  a\beta^{\sigma}(v)\right]  b_{\sigma}. \label{RPS2}%
\end{equation}

Note that $\partial_{a}$ is the unique $a$\emph{-DCDO }which satisfies the
condition
\begin{equation}
\partial_{a}b_{\mu}=0. \label{RPS3}%
\end{equation}

On the other hand, given a parallelism structure $\left\langle U_{0}%
,\Gamma\right\rangle $ and any relative parallelism structure $\left\langle
U,B\right\rangle ,$ such that $U_{0}\cap U\neq\phi$. There exists a smooth
$2$\emph{-covariant vector extensor field} on $U_{0}\cap U$, namely $\gamma$,
defined by
\[
\mathcal{V}(U_{0}\cap U)\times\mathcal{V}(U_{0}\cap U)\ni(a,v)\mapsto
\gamma(a,v)\in\mathcal{V}(U_{0}\cap U)
\]
such that
\begin{equation}
\gamma(a,v)=\beta^{\mu}(v)\nabla_{a}b_{\mu} \label{STh1}%
\end{equation}
which satisfies
\begin{equation}
\Gamma(a,v)=B(a,v)+\gamma(a,v). \label{STh2}%
\end{equation}

Such a extensor field $\gamma$ will be called the \emph{relative connection
extensor field}\footnote{The properties of the tensor field $\gamma_{\mu\nu
}^{\alpha}$ such that $\gamma(\partial_{\mu},\partial_{\nu})=\gamma_{\mu\nu
}^{\alpha}\partial_{\alpha}$ where $\{\partial_{\mu}\}$ is a basis for
$\mathcal{V}(U_{0}\cap U)$ are studied in details in \cite{rodoliv2006}.} on
$U_{0}\cap U$.

From Eq.(\ref{CDV1}),this means that for all $v\in\mathcal{V}(U_{0}\cap U):$
\begin{equation}
\nabla_{a}v=\partial_{a}v+\gamma_{a}(v), \label{STh3}%
\end{equation}
where $\nabla_{a}$ is the $a$-\emph{DCDO} associated with $\left\langle
U_{0},\Gamma\right\rangle $ and $\partial_{a}$ is the $a$-\emph{DCDO}
associated with $\left\langle U,B\right\rangle $ (note that $\gamma_{a}$ is a
smooth \emph{vector operator field }on $U_{0}\cap U$ defined by $\gamma
_{a}(v)=\gamma(a,v)$).

By using Eq.(\ref{CDF1}) and Eq.(\ref{STh3}), we get that for all $\omega
\in\mathcal{V}^{\ast}(U_{0}\cap U):$
\begin{equation}
\nabla_{a}\omega=\partial_{a}\omega-\gamma_{a}^{\bigtriangleup}(\omega),
\label{STh4}%
\end{equation}
where $\gamma_{a}^{\bigtriangleup}$ is the \emph{dual adjoint} of $\gamma_{a}$
(i.e., $\left\langle \gamma_{a}^{\bigtriangleup}(\omega),v\right\rangle
=\left\langle \omega,\gamma_{a}(v)\right\rangle $).

\subsubsection{Relative Covariant Derivative of Multivector and Multiform
Fields}

Let $\left\langle U_{0},\Gamma\right\rangle $ be a parallelism structure on
$U_{0},$ and let $\nabla_{a}$ be its associated $a$-\textit{DCDO}. Take any
relative parallelism structure $\left\langle U,B\right\rangle $ compatible
with $\left\langle U_{0},\Gamma\right\rangle $ (i.e., $U_{0}\cap
U\neq\emptyset$)$.$

We present now the split theorem, i.e. a generalization of Eqs. (\ref{STh3})
and (\ref{STh4}), for smooth multivector fields and for multiform fields. We
know that for all $v\in\mathcal{V}(U_{0}\cap U)$, $\nabla_{a}v=\partial
_{a}v+\gamma_{a}(v)$, and for all $\omega\in\mathcal{V}^{\star}(U_{0}\cap U)$,
$\nabla_{a}\omega=\partial_{a}\omega-\gamma_{a}^{\bigtriangleup}(\omega).$

\begin{theorem}
\emph{(a)}\textbf{\ }For all $X\in%
{\textstyle\bigwedge}
\mathcal{V}(U_{0}\cap U)$%
\begin{equation}
\nabla_{a}X=\partial_{a}X+\underset{\smile}{\gamma_{a}}(X), \label{RCD3}%
\end{equation}
where $\underset{\smile}{\gamma_{a}}$ is \emph{contracted extension
operator}\footnote{See appendix 6.3 and \cite{fmcr1}, to recall the notion of
the generalization procedure.}\emph{\ of} $\gamma_{a}$

\emph{(b)} For all $\Phi\in%
{\textstyle\bigwedge}
\mathcal{V}^{\star}(U_{0}\cap U)$%
\begin{equation}
\nabla_{a}\Phi=\partial_{a}\Phi-\underset{\smile}{\gamma_{a}^{\bigtriangleup}%
}(\Phi), \label{RCD4}%
\end{equation}
where $\underset{\smile}{\gamma_{a}^{\bigtriangleup}}$ is the so-called
\emph{contracted extension operator} of $\gamma_{a}^{\bigtriangleup}$ which as
we know coincides with the so-called duality adjoint of $\underset{\smile
}{\gamma_{a}}.$
\end{theorem}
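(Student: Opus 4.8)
The plan is to prove part (a) first by the standard degree-by-degree verification criterion — exactly the four-step scheme already used in the proof of Proposition (\ref{DCD3})–(\ref{DCD4}) — and then deduce part (b) from part (a) by the adjoint commutation property. For part (a), first I would check the identity on scalar fields $f\in\mathcal{S}(U_{0}\cap U)$: here $\nabla_{a}f=af=\partial_{a}f$ by Eq.(\ref{CDMMF1}), and $\underset{\smile}{\gamma_{a}}(f)=0$ since the contracted extension operator annihilates scalars (it is built so that on $\bigwedge^{0}$ it vanishes, cf. Eq.(\ref{GPO18}) with $k=0$), so both sides agree. Next I would check it on a vector field $v\in\mathcal{V}(U_{0}\cap U)$: this is precisely Eq.(\ref{STh3}), $\nabla_{a}v=\partial_{a}v+\gamma_{a}(v)$, together with the fact that $\underset{\smile}{\gamma_{a}}$ restricted to $\bigwedge^{1}\mathcal{V}(U_{0}\cap U)$ equals $\gamma_{a}$ (Eq.(\ref{GPO19})).

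The inductive step handles a simple $k$-vector $v_{1}\wedge\cdots\wedge v_{k}$. Both $\nabla_{a}$ and $\partial_{a}$ obey Leibniz over the exterior product (Eq.(\ref{CDMMF7})), and $\underset{\smile}{\gamma_{a}}$ also obeys the Leibniz rule over $\wedge$ — this is Eq.(\ref{GPO21}) read for the contracted extension acting on multivectors (equivalently, the derivation property of contracted extensions recalled in Appendix C via Eq.(\ref{GPO5})). Writing $v_{1}\wedge\cdots\wedge v_{k}=v_{1}\wedge(v_{2}\wedge\cdots\wedge v_{k})$, applying the induction hypothesis to the $(k-1)$-vector factor and the base case to $v_{1}$, and comparing the three Leibniz expansions term by term, the identity $\nabla_{a}=\partial_{a}+\underset{\smile}{\gamma_{a}}$ propagates to degree $k$. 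Then linearity in each argument extends it from simple $k$-vectors to arbitrary homogeneous $k$-vector fields (using Eq.(\ref{CDMMF6}) and the linearity of the contracted extension), and finally Eq.(\ref{CDMMF3}) — $\nabla_{a}$ acts grade-wise — together with the analogous grade-wise behavior of $\partial_{a}$ and $\underset{\smile}{\gamma_{a}}$ gives the result for a general $X\in\bigwedge\mathcal{V}(U_{0}\cap U)$.

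For part (b), the cleanest route is duality: for any $\Phi\in\bigwedge\mathcal{V}^{\star}(U_{0}\cap U)$ and any $X\in\bigwedge\mathcal{V}(U_{0}\cap U)$, apply the Leibniz rule for $\nabla_{a}$ on the duality scalar product (Eq.(\ref{CDMMF15})) and the corresponding Leibniz rule for $\partial_{a}$, subtract, and use part (a) to replace $\nabla_{a}X-\partial_{a}X$ by $\underset{\smile}{\gamma_{a}}(X)$; then transfer the contracted extension across the scalar product via the defining property of the duality adjoint of $\underset{\smile}{\gamma_{a}}$ (which, as the statement itself recalls, coincides with $\underset{\smile}{\gamma_{a}^{\bigtriangleup}}$, i.e. the contracted extension of $\gamma_{a}^{\bigtriangleup}$ — this is the multiform analogue of Proposition (\ref{CDEF8})/Eq.(\ref{GPO23})). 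Non-degeneracy of the duality scalar product, Eq.(\ref{DSP7}), then yields $\nabla_{a}\Phi-\partial_{a}\Phi=-\underset{\smile}{\gamma_{a}^{\bigtriangleup}}(\Phi)$. The main obstacle I anticipate is purely bookkeeping: making sure that the sign conventions and the Leibniz rule for $\underset{\smile}{\gamma_{a}}$ over $\wedge$ are invoked in exactly the form proved earlier (Eq.(\ref{GPO21}) for extensors vs. its specialization to multivectors), and that the claimed identification $\underset{\smile}{\gamma_{a}^{\bigtriangleup}}=(\underset{\smile}{\gamma_{a}})^{\bigtriangleup}$ is justified by the commutation of contracted extension with the duality adjoint; once those two ingredients are in place the argument is routine.
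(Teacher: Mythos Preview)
Your proposal is correct. For part (a) you follow exactly the paper's four-step criterion (scalars, simple $k$-vectors by induction using the Leibniz rules for $\nabla_a$, $\partial_a$, and $\underset{\smile}{\gamma_a}$, then linearity, then grade-wise extension), and you supply more detail than the paper, which only sketches the outline.

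For part (b) you take a genuinely different route from what the paper's framework suggests. The paper does not write out a proof of (b) at all; the implicit expectation, given the parallel statement of Eq.(\ref{STh4}) for $1$-forms and the symmetry of the setup, is that one repeats the same four-step verification directly on multiforms (scalars, $1$-forms via Eq.(\ref{STh4}), simple $k$-forms by induction using Eq.(\ref{CDMMF14}) and Eq.(\ref{GPO10}), then linearity). Your duality argument instead derives (b) from (a) in one stroke: pair with an arbitrary $X$, use the two Leibniz rules for the duality scalar product, substitute part (a), and invoke Eq.(\ref{GPO11}) to identify $(\underset{\smile}{\gamma_a})^{\bigtriangleup}=\underset{\smile}{\gamma_a^{\bigtriangleup}}$ before concluding by non-degeneracy. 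This is shorter and avoids a second induction; the only cost is that it relies on (a) being already established and on Eq.(\ref{GPO11}), whereas the direct four-step method keeps (a) and (b) logically independent. Both approaches are valid; yours is the more economical one.
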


\begin{proof}
To prove the property for smooth multivector fields as given by Eq.(\ref{RCD3}%
), we use as above the following procedure:.

(\textbf{i}) We check the statement for scalar fields $f\in\mathcal{S}(U).$

(\textbf{ii}) Next, we check the statement for simple $k$-vector fields
$v_{1}\wedge\cdots\wedge v_{k}\in%
{\textstyle\bigwedge\nolimits^{k}}
\mathcal{V}(U).$

(\textbf{iii}) Next we check the statement for a finite addition of simple $k
$-vector fields $X^{k}+\cdots Z^{k}\in%
{\textstyle\bigwedge\nolimits^{k}}
\mathcal{V}(U). $

(\textbf{iv}) We then can easily prove the statement for multivector fields
$X\in%
{\textstyle\bigwedge}
\mathcal{V}(U)$.
\end{proof}

Let $\left\langle U,B\right\rangle $ and $\left\langle U^{\prime},B^{\prime
}\right\rangle $, $U\cap U^{\prime}\neq\emptyset$, be two compatible
parallelism structures taken on a smooth manifold $M$. The $a$-\textit{DCDO}'s
associated with $\left\langle U,B\right\rangle $ and $\left\langle U^{\prime
},B^{\prime}\right\rangle $ are denoted by $\partial_{a}$ and $\partial
_{a}^{\prime}$, respectively. As we already know \cite{fmcr2}, there exists a
well-defined smooth extensor operator field on $U\cap U^{\prime}$, the
\emph{Jacobian field} $J$ (see Appendix$\ $E), such that the following two
basic properties are satisfied: for all $v\in\mathcal{V}(U\cap U^{\prime})$,
$\partial_{a}^{\prime}v=J(\partial_{a}J^{-1}(v))$, and for all $\omega
\in\mathcal{V}^{\star}(U\cap U^{\prime})$, $\partial_{a}^{\prime}%
\omega=J^{-\bigtriangleup}(\partial_{a}J^{\bigtriangleup}(\omega))$.

We can see immediately that the basic properties just recalled implies that
$\partial_{a}^{\prime}$ is the $J$-deformation of $\partial_{a}.$

We present now two properties for the relative covariant derivatives which are
generalizations of the basic properties just recalled above.

\begin{proposition}
For all $X\in%
{\displaystyle\bigwedge}
\mathcal{V}(U\cap U^{\prime})$%
\begin{equation}
\partial_{a}^{\prime}X=\underline{J}(\partial_{a}\underline{J}^{-1}(X)).
\label{RCD7}%
\end{equation}

For all $\Phi\in%
{\displaystyle\bigwedge}
\mathcal{V}^{\star}(U\cap U^{\prime})$%
\begin{equation}
\partial_{a}^{\prime}\Phi=\underline{J}^{-\bigtriangleup}(\partial
_{a}\underline{J}^{\bigtriangleup}(\Phi)). \label{RCD8}%
\end{equation}

\begin{proof}
Eq.(\ref{RCD7}) is an immediate consequence of Eq.(\ref{DCD3}).,
Eq.(\ref{RCD8}) is an immediate consequence of Eq.(\ref{DCD4})\emph{.}
\end{proof}
\end{proposition}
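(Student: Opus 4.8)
The plan is to obtain this proposition as a direct specialization of the two propositions on the deformed covariant derivative of multivector and multiform fields, Eqs.~(\ref{DCD3}) and (\ref{DCD4}), taking as deforming operator the Jacobian field $J$ and as undeformed connection the relative connection $B$.

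First I would invoke the fact recalled immediately above the statement: on $U\cap U'$ the two $a$-DCDO's $\partial_a$ and $\partial_a'$ satisfy $\partial_a' v = J(\partial_a J^{-1}(v))$ for every $v\in\mathcal{V}(U\cap U')$ and $\partial_a'\omega = J^{-\bigtriangleup}(\partial_a J^{\bigtriangleup}(\omega))$ for every $\omega\in\mathcal{V}^{\star}(U\cap U')$. Comparing these two identities with Eqs.~(\ref{DPS2}) and (\ref{DPS4}), one sees that $\partial_a'$ is precisely the $a$-DCDO associated with the $J$-deformation of the connection $B$; equivalently $B' = \overset{J}{B}$ on $U\cap U'$. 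If one prefers, this can also be checked directly from Eq.~(\ref{DPS1}) and the defining formula~(\ref{RPS1}), using that $v\mapsto J(B(a,J^{-1}(v)))$ is again a relative connection, namely the one built from the dual frame field obtained by transporting $\{b_\mu,\beta^\mu\}$ through $J$.

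With that identification in hand, Eq.~(\ref{RCD7}) is just Eq.~(\ref{DCD3}) with $\lambda$ replaced by $J$: since $\partial_a' = \overset{J}{\partial}_a$ and $\overset{J}{\nabla}_a X = \underline{J}(\nabla_a\underline{J}^{-1}(X))$ for all $X$ in the exterior algebra of multivector fields, we conclude $\partial_a' X = \underline{J}(\partial_a\underline{J}^{-1}(X))$ for all $X\in\bigwedge\mathcal{V}(U\cap U')$. In the same way, Eq.~(\ref{RCD8}) is Eq.~(\ref{DCD4}) with $\lambda = J$, giving $\partial_a'\Phi = \underline{J}^{-\bigtriangleup}(\partial_a\underline{J}^{\bigtriangleup}(\Phi))$ for all $\Phi\in\bigwedge\mathcal{V}^{\star}(U\cap U')$.

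The only step that is not a pure substitution — and hence the main obstacle, modest as it is — will be the identification $\partial_a' = \overset{J}{\partial}_a$, i.e., checking that the Jacobian field $J$ occupies exactly the slot of the invertible deforming extensor $\lambda$ in the general deformed-derivative theory. Once this is settled nothing further needs to be computed, because the proofs of Eqs.~(\ref{DCD3}) and (\ref{DCD4}) use only that $\lambda$ is an invertible smooth extensor operator field, a hypothesis that $J$ satisfies, so their conclusions transfer verbatim.
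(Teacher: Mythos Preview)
Your proposal is correct and follows essentially the same approach as the paper: the paper notes just before the proposition that the identities $\partial_a' v = J(\partial_a J^{-1}(v))$ and $\partial_a'\omega = J^{-\bigtriangleup}(\partial_a J^{\bigtriangleup}(\omega))$ mean that $\partial_a'$ is the $J$-deformation of $\partial_a$, and then the proof consists of the single observation that Eqs.~(\ref{RCD7}) and (\ref{RCD8}) are immediate consequences of Eqs.~(\ref{DCD3}) and (\ref{DCD4}) with $\lambda=J$. Your write-up spells out the identification step a bit more carefully, but the argument is the same.
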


\subsubsection{Relative Covariant Derivative of Extensor Fields}

As we know from (\ref{RCD3}) and (\ref{RCD4}),\ there exists a well-defined
smooth vector operator field on $U_{0}\cap U$, called the\emph{\ relative
connection field} $\gamma_{a}$, which satisfies the \emph{split theorem} valid
for smooth multivector fields and for smooth multiform fields, i.e.: for all
$X\in%
{\textstyle\bigwedge}
\mathcal{V}(U_{0}\cap U)$,$\nabla_{a}X=\partial_{a}X+\underset{\smile}%
{\gamma_{a}}(X)$, and for all $\Phi\in%
{\textstyle\bigwedge}
\mathcal{V}^{\star}(U_{0}\cap U)$, $\nabla_{a}\Phi=\partial_{a}\Phi
-\underset{\smile}{\gamma_{a}^{\bigtriangleup}}(\Phi)$.

We now present a split theorem for smooth multivector extensor fields and for
smooth multiform extensor fields, which are the generalizations of the
properties just recalled above.

\begin{theorem}
\ \emph{(i)} For all $\tau\in\left.  \overset{\left.  {}\right.  }%
{ext}\right.  _{k}^{l}\mathcal{V}(U_{0}\cap U):$%
\begin{equation}
\nabla_{a}\tau=\partial_{a}\tau+\underset{\smile}{\gamma_{a}}\tau,
\label{RCD3a}%
\end{equation}
where $\underset{\smile}{\gamma_{a}}\tau$ means the action of $\underset
{\smile}{\gamma_{a}}$ on the smooth multivector extensor field $\tau.$
\end{theorem}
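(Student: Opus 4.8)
The plan is to reduce, exactly as in the preceding proofs, to the case of a one multivector and one multiform variable extensor field $(X,\Phi)\mapsto\tau(X,\Phi)$, since the general case differs only in the number of arguments and hence in the number of terms to be collected. First I would evaluate the left-hand side of Eq.(\ref{RCD3a}) on a pair $(X,\Phi)$ with $X\in\bigwedge\mathcal{V}(U_{0}\cap U)$ and $\Phi\in\bigwedge\mathcal{V}^{\star}(U_{0}\cap U)$, applying the definition Eq.(\ref{CDEF1}) of the covariant derivative of an extensor field to obtain
\[
(\nabla_{a}\tau)(X,\Phi)=\nabla_{a}(\tau(X,\Phi))-\tau(\nabla_{a}X,\Phi)-\tau(X,\nabla_{a}\Phi).
\]

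Next I would substitute the split theorems already available. Since $\tau(X,\Phi)$ is itself a smooth multivector field on $U_{0}\cap U$, Eq.(\ref{RCD3}) gives $\nabla_{a}(\tau(X,\Phi))=\partial_{a}(\tau(X,\Phi))+\underset{\smile}{\gamma_{a}}(\tau(X,\Phi))$; likewise $\nabla_{a}X=\partial_{a}X+\underset{\smile}{\gamma_{a}}(X)$ by Eq.(\ref{RCD3}) and $\nabla_{a}\Phi=\partial_{a}\Phi-\underset{\smile}{\gamma_{a}^{\bigtriangleup}}(\Phi)$ by Eq.(\ref{RCD4}). Inserting these three identities and expanding with the multilinearity of $\tau$ in each slot, the resulting terms separate into two blocks. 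The block carrying the $\partial_{a}$'s is
\[
\partial_{a}(\tau(X,\Phi))-\tau(\partial_{a}X,\Phi)-\tau(X,\partial_{a}\Phi),
\]
which is precisely $(\partial_{a}\tau)(X,\Phi)$ by Eq.(\ref{CDEF1}) read for the $a$-\textit{DCDO} $\partial_{a}$ of $\langle U,B\rangle$. The block carrying the contracted extension operators is
\[
\underset{\smile}{\gamma_{a}}(\tau(X,\Phi))-\tau(\underset{\smile}{\gamma_{a}}(X),\Phi)+\tau(X,\underset{\smile}{\gamma_{a}^{\bigtriangleup}}(\Phi)),
\]
which is exactly $(\underset{\smile}{\gamma_{a}}\tau)(X,\Phi)$ by the defining formula Eq.(\ref{GPO18}) with $\gamma=\gamma_{a}$, recalling from part (b) of the split theorem for fields that $\underset{\smile}{\gamma_{a}^{\bigtriangleup}}$ is the duality adjoint of $\underset{\smile}{\gamma_{a}}$. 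Since $X$ and $\Phi$ were arbitrary, this yields $\nabla_{a}\tau=\partial_{a}\tau+\underset{\smile}{\gamma_{a}}\tau$, and the smoothness of $\gamma_{a}$ together with the module structure recalled earlier shows that $\partial_{a}\tau+\underset{\smile}{\gamma_{a}}\tau$ is again a smooth $\binom{l}{k}$ multivector extensor field on $U_{0}\cap U$.

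The step I expect to be the main --- and essentially the only --- obstacle is purely bookkeeping: one must track signs and adjoints carefully so that the $\Phi$-slot contribution comes out as $+\tau(X,\underset{\smile}{\gamma_{a}^{\bigtriangleup}}(\Phi))$, namely the product of the $-1$ in front of $\tau(X,\nabla_{a}\Phi)$ in Eq.(\ref{CDEF1}) with the $-1$ in front of $\underset{\smile}{\gamma_{a}^{\bigtriangleup}}$ in Eq.(\ref{RCD4}), so that the sign pattern of Eq.(\ref{GPO18}) is reproduced verbatim. There is no analytic content beyond the split theorems for vector, form, multivector and multiform fields, all of which are in hand from the statements above; the general $k$ multivector, $l$ multiform variable case is handled in precisely the same way, with one $\underset{\smile}{\gamma_{a}}$-term arising from each multivector slot and one $\underset{\smile}{\gamma_{a}^{\bigtriangleup}}$-term from each multiform slot.
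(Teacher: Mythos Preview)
Your proposal is correct and follows essentially the same approach as the paper's own proof: reduce to the representative case $(X,\Phi)\mapsto\tau(X,\Phi)$, apply Eq.(\ref{CDEF1}), substitute the split theorems Eq.(\ref{RCD3}) and Eq.(\ref{RCD4}) into each occurrence of $\nabla_{a}$, and regroup the result into the $\partial_{a}$-block (identified again via Eq.(\ref{CDEF1})) and the $\underset{\smile}{\gamma_{a}}$-block (identified via Eq.(\ref{GPO18})). Your remark on the sign in the $\Phi$-slot is exactly the point the paper handles implicitly.
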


\emph{(ii)} For all $\upsilon\in\left.  \overset{\ast}{ext}\right.  _{k}%
^{l}\mathcal{V}(U_{0}\cap U):$%
\begin{equation}
\nabla_{a}\upsilon=\partial_{a}\upsilon-\underset{\smile}{\gamma
_{a}^{\bigtriangleup}}\upsilon, \label{RCD3b}%
\end{equation}
where $\underset{\smile}{\gamma_{a}^{\bigtriangleup}}\upsilon$ means the
action of $\underset{\smile}{\gamma_{a}^{\bigtriangleup}}$\ on the smooth
multiform extensor field $\upsilon.$

\begin{proof}
We prove only the property for smooth multivector extensor fields, i.e.,
Eq.(\ref{RCD3a}). Without loss of generality, we check this statement for a
multivector extensor field $(X,\Phi)\mapsto\tau(X,\Phi)$.

Using Eq.(\ref{CDEF1}), and by taking into account the properties just
recalled above, we have%
\begin{align*}
\nabla_{a}\tau(X,\Phi)  &  =\nabla_{a}\tau(X,\Phi)-\tau(\nabla_{a}X,\Phi
)-\tau(X,\nabla_{a}\Phi)\\
&  =\partial_{a}\tau(X,\Phi)+\underset{\smile}{\gamma_{a}}(\tau(X,\Phi))\\
&  -\tau(\partial_{a}X+\underset{\smile}{\gamma_{a}}(X),\Phi)-\tau
(X,\partial_{a}\Phi-\underset{\smile}{\gamma_{a}^{\bigtriangleup}}(\Phi))\\
&  =\partial_{a}\tau(X,\Phi)-\tau(\partial_{a}X,\Phi)-\tau(X,\partial_{a}%
\Phi)\\
&  +\underset{\smile}{\gamma_{a}}(\tau(X,\Phi))-\tau(\underset{\smile}%
{\gamma_{a}}(X),\Phi)+\tau(X,\underset{\smile}{\gamma_{a}^{\bigtriangleup}%
}(\Phi)).
\end{align*}
Then, using once again Eq.(\ref{CDEF1}) and recalling the action of a
generalized operator $\underset{\smile}{\gamma_{a}}$ on a multivector extensor
$\tau$, see Eq.(\ref{GPO18}), we get
\[
(\nabla_{a}\tau)(X,\Phi)=(\partial_{a}\tau)(X,\Phi)+(\underset{\smile}%
{\gamma_{a}}\tau)(X,\Phi),
\]
and the proposition is proved.
\end{proof}

\section{Torsion and Curvature}

\begin{definition}
Let $\left\langle U,\Gamma\right\rangle $ be a parallelism structure on $U$.
The smooth $2$\emph{-covariant vector extensor field} on $U$, defined by
\[
\mathcal{V}(U)\times\mathcal{V}(U)\ni(a,b)\mapsto\tau(a,b)\in\mathcal{V}(U)
\]
such that
\begin{equation}
\tau(a,b)=\nabla_{a}b-\nabla_{b}a-\left[  a,b\right]  , \label{TF1}%
\end{equation}
will be called the \emph{fundamental torsion extensor field\footnote{$\tau$ is
skew-symmetric, i.e.,
\[
\tau\left(  b,a\right)  =-\tau\left(  a,b\right)  .
\]
}} of $\left\langle U,\Gamma\right\rangle $.
\end{definition}

Accordingly, there exists a smooth $(2,1)$\emph{-extensor field }on $U,$
defined by%
\[%
{\textstyle\bigwedge\nolimits^{2}}
\mathcal{V}(U)\ni X^{2}\mapsto\mathcal{T}(X^{2})\in\mathcal{V}(U)
\]
such that%
\begin{equation}
\mathcal{T}(X^{2})=\frac{1}{2}\left\langle \varepsilon^{\mu}\wedge
\varepsilon^{\nu},X^{2}\right\rangle \tau(e_{\mu},e_{\nu}), \label{TF3}%
\end{equation}
where $\left\{  e_{\mu},\varepsilon^{\mu}\right\}  $ is any pair of dual frame
fields on $V\supseteq U.$ It should be emphasized that $\mathcal{T}$, as
extensor field associated with $\tau$, is well-defined since the vector field
$\mathcal{T}(X^{2})$ does not depend on the choice of $\left\{  e_{\mu
},\varepsilon^{\mu}\right\}  .$

It should be remarked that the well known \emph{torsion tensor field, }see,
e.g., \cite{Choquet,rodoliv2006},\emph{ }is just given by%
\[
\mathcal{V}(U)\times\mathcal{V}(U)\times\mathcal{V}^{\ast}(U)\ni
(a,b,\omega)\mapsto T(a,b,\omega)\in\mathcal{S}(U)
\]
such that%
\begin{equation}
T(a,b,\omega)=\left\langle \omega,\tau(a,b)\right\rangle . \label{TF5}%
\end{equation}

Note that it is possible to get $\tau$ in terms of $T,$ i.e.,%
\begin{equation}
\tau(a,b)=T(a,b,\varepsilon^{\mu})e_{\mu}. \label{TF6}%
\end{equation}

It is also possible to introduce a \emph{third torsion extensor field} for
$\left\langle U,\Gamma\right\rangle $ by defining the smooth $(1,2)$%
\emph{-extensor field }on $U$,%
\[
\mathcal{V}^{\ast}(U)\ni\omega\mapsto\Theta(\omega)\in%
{\textstyle\bigwedge\nolimits^{2}}
\mathcal{V}^{\ast}(U),
\]
such that
\begin{equation}
\Theta(\omega)=\frac{1}{2}\left\langle \omega,\tau(e_{\mu},e_{\nu
})\right\rangle \varepsilon^{\mu}\wedge\varepsilon^{\nu}, \label{TF7}%
\end{equation}
We call $\Theta$ the \emph{Cartan torsion extensor field} of $\left\langle
U,\Gamma\right\rangle $.\smallskip

\begin{definition}
The smooth $3$\emph{-covariant extensor vector field} on $U$, defined by
\[
\mathcal{V}(U)\times\mathcal{V}(U)\times\mathcal{V}(U)\ni(a,b,c)\mapsto
\rho(a,b,c)\in\mathcal{V}(U),
\]
such that
\begin{equation}
\rho(a,b,c)=\left[  \nabla_{a},\nabla_{b}\right]  c-\nabla_{\left[
a,b\right]  }c, \label{CF1}%
\end{equation}
will be called the \emph{fundamental curvature extensor field\footnote{As can
be easily verified, $\rho$ is skew-symmetric with respect to the first and the
second variables, i.e.,$\rho(b,a,c)=-\rho(a,b,c).$}} of $\left\langle
U,\Gamma\right\rangle $.
\end{definition}

Thus, there exists a smooth $1$\emph{\ bivector and }$1$\emph{\ vector
variables vector extensor field }on $U$, defined by%
\[%
{\displaystyle\bigwedge\nolimits^{2}}
\mathcal{V}(U)\times\mathcal{V}(U)\ni(X^{2},c)\mapsto\mathcal{R}(X^{2}%
,c)\in\mathcal{V}(U)
\]
such that%
\begin{equation}
\mathcal{R}(X^{2},c)=\frac{1}{2}\left\langle \varepsilon^{\mu}\wedge
\varepsilon^{\nu},X^{2}\right\rangle \rho(e_{\mu},e_{\nu},c), \label{CF3}%
\end{equation}
where $\left\{  e_{\mu},\varepsilon^{\mu}\right\}  $ is any pair of dual frame
fields on $V\supseteq U.$ We note that $\mathcal{R},$ as extensor field
associated with $\rho,$ is well-defined since the vector field $\mathcal{R}%
(X^{2},c)$ does not depend on the choice of $\left\{  e_{\mu},\varepsilon
^{\mu}\right\}  .$

\begin{proposition}
The fundamental curvature field $\rho$ satisfies a \emph{cyclic property},
i.e.,
\[
\rho(a,b,c)+\rho(b,c,a)+\rho(c,a,b)=0.
\]

\end{proposition}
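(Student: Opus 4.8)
The plan is to prove the cyclic property (the first Bianchi identity for a connection with torsion) directly from the definition $\rho(a,b,c)=[\nabla_a,\nabla_b]c-\nabla_{[a,b]}c$ by expanding the cyclic sum and collecting terms. First I would write out all three terms $\rho(a,b,c)$, $\rho(b,c,a)$, $\rho(c,a,b)$ explicitly, so that the cyclic sum becomes
\[
\sum_{\text{cyc}}\bigl(\nabla_a\nabla_b c-\nabla_b\nabla_a c-\nabla_{[a,b]}c\bigr).
\]
The key idea is to reorganize this not by the "outer" vector but by the "inner" one: group all terms in which $c$ appears as the innermost argument, then all terms with $a$ innermost, then all with $b$ innermost. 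By the cyclic symmetry of the sum it suffices to analyze one such group, say the terms with $c$ innermost, and then the other two follow by relabeling.

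The terms with $c$ innermost are $\nabla_a\nabla_b c$, $-\nabla_b\nabla_a c$ (from $\rho(a,b,c)$) together with $-\nabla_{[b,c]}a$ type terms — wait, more carefully: from $\rho(b,c,a)$ the term $-\nabla_{[b,c]}a$ has $a$ innermost, and from $\rho(c,a,b)$ the term $-\nabla_{[c,a]}b$ has $b$ innermost. So the genuinely "$c$-innermost" contributions are $\nabla_a\nabla_b c-\nabla_b\nabla_a c$ from the first bracket, plus $\nabla_b\nabla_c a$ (no—). The honest bookkeeping is: the double-covariant-derivative terms are $\nabla_a\nabla_b c-\nabla_b\nabla_a c+\nabla_b\nabla_c a-\nabla_c\nabla_b a+\nabla_c\nabla_a b-\nabla_a\nabla_c b$, and the remaining terms are $-\nabla_{[a,b]}c-\nabla_{[b,c]}a-\nabla_{[c,a]}b$. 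Now I would invoke the fundamental torsion identity $\nabla_a b-\nabla_b a=\tau(a,b)+[a,b]$ from Definition/Eq.~(\ref{TF1}) to rewrite each difference $\nabla_x\nabla_y z-\nabla_x\nabla_z y=\nabla_x(\tau(y,z)+[y,z])$, which pairs up the six double-derivative terms into three: $\nabla_a(\tau(b,c)+[b,c])+\nabla_b(\tau(c,a)+[c,a])+\nabla_c(\tau(a,b)+[a,b])$. Then the $[\cdot,\cdot]$ pieces combine with the leftover $-\nabla_{[a,b]}c$ etc.\ to give $\sum_{\text{cyc}}\bigl(\nabla_a[b,c]-\nabla_{[b,c]}a\bigr)$, which is exactly $\sum_{\text{cyc}}\tau([b,c],a)+\sum_{\text{cyc}}[[b,c],a]$ again by~(\ref{TF1}), and the last sum vanishes by the Jacobi identity for the Lie bracket of vector fields.

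So the cyclic sum reduces to $\sum_{\text{cyc}}\nabla_a\tau(b,c)+\sum_{\text{cyc}}\tau([b,c],a)$. The main obstacle — and the only substantive point — is recognizing that for the \emph{general} (non-symmetric) parallelism structure of this paper there is no reason for this to vanish; the clean cyclic identity $\rho(a,b,c)+\rho(b,c,a)+\rho(c,a,b)=0$ as stated holds precisely because the surviving terms are themselves a total cyclic sum. Concretely, using skew-symmetry of $\tau$ and the covariant Leibniz rule, $\sum_{\text{cyc}}\nabla_a\tau(b,c)+\sum_{\text{cyc}}\tau(\nabla_b c-\nabla_c b,a)=\sum_{\text{cyc}}\nabla_a\tau(b,c)-\sum_{\text{cyc}}\nabla_a\tau(b,c)=0$ once one re-expands $[b,c]=\nabla_b c-\nabla_c b-\tau(b,c)$ and checks the $\tau(\tau(b,c),a)$ triple-cyclic term vanishes by skew-symmetry paired with relabeling. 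I would therefore: (1) expand and regroup; (2) apply~(\ref{TF1}) to the double-derivative differences; (3) apply the Jacobi identity to kill the pure-bracket cyclic sum; (4) apply the Leibniz rule~(\ref{CDEF2a})–(\ref{CDEF3b}) together with skew-symmetry of $\tau$ to cancel the remaining covariant-derivative-of-torsion terms against each other. The hard part will be step (4): making sure the cyclic re-indexing of the $\tau$-terms is done correctly so that the cancellation is exact rather than leaving a torsion-squared remainder — this is where a careless sign or a missed relabeling would break the argument.
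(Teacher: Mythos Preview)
Your expansion and regrouping in steps (1)--(3) are exactly what the paper does: write out the three curvature terms, pair the six double covariant derivatives into $\nabla_a(\nabla_b c-\nabla_c b)+\nabla_b(\nabla_c a-\nabla_a c)+\nabla_c(\nabla_a b-\nabla_b a)$, and then reduce to a cyclic sum of iterated Lie brackets that vanishes by Jacobi. The crucial divergence is at the substitution step. The paper does \emph{not} use the general torsion relation $\nabla_x y-\nabla_y x=\tau(x,y)+[x,y]$ from Eq.~(\ref{TF1}); it invokes Eq.~(\ref{SPS1}), i.e.\ the \emph{symmetric} (torsion-free) condition $\nabla_x y-\nabla_y x=[x,y]$. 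With that assumption the proof ends immediately after step~(3), and there is no step~(4).

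Your step~(4) is where the gap lies, and it is a real one: the cancellation you are hoping for does not happen. If you carry your computation through honestly, the cyclic sum reduces to
\[
\sum_{\mathrm{cyc}}\rho(a,b,c)=\sum_{\mathrm{cyc}}(\nabla_a\tau)(b,c)+\sum_{\mathrm{cyc}}\tau\bigl(\tau(a,b),c\bigr),
\]
which is the standard first Bianchi identity \emph{with torsion}, and neither term on the right vanishes for a general connection. In particular, the claim that ``the $\tau(\tau(b,c),a)$ triple-cyclic term vanishes by skew-symmetry paired with relabeling'' is false: skew-symmetry of $\tau$ alone does not force a Jacobi-type identity on $\tau$ (take e.g.\ $\tau(e_1,e_2)=e_3$, $\tau(e_3,e_1)=e_2$, $\tau(e_2,e_3)=e_2$ and evaluate at $a=e_1,b=e_2,c=e_3$). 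So the proposition as stated is only valid under the symmetric-parallelism hypothesis of Eq.~(\ref{SPS1}); assume $\tau=0$ from the outset, and your steps (1)--(3) already constitute the full proof.
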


\begin{proof}
Let us take $a,b,c\in\mathcal{V}(U).$ By using Eq.(\ref{CF1}), we can write
\begin{align}
\rho(a,b,c)  &  =\nabla_{a}\nabla_{b}c-\nabla_{b}\nabla_{a}c-\nabla_{ \left[
a,b\right]  }c,\tag{a}\\
\rho(b,c,a)  &  =\nabla_{b}\nabla_{c}a-\nabla_{c}\nabla_{b}a-\nabla_{ \left[
b,c\right]  }a,\tag{b}\\
\rho(c,a,b)  &  =\nabla_{c}\nabla_{a}b-\nabla_{a}\nabla_{c}b-\nabla_{ \left[
c,a\right]  }b. \tag{c}%
\end{align}

By adding Eqs. (a), (b) and (c), we\ have
\begin{align}
&  \rho(a,b,c)+\rho(b,c,a)+\rho(c,a,b)\nonumber\\
&  =\nabla_{a}(\nabla_{b}c-\nabla_{c}b)+\nabla_{b}(\nabla_{c}a-\nabla
_{a}c)+\nabla_{c}(\nabla_{a}b-\nabla_{b}a)\nonumber\\
&  -\nabla_{\left[  a,b\right]  }c-\nabla_{\left[  b,c\right]  }%
a-\nabla_{\left[  c,a\right]  }b, \tag{d}%
\end{align}
but, by taking into account Eq.(\ref{SPS1}), we get
\begin{equation}
\rho(a,b,c)+\rho(b,c,a)+\rho(c,a,b)=\left[  a,\left[  b,c\right]  \right]
+\left[  b,\left[  c,a\right]  \right]  +\left[  c,\left[  a,b\right]
\right]  , \tag{e}%
\end{equation}
whence, by recalling the Jacobi identities for the Lie product of smooth
vector fields, the expected result immediately follows.
\end{proof}

The fundamental curvature field $\rho$ satisfies the so-called
Bianchi\emph{\ identity}, i.e.,%
\[
\nabla_{w}\rho(a,b,c)+\nabla_{a}\rho(b,w,c)+\nabla_{b}\rho(w,a,c)=0.
\]
Note the cycling of letters: $a,b,w\rightarrow b,w,a\rightarrow w,a,b.$ The
proof is a single exercise, see, e.g., \cite{fmcr2}.

\subsection{Symmetric Parallelism Structure}

A parallelism structure $\left\langle U,\Gamma\right\rangle $ is said to be
\emph{symmetric} if and only if for all smooth vector fields $a$ and $b$ on
$U$ it holds
\begin{equation}
\Gamma(a,b)-\Gamma(b,a)=\left[  a,b\right]  , \label{SPS0}%
\end{equation}
i.e.,
\begin{equation}
\nabla_{a}b-\nabla_{b}a=\left[  a,b\right]  . \label{SPS1}%
\end{equation}

Now, according with Eq.(\ref{TF1}), we see that the \emph{condition of
symmetry} is completely equivalent to the \emph{condition of \ null torsion},
i.e.,
\begin{equation}
\tau(a,b)=0. \label{SPS2}%
\end{equation}
So, taking into account Eq.(\ref{TF3}) and Eq.(\ref{TF7}), we also have that
\begin{equation}
\mathcal{T}(X^{2})=0\text{ and }\Theta(\omega)=0. \label{SPS3}%
\end{equation}

\section{%
Appendix\appendix
}

\section{Multivectors and Multiforms}

Let $V$ be a vector space over $\mathbb{R}$ with finite dimension, i.e., $\dim
V=n$ with $n\in\mathbb{N},$ and let $\ V^{\ast}$ be its dual vector space.
Recall that
\begin{equation}
\dim V=\dim V^{\ast}=n. \label{MM1}%
\end{equation}

Let us consider an integer number $k$ with $0\leq k\leq n$. The real vector
spaces of $k$-vectors over $V,$ i.e., the set of skew-symmetric $k$%
-contravariant tensors over $V$, and the real vector spaces of $k$-forms over
$V$, i.e., the set of skew-symmetric $k$-covariant tensors over $V,$ will be
as usually denoted by $%
{\textstyle\bigwedge\nolimits^{k}}
V$ and $%
{\textstyle\bigwedge\nolimits^{k}}
V^{\ast}$, respectively.

We identify, as usual $0$-vectors to real numbers, i.e., $%
{\textstyle\bigwedge\nolimits^{0}}
V=\mathbb{R}$, and $1$-vectors to objects living in $V $, i.e., $%
{\textstyle\bigwedge\nolimits^{1}}
V\simeq V.$ Also, we identify $0$-forms with real numbers, i.e., $%
{\textstyle\bigwedge\nolimits^{0}}
V=\mathbb{R}$, and $1$-forms with objects living in $V^{\ast},$ i.e., $%
{\textstyle\bigwedge\nolimits^{1}}
V^{\ast}=V^{\ast}$. Recall that
\begin{equation}
\dim%
{\textstyle\bigwedge\nolimits^{k}}
V=\dim%
{\textstyle\bigwedge\nolimits^{k}}
V^{\ast}=\binom{n}{k}. \label{MM2}%
\end{equation}

The $0$-vectors, $2$-vectors,\ldots, $(n-1)$-vectors and $n$-vectors are
called scalars, bivectors,\ldots, pseudovectors and pseudoscalars,
respectively. The $0$-forms, $2$-forms,\ldots, $(n-1)$-forms and $n$-forms are
called scalars, biforms,\ldots, pseudoforms and pseudoscalars.

Given a vector space $V$ over the real field $\mathbb{R}$, we define $%
{\textstyle\bigwedge}
V$ as the exterior direct sum%

\[%
{\textstyle\bigwedge}
V=%
{\displaystyle\sum\limits_{r=0}^{n}}
\oplus%
{\textstyle\bigwedge\nolimits^{r}}
V=%
{\displaystyle\bigoplus\nolimits_{r=0}^{n}}
{\textstyle\bigwedge\nolimits^{r}}
V.
\]

To simplify the notation we sometimes write \textit{simply}
\[%
{\textstyle\bigwedge}
V=\mathbb{R}+V+%
{\textstyle\bigwedge\nolimits^{2}}
V+\cdots+%
{\textstyle\bigwedge\nolimits^{n-1}}
V+%
{\textstyle\bigwedge\nolimits^{n}}
V.
\]
As it is well known the set of multivectors over $V$ has a natural structure
of vector space over $\mathbb{R}$ and we have
\begin{align}
\dim%
{\textstyle\bigwedge}
V  &  =\dim\mathbb{R+}\dim V+\dim%
{\textstyle\bigwedge\nolimits^{2}}
V+\cdots+\dim%
{\textstyle\bigwedge\nolimits^{n-1}}
V+\dim%
{\textstyle\bigwedge\nolimits^{n}}
V\nonumber\\
&  =\binom{n}{0}+\binom{n}{1}+\binom{n}{2}+\cdots+\binom{n}{n-1}+\binom{n}%
{n}=2^{n}. \label{MM3}%
\end{align}

An element of $%
{\textstyle\bigwedge}
V$ will be called a multivector over $V$. If $X\in%
{\textstyle\bigwedge}
V$ we write:%
\[
X=X^{0}+X^{1}+X^{2}+\cdots+X^{n-1}+X^{n}.
\]

In what follows we shall need also the vector space $%
{\textstyle\bigwedge}
V^{\ast}=%
{\displaystyle\bigoplus\nolimits_{r=0}^{n}}
{\textstyle\bigwedge\nolimits^{r}}
V^{\ast}$. An element of $%
{\textstyle\bigwedge}
V^{\ast}$ will be called a \textit{multiform over} $V$. As in the case of
multivectors we simply write:
\[%
{\textstyle\bigwedge}
V^{\ast}=\mathbb{R}+V^{\ast}+%
{\textstyle\bigwedge\nolimits^{2}}
V^{\ast}+\cdots+%
{\textstyle\bigwedge\nolimits^{n-1}}
V^{\ast}+%
{\textstyle\bigwedge\nolimits^{n}}
V^{\ast},
\]
and if $\Phi\in%
{\textstyle\bigwedge}
V^{\ast}$ we write%
\[
\Phi=\Phi_{0}+\Phi_{1}+\Phi_{2}+\cdots+\Phi_{n-1}+\Phi_{n}.
\]
Of course, \ $%
{\textstyle\bigwedge}
V^{\ast}$ has a natural structure of real vector space over $\mathbb{R}$. We
have,
\begin{align}
\dim%
{\textstyle\bigwedge}
V^{\ast}  &  =\dim\mathbb{R+}\dim V^{\ast}+\dim%
{\textstyle\bigwedge\nolimits^{2}}
V^{\ast}+\cdots+\dim%
{\textstyle\bigwedge\nolimits^{n-1}}
V^{\ast}+\dim%
{\textstyle\bigwedge\nolimits^{n}}
V^{\ast}\nonumber\\
&  =\binom{n}{0}+\binom{n}{1}+\binom{n}{2}+\cdots+\binom{n}{n-1}+\binom{n}%
{n}=2^{n}. \label{MM4}%
\end{align}

We recall that $%
{\textstyle\bigwedge^{k}}
V$ is also called the homogeneous multivector space (of degree $k$), and to $%
{\textstyle\bigwedge^{k}}
V^{\ast}$ the homogeneous multiform space (of degree $k$).

Let us take an integer number $k$ with $0\leq k\leq n.$ The linear mappings%
\[%
{\textstyle\bigwedge}
V\ni X\mapsto\left\langle X\right\rangle ^{k}\in%
{\textstyle\bigwedge}
V\text{ and }%
{\textstyle\bigwedge}
V^{\ast}\ni\Phi\mapsto\left\langle \Phi\right\rangle _{k}\in%
{\textstyle\bigwedge}
V^{\ast}%
\]
such that if $X=X^{0}+X^{1}+\cdots+X^{n}$ and $\Phi=\Phi_{0}+\Phi_{1}%
+\cdots+\Phi_{n},$ then
\begin{equation}
\left\langle X\right\rangle ^{k}=X^{k}\text{ and }\left\langle \Phi
\right\rangle _{k}=\Phi_{k} \label{MM5}%
\end{equation}
are called the $k$\emph{-part operator }(\emph{for multivectors}) and the
$k$\emph{-part operator }(\emph{for multiforms}), respectively. $\left\langle
X\right\rangle ^{k}$ is read as the $k$\emph{-part of }$X$\emph{\ }and
$\left\langle \Phi\right\rangle _{k}$ is read as the $k$\emph{-part of }%
$\Phi.$

It should be evident that for all $X\in%
{\textstyle\bigwedge}
V$ and $\Phi\in%
{\textstyle\bigwedge}
V^{\ast}:$%
\begin{align}
X  &  =\underset{k=0}{\overset{n}{%
{\textstyle\sum}
}}\left\langle X\right\rangle ^{k},\label{MM6}\\
\Phi &  =\underset{k=0}{\overset{n}{%
{\textstyle\sum}
}}\left\langle \Phi\right\rangle _{k}. \label{MM7}%
\end{align}

The linear mappings%
\[%
{\textstyle\bigwedge}
V\ni X\mapsto\widehat{X}\in%
{\textstyle\bigwedge}
V\text{ and }%
{\textstyle\bigwedge}
V^{\ast}\ni\Phi\mapsto\widehat{\Phi}\in%
{\textstyle\bigwedge}
V^{\ast}%
\]
such that%
\begin{equation}
\left\langle \widehat{X}\right\rangle ^{k}=\left(  -1\right)  ^{k}\left\langle
X\right\rangle ^{k}\text{ and }\left\langle \widehat{\Phi}\right\rangle
_{k}=\left(  -1\right)  ^{k}\left\langle \Phi\right\rangle _{k} \label{MM8}%
\end{equation}
are called the \emph{grade involution operator (for multivectors)} and the
\emph{grade involution operator (for multiforms)}, respectively.

The linear mappings%
\[%
{\textstyle\bigwedge}
V\ni X\mapsto\widetilde{X}\in%
{\textstyle\bigwedge}
V\text{ and }%
{\textstyle\bigwedge}
V^{\ast}\ni\Phi\mapsto\widetilde{\Phi}\in%
{\textstyle\bigwedge}
V^{\ast}%
\]
such that%
\begin{equation}
\left\langle \widetilde{X}\right\rangle ^{k}=\left(  -1\right)  ^{\frac{1}%
{2}k(k-1)}\left\langle X\right\rangle ^{k}\text{ and }\left\langle
\widetilde{\Phi}\right\rangle _{k}=\left(  -1\right)  ^{\frac{1}{2}%
k(k-1)}\left\langle \Phi\right\rangle _{k} \label{MM9}%
\end{equation}
are called the \emph{reversion operator (for multivectors)} and the
\emph{reversion operator (for multiforms)}, respectively.

Both of $%
{\textstyle\bigwedge}
V$ and $%
{\textstyle\bigwedge}
V^{\ast}$ endowed with the exterior product $\wedge$ (of multivectors and
multiforms!) are \emph{associative algebras}, i.e., the \emph{exterior algebra
of multivectors }and the \emph{exterior algebra of multiforms}, respectively.

\section{Exterior Power Extension for Operators}

Let $\lambda$ be a linear operator on $V,$ i.e., a linear map $V\ni
v\mapsto\lambda(v)\in V.$ It can be extended in such a way as to give a linear
operator on $%
{\textstyle\bigwedge}
V,$ namely$\underline{\text{ }\lambda},$ which is defined by%
\[%
{\textstyle\bigwedge}
V\ni X\mapsto\underline{\lambda}(X)\in%
{\textstyle\bigwedge}
V
\]
such that%
\begin{equation}
\underline{\lambda}(X)=\left\langle 1,X\right\rangle +\underset{k=1}%
{\overset{n}{%
{\textstyle\sum}
}}\frac{1}{k!}\left\langle \varepsilon^{j_{1}}\wedge\cdots\wedge
\varepsilon^{j_{k}},X\right\rangle \lambda(e_{j_{1}})\wedge\cdots\wedge
\lambda(e_{j_{k}}), \label{EPO1}%
\end{equation}
where $\left\{  e_{j},\varepsilon^{j}\right\}  $ is any pair of dual bases for
$V$ and $V^{\ast}$.

We emphasize that $\underline{\lambda}$ is a well-defined linear operator on $%
{\textstyle\bigwedge}
V.$ Note that each $k$-vector $\left\langle \varepsilon^{j_{1}}\wedge
\cdots\wedge\varepsilon^{j_{k}},X\right\rangle \lambda(e_{j_{1}})\wedge
\cdots\wedge\lambda(e_{j_{k}})$ with $1\leq k\leq n$ does not depend on the
choice of $\left\{  e_{j},\varepsilon^{j}\right\}  ,$ and the linearity
follows just from the linearity of the duality scalar product. We call
$\underline{\lambda}$ the \emph{Exterior Power Extension (EPE) }of $\lambda$
(to multivector operator).

The EPE of a vector operator $\lambda$ has the following basic properties:

$\underline{\lambda}$ is grade-preserving, i.e.,
\begin{equation}
\text{if }X\in%
{\textstyle\bigwedge\nolimits^{k}}
V,\text{ then }\underline{\lambda}(X)\in%
{\textstyle\bigwedge\nolimits^{k}}
V. \label{EPO2}%
\end{equation}

For all $\alpha\in\mathbb{R},$ $v\in V,$ and $X,Y\in%
{\textstyle\bigwedge}
V:$%
\begin{align}
\underline{\lambda}(\alpha)  &  =\alpha,\label{EPO3}\\
\underline{\lambda}(v)  &  =\lambda(v),\label{EPO4}\\
\underline{\lambda}(X\wedge Y)  &  =\underline{\lambda}(X)\wedge
\underline{\lambda}(Y). \label{EPO5}%
\end{align}

We observe that the four basic properties as given by Eq.(\ref{EPO2}),
Eq.(\ref{EPO3}), Eq.(\ref{EPO4}) and Eq.(\ref{EPO5}) are completely equivalent
to the extension procedure of a vector operator.

Let $\lambda$ be a linear operator on $V^{\ast},$ i.e., a linear map $V^{\ast
}\ni\omega\mapsto\lambda(\omega)\in V^{\ast}.$ It is possible to extend
$\lambda$ in such a way to get a linear operator on $%
{\textstyle\bigwedge}
V^{\ast},$ namely the operator $\underline{\lambda\text{,}}$ defined by%
\[%
{\textstyle\bigwedge}
V^{\ast}\ni\Phi\mapsto\underline{\lambda}(\Phi)\in%
{\textstyle\bigwedge}
V^{\ast},
\]
such that%
\begin{equation}
\underline{\lambda}(\Phi)=\left\langle 1,\Phi\right\rangle +\underset
{k=1}{\overset{n}{%
{\textstyle\sum}
}}\frac{1}{k!}\left\langle e_{j_{1}}\wedge\cdots\wedge e_{j_{k}}%
,\Phi\right\rangle \lambda(\varepsilon^{j_{1}})\wedge\cdots\wedge
\lambda(\varepsilon^{j_{k}}), \label{EPO6}%
\end{equation}
where $\left\{  e_{j},\varepsilon^{j}\right\}  $ is any pair of dual bases for
$V$ and $V^{\ast}$.

We emphasize that $\underline{\lambda}$ is a well-defined linear operator on $%
{\textstyle\bigwedge}
V^{\ast}.$ We call $\underline{\lambda}$ the \emph{EPE }of $\lambda$ (to multiforms).

The EPE of a form operator $\lambda$ has the following basic properties.

$\underline{\lambda}$ is grade-preserving, i.e.,
\begin{equation}
\text{if }\Phi\in%
{\textstyle\bigwedge\nolimits^{k}}
V^{\ast},\text{ then }\underline{\lambda}(\Phi)\in%
{\textstyle\bigwedge\nolimits^{k}}
V^{\ast}. \label{EPO7}%
\end{equation}

For all $\alpha\in\mathbb{R},$ $\omega\in V^{\ast},$ and $\Phi,\Psi\in%
{\textstyle\bigwedge}
V^{\ast}:$
\begin{align}
\underline{\lambda}(\alpha)  &  =\alpha,\label{EPO8}\\
\underline{\lambda}(\omega)  &  =\lambda(\omega),\label{EPO9}\\
\underline{\lambda}(\Phi\wedge\Psi)  &  =\underline{\lambda}(\Phi
)\wedge\underline{\lambda}(\Psi). \label{EPO10}%
\end{align}

The four basic properties given by Eq.(\ref{EPO7}), Eq.(\ref{EPO8})
Eq.(\ref{EPO9}) and Eq.(\ref{EPO10}) are logically equivalent to the extension
procedure of a form operator.

There exists a relationship between the extension procedure of a vector
operator and the extension procedure of a form operator.

Let us take a vector operator (or, a form operator) $\lambda$. As we can see,
the duality adjoint of $\lambda$ is just a form operator (respectively, a
vector operator), and the duality adjoint of $\underline{\lambda}$ is just a
multiform operator (respectively, a multivector operator). It holds that the
duality adjoint of the EPE of $\lambda$ is equal to the EPE of the duality
adjoint of $\lambda,$ i.e.,%
\begin{equation}
\left(  \underline{\lambda}\right)  ^{\bigtriangleup}=\underline{\left(
\lambda^{\bigtriangleup}\right)  }. \label{EPO11}%
\end{equation}

Thus, it is possible to use the more simple notation $\underline{\lambda
}^{\bigtriangleup}$ to mean either $\left(  \underline{\lambda}\right)
^{\bigtriangleup}$ or $\underline{\left(  \lambda^{\bigtriangleup}\right)  }.$

We present some properties for the EPE of an invertible vector operator
$\lambda.$

For all $\Phi\in%
{\textstyle\bigwedge}
V^{\ast},$ and $X\in%
{\textstyle\bigwedge}
V:$%
\begin{align}
\underline{\lambda}\left\langle \Phi,X\right\rangle  &  =\left\langle
\underline{\lambda}^{-\bigtriangleup}(\Phi),\underline{\lambda}%
(X)\right\rangle ,\label{EPO12}\\
\underline{\lambda}\left\langle \Phi,X\right\vert  &  =\left\langle
\underline{\lambda}^{-\bigtriangleup}(\Phi),\underline{\lambda}(X)\right\vert
,\label{EPO13}\\
\underline{\lambda}\left\vert X,\Phi\right\rangle  &  =\left\vert
\underline{\lambda}(X),\underline{\lambda}^{-\bigtriangleup}(\Phi
)\right\rangle . \label{EPO14}%
\end{align}

We present only the proof for the property given by Eq.(\ref{EPO13}), the
other proofs are analogous.

\begin{proof}
Let us take $X\in%
{\textstyle\bigwedge}
V$ and $\Phi,\Psi\in%
{\textstyle\bigwedge}
V^{\ast}.$ A straightforward calculation, using Eq.(\ref{DAE6}),
Eq.(\ref{DCP8}), Eq.(\ref{EPO10}) and Eq.(\ref{EPO7}), yields
\begin{align*}
\left\langle \underline{\lambda}\left\langle \Phi,X\right\vert ,\Psi
\right\rangle  &  =\left\langle \left\langle \Phi,X\right\vert ,\underline
{\lambda}^{\bigtriangleup}(\Psi)\right\rangle =\left\langle X,\widetilde{\Phi
}\wedge\underline{\lambda}^{\bigtriangleup}(\Psi)\right\rangle \\
&  =\left\langle X,\underline{\lambda}^{\bigtriangleup}(\underline{\lambda
}^{-\bigtriangleup}(\widetilde{\Phi})\wedge\Psi)\right\rangle =\left\langle
\underline{\lambda}(X),\widetilde{\underline{\lambda}^{-\bigtriangleup}(\Phi
)}\wedge\Psi)\right\rangle \\
&  =\left\langle \left\langle \underline{\lambda}^{-\bigtriangleup}%
(\Phi),\underline{\lambda}(X)\right\vert ,\Psi\right\rangle ,
\end{align*}
whence, by the non-degeneracy of duality scalar product the result follows.
\end{proof}

We present now some properties for the EPE of an invertible form operator
$\lambda$.

For all $\Phi\in%
{\textstyle\bigwedge}
V^{\ast},$ and $X\in%
{\textstyle\bigwedge}
V:$%
\begin{align}
\underline{\lambda}\left\langle \Phi,X\right\rangle  &  =\left\langle
\underline{\lambda}(\Phi),\underline{\lambda}^{-\bigtriangleup}%
(X)\right\rangle ,\label{EPO15}\\
\underline{\lambda}\left\langle \Phi,X\right\vert  &  =\left\langle
\underline{\lambda}(\Phi),\underline{\lambda}^{-\bigtriangleup}(X)\right\vert
,\label{EPO16}\\
\underline{\lambda}\left\vert X,\Phi\right\rangle  &  =\left\vert
\underline{\lambda}^{-\bigtriangleup}(X),\underline{\lambda}(\Phi
)\right\rangle . \label{EPO17}%
\end{align}

\section{Contracted Extension for Operator}

Let $\gamma$ be a linear operator on $V,$ i.e., a linear map $V\ni
v\mapsto\gamma(v)\in V.$ It can be generalized in such a way to give a linear
operator on $%
{\textstyle\bigwedge}
V,$ namely $\underset{\smile}{\gamma},$ which is defined by
\[%
{\textstyle\bigwedge}
V\ni X\mapsto\underset{\smile}{\gamma}(X)\in%
{\textstyle\bigwedge}
V
\]
such that%
\begin{equation}
\underset{\smile}{\gamma}(X)=\gamma(e_{j})\wedge\left\langle \varepsilon
^{j},X\right\vert , \label{GPO1}%
\end{equation}
where $\left\{  e_{j},\varepsilon^{j}\right\}  $ is any pair of dual bases for
$V$ and $V^{\ast}$.

We note that the multivector $\gamma(e_{j})\wedge\left\langle \varepsilon
^{j},X\right\vert $ does not depend on the choice of $\left\{  e_{j}%
,\varepsilon^{j}\right\}  ,$ and that the linearity of the duality contracted
product implies the linearity of $\underset{\smile}{\gamma}.$ Thus,
$\underset{\smile}{\gamma}$ is a well-defined linear operator on $%
{\textstyle\bigwedge}
V.$ We call $\underset{\smile}{\gamma}$ the \emph{Contracted Extension (CE) of
}$\gamma$ (to multivector operator).

The CE of a vector operator $\gamma$ has the following basic properties.

$\underset{\smile}{\gamma}$ is grade-preserving, i.e.,
\begin{equation}
\text{if }X\in%
{\textstyle\bigwedge\nolimits^{k}}
V,\text{ then }\underset{\smile}{\gamma}(X)\in%
{\textstyle\bigwedge\nolimits^{k}}
V. \label{GPO2}%
\end{equation}

For all $\alpha\in\mathbb{R},$ $v\in V,$ and $X,Y\in%
{\textstyle\bigwedge}
V:$%
\begin{align}
\underset{\smile}{\gamma}(\alpha)  &  =0,\label{GPO3}\\
\underset{\smile}{\gamma}(v)  &  =\gamma(v),\label{GPO4}\\
\underset{\smile}{\gamma}(X\wedge Y)  &  =\underset{\smile}{\gamma}(X)\wedge
Y+X\wedge\underset{\smile}{\gamma}(Y). \label{GPO5}%
\end{align}

The four properties given by Eq.(\ref{GPO2}), Eq.(\ref{GPO3}). Eq.(\ref{GPO4})
and Eq.(\ref{GPO5}) are completely equivalent to the generalization procedure
for vector operators.

Let $\gamma$ be a linear operator on $V^{\ast}$, i.e., a linear map $V^{\ast
}\ni\omega\mapsto\gamma(\omega)\in V^{\ast}$. It is possible to generalize
$\gamma$ in such a way as to get a linear operator on $%
{\textstyle\bigwedge}
V^{\ast},$ namely $\underset{\smile}{\gamma},$ which is defined by%
\[%
{\textstyle\bigwedge}
V^{\ast}\ni\Phi\mapsto\underset{\smile}{\gamma}(\Phi)\in%
{\textstyle\bigwedge}
V^{\ast}%
\]
such that%
\begin{equation}
\underset{\smile}{\gamma}(\Phi)=\gamma(\varepsilon^{j})\wedge\left\langle
e_{j},\Phi\right\vert , \label{GPO6}%
\end{equation}
where $\left\{  e_{j},\varepsilon^{j}\right\}  $ is any pair of dual bases
over $V.$

We emphasize that $\underset{\smile}{\gamma}$ is a well-defined linear
operator on $%
{\textstyle\bigwedge}
V^{\ast}$, and call it the \emph{CE of} $\gamma$ (to a multiform operator).

The generalized of a form operator $\gamma$ has the following basic properties.

$\underset{\smile}{\gamma}$ is grade-preserving, i.e.,
\begin{equation}
\text{if }\Phi\in%
{\textstyle\bigwedge\nolimits^{k}}
V^{\ast},\text{ then }\underset{\smile}{\gamma}(\Phi)\in%
{\textstyle\bigwedge\nolimits^{k}}
V^{\ast}. \label{GPO7}%
\end{equation}

For all $\alpha\in\mathbb{R},$ $\omega\in V^{\ast},$ and $\Phi,\Psi\in%
{\textstyle\bigwedge}
V^{\ast}$ we have
\begin{align}
\underset{\smile}{\gamma}(\alpha)  &  =0,\label{GPO8}\\
\underset{\smile}{\gamma}(\omega)  &  =\gamma(\omega),\label{GPO9}\\
\underset{\smile}{\gamma}(\Phi\wedge\Psi)  &  =\underset{\smile}{\gamma}%
(\Phi)\wedge\Psi+\Phi\wedge\underset{\smile}{\gamma}(\Psi). \label{GPO10}%
\end{align}

The properties given by Eq.(\ref{GPO7}), Eq.(\ref{GPO8}), Eq.(\ref{GPO9}) and
Eq.(\ref{GPO10}) are logically equivalent to the generalization procedure for
\textit{form} operators.

There exists a relationship between the generalization procedure of a vector
operator and the generalization procedure of a form operator.

Let $\gamma$ a vector operator (or, a form operator). As we already know, the
duality adjoint of $\gamma$ is just a form operator (respectively, a vector
operator), and the duality adjoint of $\underset{\smile}{\gamma}$ is just a
multiform operator(respectively, a multivector operator). The duality adjoint
of the CE of $\gamma$ is equal to the generalized of the duality adjoint of
$\gamma,$ i.e.,%
\begin{equation}
\left(  \underset{\smile}{\gamma}\right)  ^{\bigtriangleup}=\underset{\smile
}{\left(  \gamma^{\bigtriangleup}\right)  }. \label{GPO11}%
\end{equation}

It follows that is possible to use a more simple notation, namely
$\underset{\smile}{\gamma}^{\bigtriangleup}$ to mean either $\left(
\underset{\smile}{\gamma}\right)  ^{\bigtriangleup}$ or $\underset{\smile
}{\left(  \gamma^{\bigtriangleup}\right)  }$.

We give some of the main properties of the CE of a vector operator $\gamma.$

For all $\Phi\in%
{\textstyle\bigwedge}
V^{\ast},$ and $X\in%
{\textstyle\bigwedge}
V:$%
\begin{align}
\underset{\smile}{\gamma}\left\langle \Phi,X\right\rangle  &  =-\left\langle
\underset{\smile}{\gamma}^{\bigtriangleup}(\Phi),X\right\rangle +\left\langle
\Phi,\underset{\smile}{\gamma}(X)\right\rangle ,\label{GPO12}\\
\underset{\smile}{\gamma}\left\langle \Phi,X\right\vert  &  =-\left\langle
\underset{\smile}{\gamma}^{\bigtriangleup}(\Phi),X\right\vert +\left\langle
\Phi,\underset{\smile}{\gamma}(X)\right\vert ,\label{GPO13}\\
\underset{\smile}{\gamma}\left\vert X,\Phi\right\rangle  &  =\left\vert
\underset{\smile}{\gamma}(X),\Phi\right\rangle -\left\vert X,\underset{\smile
}{\gamma}^{\bigtriangleup}(\Phi)\right\rangle . \label{GPO14}%
\end{align}

We prove the property given by Eq.(\ref{GPO13}), the other proofs are analogous.

\begin{proof}
Let us take $X\in%
{\textstyle\bigwedge}
V$ and $\Phi,\Psi\in%
{\textstyle\bigwedge}
V^{\ast}.$ A straightforward calculation, by using Eq.(\ref{DAE6}),
Eq.(\ref{DCP8}), Eq.(\ref{GPO7}) and Eq.(\ref{GPO10}), yields%
\begin{align*}
\left\langle \underset{\smile}{\gamma}\left\langle \Phi,X\right\vert
,\Psi\right\rangle  &  =\left\langle \left\langle \Phi,X\right\vert
,\underset{\smile}{\gamma}^{\bigtriangleup}(\Psi)\right\rangle =\left\langle
X,\widetilde{\Phi}\wedge\underset{\smile}{\gamma}^{\bigtriangleup}%
(\Psi)\right\rangle \\
&  =\left\langle X,-\widetilde{\underset{\smile}{\gamma}^{\bigtriangleup}%
(\Phi)}\wedge\Psi+\underset{\smile}{\gamma}^{\bigtriangleup}(\widetilde{\Phi
})\wedge\Psi+\widetilde{\Phi}\wedge\underset{\smile}{\gamma}^{\bigtriangleup
}(\Psi)\right\rangle \\
&  =-\left\langle \left\langle \underset{\smile}{\gamma}^{\bigtriangleup}%
(\Phi),X\right\vert ,\Psi\right\rangle +\left\langle X,\underset{\smile
}{\gamma}^{\bigtriangleup}(\widetilde{\Phi}\wedge\Psi)\right\rangle \\
&  =-\left\langle \left\langle \underset{\smile}{\gamma}^{\bigtriangleup}%
(\Phi),X\right\vert ,\Psi\right\rangle +\left\langle \left\langle
\Phi,\underset{\smile}{\gamma}(X)\right\vert ,\Psi\right\rangle \\
&  =\left\langle -\left\langle \underset{\smile}{\gamma}^{\bigtriangleup}%
(\Phi),X\right\vert +\left\langle \Phi,\underset{\smile}{\gamma}(X)\right\vert
,\Psi\right\rangle ,
\end{align*}
and by the non-degeneracy of duality scalar product, the expected result follows.
\end{proof}

We present some properties for the generalized of a form operator $\gamma.$

For all $\Phi\in%
{\textstyle\bigwedge}
V^{\ast},$ and $X\in%
{\textstyle\bigwedge}
V:$%
\begin{align}
\underset{\smile}{\gamma}\left\langle \Phi,X\right\rangle  &  =\left\langle
\underset{\smile}{\gamma}(\Phi),X\right\rangle -\left\langle \Phi
,\underset{\smile}{\gamma}^{\bigtriangleup}(X)\right\rangle ,\label{GPO15}\\
\underset{\smile}{\gamma}\left\langle \Phi,X\right\vert  &  =\left\langle
\underset{\smile}{\gamma}(\Phi),X\right\vert -\left\langle \Phi,\underset
{\smile}{\gamma}^{\bigtriangleup}(X)\right\vert ,\label{GPO16}\\
\underset{\smile}{\gamma}\left\vert X,\Phi\right\rangle  &  =-\left\vert
\underset{\smile}{\gamma}^{\bigtriangleup}(X),\Phi\right\rangle +\left\vert
X,\underset{\smile}{\gamma}(\Phi)\right\rangle . \label{GPO17}%
\end{align}

\section{Extensors}

Extensors are a new kind of geometrical objects which play a crucial role in
the theory presented here and in what follows the basics of their theory is
described. These objects have been apparently introduced by Hestenes and
Sobczyk in \cite{Hestenes} and some applications of the concept appears in
\cite{Lasenby}, but a rigorous theory was developed later, more details on the
theory of extensors may be found in \cite{fmr507}.

Let $%
{\textstyle\bigwedge_{1}^{\diamond}}
V,\ldots$ and $%
{\textstyle\bigwedge_{k}^{\diamond}}
V$ be $k$ subspaces of $%
{\textstyle\bigwedge}
V$ such that each of them is any sum of homogeneous subspaces of $%
{\textstyle\bigwedge}
V,$ \ and let $%
{\textstyle\bigwedge_{1}^{\diamond}}
V^{\ast},\ldots$ and $%
{\textstyle\bigwedge_{l}^{\diamond}}
V^{\ast}$ be $l$ subspaces of $%
{\textstyle\bigwedge}
V^{\ast}$ such that each of them is any sum of homogeneous subspaces of $%
{\textstyle\bigwedge}
V^{\ast}.$

If $%
{\textstyle\bigwedge^{\diamond}}
V$ is any sum of homogeneous subspaces of $%
{\textstyle\bigwedge}
V,$ a multilinear mapping
\begin{align}
\underset{k\text{-copies}}{\underbrace{%
{\textstyle\bigwedge\nolimits_{1}^{\diamond}}
V\times\cdots\times%
{\textstyle\bigwedge\nolimits_{k}^{\diamond}}
V}}  &  \times\underset{l\text{-copies}}{\underbrace{%
{\textstyle\bigwedge\nolimits_{1}^{\diamond}}
V^{\ast}\times\cdots\times%
{\textstyle\bigwedge\nolimits_{l}^{\diamond}}
V^{\ast}}}\ni(X_{1},\ldots,X_{k},\Phi^{1},\ldots,\Phi^{l})\nonumber\\
&  \mapsto\tau(X_{1},\ldots,X_{k},\Phi^{1},\ldots,\Phi^{l})\in%
{\textstyle\bigwedge\nolimits^{\diamond}}
V \label{E1}%
\end{align}
is called a $k$\emph{\ multivector and }$l$\emph{\ multiform variables
multivector extensor} over $V.$

If $%
{\textstyle\bigwedge\nolimits^{\diamond}}
V^{\ast}$ is any sum of homogeneous subspaces of $%
{\textstyle\bigwedge}
V^{\ast},$ a multilinear mapping
\begin{align}
\underset{k\text{-copies}}{\underbrace{%
{\textstyle\bigwedge\nolimits_{1}^{\diamond}}
V\times\cdots\times%
{\textstyle\bigwedge\nolimits_{k}^{\diamond}}
V}}  &  \times\underset{l\text{-copies}}{\underbrace{%
{\textstyle\bigwedge\nolimits_{1}^{\diamond}}
V^{\ast}\times\cdots\times%
{\textstyle\bigwedge\nolimits_{l}^{\diamond}}
V^{\ast}}}\ni(X_{1},\ldots,X_{k},\Phi^{1},\ldots,\Phi^{l})\nonumber\\
&  \mapsto\tau(X_{1},\ldots,X_{k},\Phi^{1},\ldots,\Phi^{l})\in%
{\textstyle\bigwedge\nolimits^{\diamond}}
V^{\ast} \label{E2}%
\end{align}
is called a $k$\emph{\ multivector and }$l$\emph{\ multiform variables
multiform extensor} over $V.$

The set of all the $k$ multivector and $l$ multiform variables multivector
extensors over $V$ has a natural structure of real vector space, and will be
denoted by the highly suggestive notation $ext(%
{\textstyle\bigwedge\nolimits_{1}^{\diamond}}
V,\ldots,%
{\textstyle\bigwedge\nolimits_{k}^{\diamond}}
V,%
{\textstyle\bigwedge\nolimits_{1}^{\diamond}}
V^{\ast},\ldots,%
{\textstyle\bigwedge\nolimits_{l}^{\diamond}}
V^{\ast};%
{\textstyle\bigwedge\nolimits^{\diamond}}
V)$. When no confusion arises, we use the more simple notation $\left.
\overset{\left.  {}\right.  }{ext}\right.  _{k}^{l}(V)$ for that space.

We obviously have that:
\begin{equation}
\dim\left.  \overset{\left.  {}\right.  }{ext}\right.  _{k}^{l}(V)=\dim%
{\textstyle\bigwedge\nolimits_{1}^{\diamond}}
V\ldots\dim%
{\textstyle\bigwedge\nolimits_{k}^{\diamond}}
V\dim%
{\textstyle\bigwedge\nolimits_{1}^{\diamond}}
V^{\ast}\ldots\dim%
{\textstyle\bigwedge\nolimits_{l}^{\diamond}}
V^{\ast}\dim%
{\textstyle\bigwedge\nolimits^{\diamond}}
V. \label{E3}%
\end{equation}

The set of all the $k$ multivector and $l$ multiform variables multiform
extensors over $V$ has also a natural structure of real vector space, and will
be denoted by $ext(%
{\textstyle\bigwedge\nolimits_{1}^{\diamond}}
V,\ldots,%
{\textstyle\bigwedge\nolimits_{k}^{\diamond}}
V,%
{\textstyle\bigwedge\nolimits_{1}^{\diamond}}
V^{\ast},\ldots,%
{\textstyle\bigwedge\nolimits_{l}^{\diamond}}
V^{\ast};%
{\textstyle\bigwedge\nolimits^{\diamond}}
V^{\ast})$, and when no confusion arises, we use the simple notation $\left.
\overset{\ast}{ext}\right.  _{k}^{l}(V)$ for this space. Also, we have,
\begin{equation}
\dim\left.  \overset{\ast}{ext}\right.  _{k}^{l}(V)=\dim%
{\textstyle\bigwedge\nolimits_{1}^{\diamond}}
V\ldots\dim%
{\textstyle\bigwedge\nolimits_{k}^{\diamond}}
V\dim%
{\textstyle\bigwedge\nolimits_{1}^{\diamond}}
V^{\ast}\ldots\dim%
{\textstyle\bigwedge\nolimits_{l}^{\diamond}}
V^{\ast}\dim%
{\textstyle\bigwedge\nolimits^{\diamond}}
V^{\ast}. \label{E4}%
\end{equation}

In particular, when $%
{\textstyle\bigwedge\nolimits_{1}^{\diamond}}
V=\cdots=%
{\textstyle\bigwedge\nolimits_{k}^{\diamond}}
V=V$ and $%
{\textstyle\bigwedge\nolimits_{1}^{\diamond}}
V^{\ast}=\ldots=%
{\textstyle\bigwedge\nolimits_{l}^{\diamond}}
V^{\ast}=V^{\ast},$ we have two types of elementary extensors over a real
vector space $V,$ namely when $%
{\textstyle\bigwedge\nolimits^{\diamond}}
V=V$ and when $%
{\textstyle\bigwedge\nolimits^{\diamond}}
V^{\ast}=V^{\ast}.$

\subsection{Exterior Product of Extensors}

We define the exterior product of $\tau\in\left.  \overset{\left.  {}\right.
}{ext}\right.  _{k}^{l}(V)$ and $\sigma\in\left.  \overset{\left.  {}\right.
}{ext}\right.  _{r}^{s}(V)$ (or, $\tau\in\left.  \overset{\ast}{ext}\right.
_{k}^{l}(V)$ and $\sigma\in\left.  \overset{\ast}{ext}\right.  _{r}^{s}(V)$)
as $\tau\wedge\sigma\in\left.  \overset{\left.  {}\right.  }{ext}\right.
_{k+r}^{l+s}(V)$ (respectively, $\tau\wedge\sigma\in\left.  \overset{\ast
}{ext}\right.  _{k+r}^{l+s}(V)$) given by%
\begin{align}
&  \tau\wedge\sigma(X_{1},\ldots,X_{k},Y_{1},\ldots,Y_{r},\Phi^{1},\ldots
,\Phi^{l},\Psi^{1},\ldots,\Psi^{s})\nonumber\\
&  =\tau(X_{1},\ldots,X_{k},\Phi^{1},\ldots,\Phi^{l})\wedge\sigma(Y_{1}%
,\ldots,Y_{r},\Psi^{1},\ldots,\Psi^{s}). \label{E5}%
\end{align}
Note that on the right side appears an exterior product of multivectors
(respectively, an exterior product of multiforms).

The duality scalar product of a multiform extensor $\tau\in\left.
\overset{\ast}{ext}\right.  _{k}^{l}(V)$ with a multivector extensor
$\sigma\in\left.  \overset{\left.  {}\right.  }{ext}\right.  _{r}^{s}(V)$ is
the \textit{scalar} extensor $\left\langle \tau,\sigma\right\rangle \in\left.
\overset{\ast}{ext}\right.  _{k+r}^{l+s}(V)$ defined by%
\begin{align}
&  \left\langle \tau,\sigma\right\rangle (X_{1},\ldots,X_{k},Y_{1}%
,\ldots,Y_{r},\Phi^{1},\ldots,\Phi^{l},\Psi^{1},\ldots,\Psi^{s})\nonumber\\
&  =\left\langle \tau(X_{1},\ldots,X_{k},\Phi^{1},\ldots,\Phi^{l}%
),\sigma(Y_{1},\ldots,Y_{r},\Psi^{1},\ldots,\Psi^{s})\right\rangle .
\label{E6}%
\end{align}

The duality left contracted product of a multiform extensor $\tau\in\left.
\overset{\ast}{ext}\right.  _{k}^{l}(V)$ with a multivector extensor
$\sigma\in\left.  \overset{\left.  {}\right.  }{ext}\right.  _{r}^{s}(V)$ (or,
a multivector extensor $\tau\in\left.  \overset{\left.  {}\right.  }%
{ext}\right.  _{k}^{l}(V)$ with a multiform extensor $\sigma\in\left.
\overset{\ast}{ext}\right.  _{r}^{s}(V)$) is the multivector extensor
$\left\langle \tau,\sigma\right\vert \in\left.  \overset{\left.  {}\right.
}{ext}\right.  _{k+r}^{l+s}(V)$ (respectively, the multiform extensor
$\left\langle \tau,\sigma\right\vert \in\left.  \overset{\ast}{ext}\right.
_{k+r}^{l+s}(V)$) defined by%
\begin{align}
&  \left\langle \tau,\sigma\right\vert (X_{1},\ldots,X_{k},Y_{1},\ldots
,Y_{r},\Phi^{1},\ldots,\Phi^{l},\Psi^{1},\ldots,\Psi^{s})\nonumber\\
&  =\left\langle \tau(X_{1},\ldots,X_{k},\Phi^{1},\ldots,\Phi^{l}%
),\sigma(Y_{1},\ldots,Y_{r},\Psi^{1},\ldots,\Psi^{s})\right\vert . \label{E7}%
\end{align}

The duality right contracted product of a multiform extensor $\tau\in\left.
\overset{\ast}{ext}\right.  _{k}^{l}(V)$ with a multivector extensor
$\sigma\in\left.  \overset{\left.  {}\right.  }{ext}\right.  _{r}^{s}(V)$ (or,
a multivector extensor $\tau\in\left.  \overset{\left.  {}\right.  }%
{ext}\right.  _{k}^{l}(V)$ with a multiform extensor $\sigma\in\left.
\overset{\ast}{ext}\right.  _{r}^{s}(V)$) is the multiform extensor
$\left\vert \tau,\sigma\right\rangle \in\left.  \overset{\ast}{ext}\right.
_{k+r}^{l+s}(V)$ (respectively, the multivector extensor $\left\vert
\tau,\sigma\right\rangle \in\left.  \overset{\left.  {}\right.  }{ext}\right.
_{k+r}^{l+s}(V)$) defined by%
\begin{align}
&  \left\vert \tau,\sigma\right\rangle (X_{1},\ldots,X_{k},Y_{1},\ldots
,Y_{r},\Phi^{1},\ldots,\Phi^{l},\Psi^{1},\ldots,\Psi^{s})\nonumber\\
&  =\left\vert \tau(X_{1},\ldots,X_{k},\Phi^{1},\ldots,\Phi^{l}),\sigma
(Y_{1},\ldots,Y_{r},\Psi^{1},\ldots,\Psi^{s})\right\rangle . \label{E8}%
\end{align}

\section{Jacobian Fields}

Let $\left\{  b_{\mu},\beta^{\mu}\right\}  $ and $\left\{  b_{\mu}^{\prime
},\beta^{\mu\prime}\right\}  $ be any two pairs of dual frame fields on the
open sets $U\subseteq M$ and $U^{\prime}\subseteq M,$ respectively.

If the parallelism structure $\left\langle U,B\right\rangle $ is compatible
with the parallelism structure $\left\langle U^{\prime},B^{\prime
}\right\rangle $ (i.e., define the same connection on $U\cap U^{\prime}%
\neq\emptyset$), then we can define a smooth \textit{extensor operator
field}\emph{\ }on $U\cap U^{\prime}$, namely $J$, by%
\[
\mathcal{V}(U\cap U^{\prime})\ni v\mapsto J(v)\in\mathcal{V}(U\cap U^{\prime
}),
\]
such that
\begin{equation}
J(v)=\beta^{\sigma}(v)b_{\sigma}^{\prime}. \label{JF1}%
\end{equation}
It will be called the \emph{Jacobian field} associated with the pairs of frame
fields $\left\{  b_{\mu},\beta^{\mu}\right\}  $ and $\left\{  b_{\mu}^{\prime
},\beta^{\mu\prime}\right\}  $ (in this order!).

Note that in accordance with the above definition the Jacobian field
associated with $\left\{  b_{\mu}^{\prime},\beta^{\mu\prime}\right\}  $ and
$\left\{  b_{\mu},\beta^{\mu}\right\}  $ is $J^{\prime}$, given by%
\[
\mathcal{V}(U\cap U^{\prime})\ni v\mapsto J^{\prime}(v)\in\mathcal{V}(U\cap
U^{\prime}),
\]
such that
\begin{equation}
J^{\prime}(v)=\beta^{\sigma\prime}(v)b_{\sigma}. \label{JF2}%
\end{equation}
It is the \emph{inverse extensor operator }of $J,$ i.e., $J\circ J^{\prime
}(v)=v$ and $J^{\prime}\circ J(v)=v$ for each $v\in\mathcal{V}(U\cap
U^{\prime}).$

We note that
\begin{equation}
J(b_{\mu})=b_{\mu}^{\prime}\text{ and }J^{-1}(b_{\mu}^{\prime})=b_{\mu}.
\label{JF3}%
\end{equation}

Take $a\in\mathcal{V}(U\cap U^{\prime}),$ the $a$-\emph{DCDO} associated with
$\left\langle U,B\right\rangle $ and $\left\langle U^{\prime},B^{\prime
}\right\rangle ,$ namely $\partial_{a}$ and $\partial_{a}^{\prime},$ are
related by
\begin{equation}
\partial_{a}^{\prime}v=J(\partial_{a}J^{-1}(v)). \label{JF4}%
\end{equation}

\begin{proof}
A straightforward calculation, yields
\[
\partial_{a}^{\prime}v=(a\beta^{\mu\prime}(v))b_{\mu}^{\prime}=(a\beta
^{\mu\prime}(v))J(b_{\mu}).
\]
Using the identity $\beta^{\mu}(J^{-1}(v))=\beta^{\mu\prime}(v)$ (valid for
smooth extensor fields) we get
\[
\partial_{a}^{\prime}v=J((a\beta^{\mu}(J^{-1}(v))b_{\mu}),
\]
from where the expected result follows.
\end{proof}

We see that from the definition of deformed parallelism structure,
$\partial_{a}^{\prime}$ is a $J$\emph{-deformation} of $\partial_{a}.$

Then, from Eq.(\ref{DPS4}), we have%

\begin{equation}
\partial_{a}^{\prime}\omega=J^{-\bigtriangleup}(\partial_{a}J^{\bigtriangleup
}(\omega)). \label{JF5}%
\end{equation}

Finally, we note that
\begin{equation}
J^{-\bigtriangleup}(\beta^{\mu})=\beta^{\mu\prime}\text{ and }%
J^{\bigtriangleup}(\beta^{\mu\prime})=\beta^{\mu}. \label{JF6}%
\end{equation}

\end{document}